\journalname{MFE}
\numberwithin{equation}{section}
\newtheorem{assumption}{Assumption}[section]
\newcommand{\cadlag}{c\`{a}dl\`{a}g }
\newcommand{\ladlag}{l\`{a}dl\`{a}g }
\newcommand{\essinf}{{\mathrm{ess}\inf}}
\begin{document}

\title{Optimal Investment with Random Endowments and Transaction Costs: Duality Theory and Shadow Prices}

\titlerunning{Duality Theory and Shadow Prices}        

\author{Erhan Bayraktar        \and
        Xiang Yu
}

\authorrunning{E. Bayraktar and X. Yu} 

\institute{E. Bayraktar \at
              Department of Mathematics, University of Michigan, 530 Church Street, Ann Arbor, MI 48109, USA \\
              \email{erhan@umich.edu}
           \and
           X. Yu \at
              Department of Applied Mathematics, The Hong Kong Polytechnic University, Hung Hom, Kowloon, Hong Kong\\
              \email{xiang.yu@polyu.edu.hk}
}

\date{Received: date / Accepted: date}

\maketitle

\begin{abstract}
This paper studies the utility maximization on the terminal wealth with random endowments and proportional transaction costs. To deal with unbounded random payoffs from some illiquid claims, we propose to work with the acceptable portfolios defined via the consistent price system (CPS) such that the liquidation value processes stay above some stochastic thresholds. In the market consisting of one riskless bond and one risky asset, we obtain a type of super-hedging result. Based on this characterization of the primal space, the existence and uniqueness of the optimal solution for the utility maximization problem are established using the duality approach. As an important application of the duality theorem, we provide some sufficient conditions for the existence of a shadow price process with random endowments in a generalized form similar to \cite{Chris22} as well as in the usual sense using acceptable portfolios.
\keywords{Proportional Transaction Costs \and Unbounded Random Endowments \and Acceptable Portfolios \and Utility Maximization \and Convex Duality \and Shadow Prices}
\ \\
\textbf{JEL Classification}: {G11,  G13}
\end{abstract}

\section{Introduction}

The optimal investment via utility maximization is a fundamental research topic in quantitative finance. In frictionless markets, the problem with both liquid assets and illiquid contingent claims has recently received much attention and has been significantly developed. It is assumed in the model that the investor receives random payoffs from some contingent claims at the terminal time $T$. In complete markets, random endowments can be perfectly hedged using a dynamic trading portfolio with liquid assets. As a consequence, the optimal investment problem with some payoffs reduces to the one without random endowments, but with an augmented initial wealth. When the market is incomplete, the problem becomes more delicate. In particular, to build the convex duality theorem treating unhedgeable random endowments demands new techniques, especially if endowments are unbounded, see for example \cite{Cv}, \cite{kram04}, \cite{Khasanov2013} and \cite{SLZ}. The references \cite{KarZit03}, \cite{Zit05}, \cite{Mos} and \cite{Yu15} also study this problem when the intermediate consumption is considered.

In the presence of market frictions, the utility maximization problem relies heavily on the definition of \emph{working portfolio processes}. The conventional analysis based on semimartingale properties and stochastic integrals will not work in the general setting with transaction costs. New approaches are therefore required. In a multi-asset model, the self-financing admissible portfolios are defined carefully using the convex solvency cones and strictly consistent price systems (SCPS). The super-hedging theorems are developed under different market assumptions, see among \cite{MAFI:MAFI004}, \cite{Kab}, \cite{MAFI:MAFI180} and \cite{Campi06}. In a simple setting with one bond and one risky asset, the admissible portfolios are defined by requiring that the liquidation value process stays above some constant lower bound, see \cite{Sch2} and \cite{Sch33}. Recently in \cite{Sch33}, an easy-to-apply version of the super-hedging theorem is established under the assumption that the stock price process admits a CPS for arbitrarily small transaction costs.

As an important add-on to the existing literature, this paper aims to study the utility maximization problem under transaction costs together with unbounded random endowments. We note that the optimal investment problem with random endowments has been studied firstly by \cite{campi1}. In order to apply the super-hedging theorem in \cite{Campi06}, however, \cite{campi1} still works with admissible portfolios and their random endowments are assumed to satisfy $\mathcal{E}_T\in\mathbb{L}^{\infty}$ in order to guarantee the existence of the optimal solution. When the boundedness assumption is relaxed, the definition of admissible portfolios becomes no longer suitable and needs to be modified as the constant lower bound will turn out to be  an unnatural constraint. In the frictionless market, a definition of acceptable portfolios is introduced by \cite{DS97} and \cite{kram04} in which some maximal elements in the set of wealth processes can serve as the stochastic thresholds. However, the same choice of the maximal element from the admissible portfolio processes in the context of transaction costs can not be applied as lower bounds, see \cite{jacka2007} for some counterexamples.

Recently, in the Kabanov's multi-asset framework where transaction costs are modeled by matrix-valued processes, a new definition of acceptable portfolios is proposed in \cite{Yu} using convex solvency cones and SCPS. One of the main contributions in this work is to give a definition of acceptable portfolios in a simple setting as in \cite{Sch2} and \cite{Sch33}. In this paper, the self-financing portfolio is called acceptable if liquidation value processes are bounded below by some processes related to all CPS $(\mathbb{Q}, \tilde{S})$. The main idea behind our definition is to choose some maximal elements as stochastic thresholds from the set of wealth processes without transaction costs when the CPS $\tilde{S}$ is taken as the underlying asset. Comparing with the Definition $2.3$ in \cite{Yu} (see also Definition $2.7$ in \cite{Campi06}), it is worth noting that the condition of self-financing portfolio in our framework is more explicit with nice financial interpretations. However, we need to pay the price that it is generally more difficult to prove the super-hedging result as we need to verify a certain limit of a sequence of self-financing processes is still self-financing. With the assistance of convex solvency cones, this convergence result is taken as granted in \cite{Campi06} and \cite{Yu}. Meanwhile, unlike Definition $2.3$ in \cite{Yu} where the stochastic lower bounds are mandated on each portfolio process, we focus on each liquidation value process instead. Some new mathematical challenges arise due to the lack of the supermartingale property as in Lemma $2.8$ of \cite{Campi06} to prove the closedness property of the set of acceptable portfolios. In particular, the backward implication in Proposition $\ref{auxlem}$ of this paper is crucial for us to obtain the closedness result which is not needed in \cite{Campi06} and \cite{Yu}. Given the assumption that the stock price process admits CPS for all small transaction costs, we eventually are able to verify Proposition $\ref{auxlem}$ and thereby establish a super-hedging result using acceptable portfolios. The existence and uniqueness of the optimal solution are consequent on the duality theorem which is built upon the super-hedging result and convex analysis.

This paper also contributes to the application of the duality theorem to the existence of a shadow price process. Roughly speaking, a process $\hat{S}$ is called a shadow price if it evolves inside the bid-ask spread and the optimal frictionless trading in $\hat{S}$ leads to the same utility value function as in the original market under transaction costs and two optimal portfolio processes coincide. As stated in \cite{Chris22}, a candidate shadow price process is defined by $\hat{S}\triangleq \frac{Y^{1,\ast}}{Y^{0,\ast}}$ where $(Y^{0,\ast}, Y^{1,\ast})$ is the minimizer in the duality theorem. If the stock price process $S$ is \cadlag, $\hat{S}$ may not be a semimartingale as it may fail to be \cadlag. To overcome this difficulty, \cite{Chris22} considers a shadow price process $\hat{\mathbf{S}}=(\hat{S}^p, \hat{S})$ defined in a general \textit{sandwiched sense} such that $\hat{S}^p=\frac{Y^{1,\ast,p}}{Y^{0,\ast,p}}$ and $\hat{S}=\frac{Y^{1,\ast}}{Y^{0,\ast}}$ where $((Y^{0,\ast,p}, Y^{1,\ast,p}), (Y^{0,\ast}, Y^{1,\ast}))$ is a \textit{sandwiched strong supermartingale deflator}, see Definition $\ref{sandsuper}$ and Definition $\ref{sanddefla}$. Despite that the shadow price process fails to be \cadlag, the stochastic integrals are still well defined using predictable processes of finite variation as integrands. In \cite{Chris22}, the modified self-financing and admissible portfolio processes can therefore be defined and the verification of the shadow price process can be completed. With unbounded random endowments, the definition of sandwiched shadow price process given \cite{Chris22} can be extended in our setting using the modified acceptable portfolios. To the best of our knowledge, the study of a shadow price process in observing random endowments is new to the literature and we hope to add some interesting perspectives.

The rest of the paper is organized as follows: Section $\ref{section2}$ introduces the market model with transaction costs and the definition of acceptable portfolio processes. The utility maximization problem with unbounded random endowments is formulated in Section $\ref{section3}$. The dual space and the corresponding dual optimization problem are introduced afterwards. The main result of the duality theory is presented at the end. Section $\ref{section4}$ provides some sufficient conditions and establishes the existence of a sandwiched shadow price process consisting of a predictable and an optional strong supermartingales. The existence of a shadow price process in the usual sense is also discussed. Section $\ref{section5}$ contains the proofs of main theorems and all auxiliary results.

\section{Market Model}\label{section2}

We consider the market model which consists with one riskless bond and one risky asset. The riskless bond $B$ is assumed to be constant $1$ which amounts to serve as the num\'{e}raire. The stock price is modeled by a strictly positive and locally bounded adapted \cadlag process $(S_t)_{0\leq t\leq T}$ on some filtered probability space $(\Omega, \mathcal{F}, (\mathcal{F}_t)_{0\leq t\leq T},\mathbb{P})$ satisfying the usual assumptions of right continuity and completeness. The time horizon is given by $T>0$. Moreover, we assume that $\mathcal{F}_0$ is trivial, $\mathcal{F}_T=\mathcal{F}_{T-}$ and $S_T=S_{T-}$. Trading the risky asset incurs transaction costs, that is to say, we can buy the stock at the price $S$ but can only sell it at the price $(1-\lambda)S$. Here, $S$ denotes the ask price, $(1-\lambda)S$ denotes the bid price and $[(1-\lambda)S, S]$ is called the bid-ask spread.

\begin{definition}
For a given price process $S=(S_t)_{0\leq t\leq T}$ and transaction costs $0<\lambda<1$, a $\lambda$-consistent price system ($\lambda$-CPS) is a pair $(\mathbb{Q}, \tilde{S})$ such that $\mathbb{Q}$ is a probability measure equivalent to $\mathbb{P}$, $\tilde{S}=(\tilde{S}_t)_{0\leq t\leq T}$ takes its values in the bid-ask spread $[(1-\lambda)S, S]=([(1-\lambda)S_t, S_t])_{0\leq t\leq T}$ and $\tilde{S}$ is a $\mathbb{Q}$-local martingale.
\end{definition}

Denote $\mathcal{S}(\lambda, S)$ ( short as $\mathcal{S}$) as the set of all $\tilde{S}$ such that $(\mathbb{Q},\tilde{S})$ is a CPS with transaction costs $\lambda$.  For each $\tilde{S}\in\mathcal{S}$, also denote set $\mathcal{M}(\tilde{S})$ as the set of all probability measures $\mathbb{Q}$ such that $(\mathbb{Q}, \tilde{S})$ is a $\lambda$-CPS. Define the set $\mathcal{M}(\lambda, S)$ (short as $\mathcal{M}$) by $\mathcal{M}\triangleq \bigcup_{\tilde{S}\in\mathcal{S}}\mathcal{M}(\tilde{S})$. Notice that each $\tilde{S}$ is a semimartingale under the physical probability measure $\mathbb{P}$. Given the initial wealth $a>0$, denote $\mathcal{X}(\tilde{S}, a)$ as the set of all nonnegative wealth processes in the $\tilde{S}$-market, $\tilde{S}\in\mathcal{S}$. That is
\begin{align*}
\mathcal{X}(\tilde{S}, a)\triangleq \Big\{X\geq 0: &X_t=a+(H\cdot\tilde{S})_t,\ \ \text{where $H$ is predictable}\\
&\text{and $\tilde{S}$-integrable,\ $t\in[0,T]$} \Big\}.
\end{align*}
A wealth process in $\mathcal{X}(\tilde{S}, a)$ is called \textit{maximal}, denoted by $X^{\max, \tilde{S}}$, if its terminal value $X_T^{\max,\tilde{S}}$ can not be dominated by any other processes in $\mathcal{X}(\tilde{S}, a)$.

\begin{assumption}\label{Assum}
For each $0<\lambda'<1$, the price process $S$ admits $\lambda'$-CPS.
\end{assumption}

The trading strategy $\phi=(\phi^0, \phi^1)_{0\leq t\leq T}$ represents the holdings in units of the riskless and the risky asset, respectively, after rebalancing the portfolios at time $t$. $(\phi^0,\phi^1)$ is called \textit{self-financing with transaction costs $\lambda$} (see \cite{Sch2} and \cite{Chris22}) if
\begin{itemize}
\item[(i)] $\phi=(\phi^0,\phi^1)_{0\leq t\leq T}$ is a pair of predictable processes of finite variation.

\item[(ii)] For any process $\phi$ of finite variation, $\phi=x+\phi^{\uparrow}-\phi^{\downarrow}$ represents its Jordan-Hahn decomposition into two non-decreasing processes $\phi^{\uparrow}$ and $\phi^{\downarrow}$ both null at zero. $(\phi^0, \phi^1)$ satisfies the condition
\begin{equation}\label{self}
\int_{s}^{t}d\phi_u^0 \leq -\int_s^t S_ud\phi_u^{1,\uparrow}+\int_s^t(1-\lambda)S_ud\phi_u^{1,\downarrow}
\end{equation}
a.s. for all $0\leq s<t\leq T$, where
\begin{equation}
\int_s^t S_ud\phi_u^{1,\uparrow}\triangleq \int_s^t S_u d\phi_u^{1,\uparrow, c}+\sum_{s<u\leq t}S_{u-}\triangle \phi_{u}^{1,\uparrow}+\sum_{s\leq u<t}S_u\triangle_{+}\phi_u^{1,\uparrow},\nonumber
\end{equation}
and
\begin{align*}
\int_s^t(1-\lambda)S_ud\phi_u^{1,\downarrow}\triangleq &\int_s^t(1-\lambda)S_ud\phi_u^{1,\downarrow,c}\\
&+\sum_{s<u\leq t}(1-\lambda)S_{u-}\triangle \phi_u^{1,\downarrow}+\sum_{s\leq u<t}(1-\lambda)S_u\triangle_+\phi_u^{1,\downarrow}\nonumber
\end{align*}
can be defined as Riemann-Stieltjes integrals since $S$ is \cadlag. Here we define $\triangle \phi_t\triangleq \phi_t-\phi_{t-}$ and $\triangle_+ \phi_t\triangleq \phi_{t+}-\phi_t$.
\end{itemize}

It is worth noting that as $S$ is \cadlag, we need to take care of both left and right jumps of the portfolio process $\phi$. In general, three values $\phi_{\tau-}$, $\phi_{\tau}$ and $\phi_{\tau+}$ may be different. If the stopping time $\tau$ is totally inaccessible, the predictability of $\phi$ implies that $\triangle \phi_{\tau}=0$ almost surely. But if the stopping time $\tau$ is predictable, it may happen that both $\triangle \phi_{\tau}\neq 0$ and $\triangle_+\phi_{\tau}\neq 0$.

Given the initial position $(\phi^0_0,\phi^1_0)=(x,0)$ in the bond and risky asset separately, where $x\in\mathbb{R}$, we define the \textit{liquidation value} at time $t$ by
\begin{equation}
V(\phi)_t\triangleq \phi_t^0+(\phi_t^1)^+(1-\lambda)S_t-(\phi_t^1)^-S_t.\nonumber
\end{equation}

The conventional definition of working portfolios in the existing literature assumes constant thresholds for the liquidation value processes, see \cite{Sch2}:
\begin{definition}\label{admissiblephi}
For an $\mathbb{R}_+$-valued adapted \cadlag process $S=(S_t)_{0\leq t\leq T}$ with transaction costs $0<\lambda<1$, a self-financing trading strategy $\phi$ is called admissible if there exists a constant $a\geq 0$ and for every $[0,T]$-valued stopping time $\tau$,
\begin{equation}
V(\phi)_{\tau}=\phi_{\tau}^0+(\phi_{\tau}^1)^+(1-\lambda)S_{\tau}-(\phi_{\tau}^1)^-S_{\tau}\geq -a, \ \ \text{a.s.}\nonumber
\end{equation}
\end{definition}

From now on, the market is enlarged by allowing trading $N$ European contingent claims at time $t=0$ with final payoff $\mathcal{E}_T=(\mathcal{E}_T^i)_{1\leq i\leq N}$. We denote $q=(q^i)_{1\leq i\leq N}$ as static holdings in contingent claims $\mathcal{E}_T$.
By allowing $q$ to take negative values, without loss of generality, we can only consider the case $\mathcal{E}_T^i\geq 0$ for all $1\leq i\leq N$. Each $\mathcal{E}_T^i$ may be unbounded, but it is assumed that $\sum_{i=1}^{N}\mathcal{E}_T^i$ is integrable uniformly with respect to the set $\mathcal{M}$ in the following sense:
\begin{assumption}\label{assE}
\begin{equation}\label{unic}
\lim_{m\rightarrow\infty}\sup_{\mathbb{Q}\in\mathcal{M}}\mathbb{E}^{\mathbb{Q}}\bigg[ \bigg(\sum_{i=1}^{N}\mathcal{E}_T^i\bigg)\mathbf{1}_{\{\sum_{i=1}^{N}\mathcal{E}_T^i>m\}}\bigg]=0.
\end{equation}
\end{assumption}

Clearly, $(\ref{unic})$ implies the finite super-hedging price $\sup_{\mathbb{Q}\in\mathcal{M}}\mathbb{E}^{\mathbb{Q}}[\sum_{i=1}^{N}\mathcal{E}^i_T]<\infty$ of the payoff $\sum_{i=1}^{N}\mathcal{E}^i_T$. Indeed, similar to the proof of de la Vall\'{e}e-Poussin theorem of uniformly integrability, it is straightforward to verify the following equivalent condition for Assumption $\ref{assE}$.
\begin{lemma}\label{lemmmmm}
Assumption $\ref{assE}$ holds if and only if there exists a Borel test function $\phi(x)$ with $\lim_{x\rightarrow\infty}\frac{\phi(x)}{x}=\infty$ such that
\begin{equation}\label{testineqphi}
\sup_{\mathbb{Q}\in\mathcal{M}}\mathbb{E}^{\mathbb{Q}}[\phi(X)]<\infty,
\end{equation}
where we define $X=\sum_{i=1}^{N}\mathcal{E}_T^i$. If it exists, the function $\phi(x)$ can be chosen in the class of non-decreasing convex functions. In particular, if for some $p>1$, the $p$-th moment of the random endowment $\mathcal{E}_T$ is super-hedgeable under all $\lambda$-CPS, i.e.,
\begin{equation}\label{momentp}
\sup_{\mathbb{Q}\in\mathcal{M}}\mathbb{E}^{\mathbb{Q}}[X^p]<\infty,
\end{equation}
Assumption $\ref{assE}$ is satisfied.
\end{lemma}

For bounded random endowments that $\mathcal{E}_T\in\mathbb{L}^{\infty}$, Assumption $\ref{assE}$ holds trivially. Lemma $\ref{lemmmmm}$ states that it is sufficient to require $q\cdot\mathcal{E}_T\in\mathbb{L}^p(\mathbb{Q})$ for some $p>1$ and all $\mathbb{Q}\in\mathcal{M}$. Assumption $\ref{assE}$ is a mathematical condition that we need later for the proof of the super-hedging result.

The following result holds (see the proof of Lemma $2.1$ in \cite{Yu}).
\begin{lemma}\label{boundE}
Under Assumption $\ref{assE}$, there exists a constant $a>0$ such that for each $\tilde{S}\in\mathcal{S}$, there exits a maximal element $X^{\max,\tilde{S}}\in\mathcal{X}(\tilde{S}, a)$ and $\sum_{i=1}^{N}\mathcal{E}_T^i\leq X_T^{\max,\tilde{S}}$.
\end{lemma}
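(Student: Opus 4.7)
The plan is to let $a$ be a uniform superhedging bound for $\sum_{i=1}^{N}\mathcal{E}_T^i$ across all measures in $\mathcal{M}$, and then, for each $\tilde{S}\in\mathcal{S}$, construct the desired maximal element in the auxiliary frictionless $\tilde{S}$-market. All three ingredients, once in place, combine immediately.

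First, I would extract the finiteness of $a$ from Assumption $\ref{assE}$: picking $m$ so large that $\sup_{\mathbb{Q}\in\mathcal{M}}\mathbb{E}^{\mathbb{Q}}[(\sum_i \mathcal{E}_T^i)\mathbf{1}_{\{\sum_i\mathcal{E}_T^i>m\}}]\leq 1$ yields $\sup_{\mathbb{Q}\in\mathcal{M}}\mathbb{E}^{\mathbb{Q}}[\sum_i\mathcal{E}_T^i]\leq m+1<\infty$, as already noted in the remark following $(\ref{unic})$. Setting $a:=\sup_{\mathbb{Q}\in\mathcal{M}}\mathbb{E}^{\mathbb{Q}}[\sum_i \mathcal{E}_T^i]$ produces a single constant that works uniformly over $\tilde{S}\in\mathcal{S}$, which is exactly the form the statement demands.

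Next, fixing an arbitrary $\tilde{S}\in\mathcal{S}$, the inclusion $\mathcal{M}(\tilde{S})\subseteq\mathcal{M}$ gives $\sup_{\mathbb{Q}\in\mathcal{M}(\tilde{S})}\mathbb{E}^{\mathbb{Q}}[\sum_i\mathcal{E}_T^i]\leq a$. Since $\tilde{S}$ is a $\mathbb{P}$-semimartingale with the nonempty family of equivalent local martingale measures $\mathcal{M}(\tilde{S})$ (nonemptiness being ensured by Assumption $\ref{Assum}$), the classical optional decomposition / superhedging duality theorem in the frictionless $\tilde{S}$-market yields a predictable $\tilde{S}$-integrable process $H$ such that $X:=a+H\cdot\tilde{S}$ is nonnegative on $[0,T]$ and satisfies $X_T\geq\sum_{i=1}^{N}\mathcal{E}_T^i$. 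In particular, $X\in\mathcal{X}(\tilde{S},a)$.

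Finally, to upgrade $X$ to a maximal element I would invoke the standard fact that every nonnegative wealth process in a frictionless semimartingale market is dominated by some maximal one: the family $\{Y\in\mathcal{X}(\tilde{S},a):Y_T\geq X_T\}$ is upward directed and the corresponding set of terminal values is closed in probability, so a routine essential-supremum / exhaustion argument produces $X^{\max,\tilde{S}}\in\mathcal{X}(\tilde{S},a)$ maximal with $X^{\max,\tilde{S}}_T\geq X_T\geq\sum_{i=1}^{N}\mathcal{E}_T^i$. The only subtle ingredient is the applicability of the optional decomposition theorem to the fixed $\tilde{S}$; but this is secured cleanly by Assumption $\ref{Assum}$ together with the semimartingale property of $\tilde{S}$ under $\mathbb{P}$, so no genuinely new obstacle arises here, and the lemma truly functions as a preparatory step that passes the uniform bound from $\mathcal{M}$ down to each individual $\tilde{S}$-market.
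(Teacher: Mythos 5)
Your proof is correct and follows essentially the same route as the paper, which simply defers to the proof of Lemma 2.1 in \cite{Yu}: finiteness of the uniform superhedging bound $a$ from Assumption \ref{assE}, the optional decomposition/superhedging duality in the frictionless $\tilde{S}$-market over $\mathcal{M}(\tilde{S})$ (which is precisely the set of equivalent local martingale measures for $\tilde{S}$), and domination by a maximal element. One small caveat: the family $\{Y\in\mathcal{X}(\tilde{S},a):Y_T\geq X_T\}$ is not upward directed (the pointwise maximum of two terminal values need not be superhedgeable from $a$), so the last step should instead invoke the standard Zorn's-lemma argument based on Fatou-closedness of the terminal values of $a$-admissible wealth processes under NFLVR (Delbaen--Schachermayer), which yields the same conclusion.
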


\begin{assumption}\label{assNo}
For any $q\in\mathbb{R}^N$ such that $q\neq 0$, the random variable $q\cdot\mathcal{E}_T$ is not replicable in the market under $\lambda$-CPS.
\end{assumption}

To deal with unbounded random endowments, the above definition of admissible portfolios is not appropriate. The constant lower bound needs to be relaxed as the stochastic threshold. Following the idea of \cite{Yu}, we shall propose the modified working portfolios as below.
\begin{definition}\label{accptpot}
For an $\mathbb{R}_+$-valued adapted c\`{a}dl\`{a}g process $S=(S_t)_{0\leq t\leq T}$ with transaction costs $0<\lambda<1$, a self-financing trading strategy $\phi$ is called \textbf{acceptable} if there exists a constant $a\geq 0$ and for each $\tilde{S}\in\mathcal{S}$, there exists a maximal element $X^{\max,\tilde{S}}\in\mathcal{X}(\tilde{S}, a)$ such that for every $[0,T]$-valued stopping time $\tau$,
\begin{equation}
V(\phi)_{\tau}=\phi_{\tau}^0+(\phi_{\tau}^1)^+(1-\lambda)S_{\tau}-(\phi_{\tau}^1)^-S_{\tau} \geq -X^{\max,\tilde{S}}_{\tau},\ \ \text{a.s.}\nonumber
\end{equation}
\end{definition}
\begin{remark}
Each admissible portfolio process is acceptable as any given constant $a>0$ is a maximal element in $\mathcal{X}(\tilde{S},a)$. Indeed, for each $\tilde{S}\in\mathcal{S}$, there exists $\mathbb{Q}\sim\mathbb{P}$ such that $\tilde{S}$ is a $\mathbb{Q}$-local martingale. It follows that each $\tilde{S}$ is a semimartingale and satisfies the No Free Lunch with Vanishing Risk condition, see \cite{Sch94} for details. Therefore a contradiction arises if there exists a maximal element in $\mathcal{X}(\tilde{S},a)$ which dominates the constant $a$.
\end{remark}

Denote by $\mathcal{A}_x(\lambda, S)$ (short as $\mathcal{A}_x$) the set of all pairs $(\phi^0, \phi^1)\in\mathbb{L}^0(\mathbb{R}^2)$ of acceptable portfolios with transaction costs $\lambda$ starting at $\phi_{0}=(\phi_{0}^0,\phi_{0}^{1})=(x,0)$. We call $\mathcal{U}_x(\lambda, S)$ (short as $\mathcal{U}_x$) the set of all terminal values of the pair $(\phi^0,\phi^1)\in\mathcal{A}_x$, i.e., $\mathcal{U}_x=\{(\phi^0_T,\phi^1_T): (\phi^0,\phi^1)\in\mathcal{A}_x\}$. Let us also denote $\mathcal{V}_x(\lambda, S)$ (short as $\mathcal{V}_x$) the set of all terminal values of these liquidation value processes such that the position in the stock is liquidated at time $T$, i.e., $\mathcal{V}_x=\{V_T: V_T=\phi_T^0,\ \phi_T^1=0,\ \ (\phi^0,\phi^1)\in\mathcal{A}_x\}$.

Contrary to the admissible portfolios, the definition of acceptable portfolio in our setting seems more difficult to check because it involves all $\tilde{S}\in\mathcal{S}$ and all $0\leq t\leq T$. The following result, however, asserts that it is sufficient to check the terminal time $T$.

\begin{proposition}\label{auxlem}
Fix the c\`{a}dl\`{a}g, adapted process $S$ and transaction costs $0<\lambda<1$ as above and let Assumption $\ref{Assum}$ hold. Fix $\hat{a}>0$ and for each $\tilde{S}\in\mathcal{S}$, pick and fix one $\hat{X}^{\max,\tilde{S}}\in\mathcal{X}(\tilde{S},\hat{a})$. For any $(\phi^0,\phi^1)\in\mathcal{A}_x$ and for each $\tilde{S}\in\mathcal{S}$, if we have
\begin{equation}\label{later}
V(\phi^0,\phi^1)_T=\phi_T^0+(\phi_T^1)^+(1-\lambda)S_T-(\phi_T^1)^-S_T\geq -\hat{X}_T^{\max,\tilde{S}},
\end{equation}
then for every $[0,T]$-valued stopping time $\tau$, we also have
\begin{equation}\label{bef}
V(\phi^0,\phi^1)_{\tau}=\phi_{\tau}^0+(\phi_{\tau}^1)^+(1-\lambda)S_{\tau}-(\phi_{\tau}^{1})^-S_{\tau}\geq -\hat{X}_{\tau}^{\max,\tilde{S}}.
\end{equation}
\end{proposition}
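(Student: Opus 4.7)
Fix $\tilde{S}\in\mathcal{S}$ throughout. The plan is to transfer the hypothesis from $T$ down to an arbitrary stopping time $\tau$ by means of a supermartingale argument in the frictionless $\tilde{S}$-market, using the ``mark-to-market'' process $V^{\tilde{S}}_t:=\phi^{0}_t+\phi^{1}_t\tilde{S}_t$ as an intermediary, and then to close the remaining slack between $V^{\tilde{S}}_\tau$ and $V(\phi)_\tau$ by invoking Assumption~\ref{Assum}.

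First I would use the maximality of $\hat{X}^{\max,\tilde{S}}$ in $\mathcal{X}(\tilde{S},\hat{a})$ together with the Delbaen--Schachermayer characterization to select a pricing measure $\mathbb{Q}^{*}\in\mathcal{M}(\tilde{S})$ under which $\hat{X}^{\max,\tilde{S}}$ is a true martingale. Integration by parts combined with the self-financing inequality \eqref{self} and the sandwiching $(1-\lambda)S\leq\tilde{S}\leq S$ then produces a representation
\[ V^{\tilde{S}}_t \;=\; x + \int_{0}^{t}\phi^{1}_{u-}\,d\tilde{S}_u - A_t, \]
in which $A$ is adapted, non-decreasing and null at zero, so $V^{\tilde{S}}$ is a $\mathbb{Q}^{*}$-local supermartingale. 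Acceptability of $(\phi^{0},\phi^{1})$ combined with the pointwise inequality $V^{\tilde{S}}\geq V(\phi)$ supplies an a.s.\ lower bound $V^{\tilde{S}}\geq-X^{\max,\tilde{S}}$ with $X^{\max,\tilde{S}}\in\mathcal{X}(\tilde{S},a)$; since $X^{\max,\tilde{S}}$ is $\mathbb{Q}^{*}$-integrable as a non-negative $\mathbb{Q}^{*}$-supermartingale, Fatou upgrades $V^{\tilde{S}}$ to a genuine $\mathbb{Q}^{*}$-supermartingale. Adding the $\mathbb{Q}^{*}$-martingale $\hat{X}^{\max,\tilde{S}}$ preserves this property.

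The hypothesis together with $V^{\tilde{S}}_T\geq V(\phi)_T$ then gives $V^{\tilde{S}}_T+\hat{X}^{\max,\tilde{S}}_T\geq 0$, and supermartingale optional stopping yields
\[ V^{\tilde{S}}_\tau+\hat{X}^{\max,\tilde{S}}_\tau \;\geq\;\mathbb{E}^{\mathbb{Q}^{*}}\bigl[V^{\tilde{S}}_T+\hat{X}^{\max,\tilde{S}}_T\,\big|\,\mathcal{F}_\tau\bigr]\;\geq\; 0. \]
On $\{\phi^{1}_\tau=0\}$ one has $V(\phi)_\tau=V^{\tilde{S}}_\tau$ and \eqref{bef} is immediate; on $\{\phi^{1}_\tau\neq 0\}$, however, the slack
\[ V^{\tilde{S}}_\tau - V(\phi)_\tau \;=\; (\phi^{1}_\tau)^{+}\bigl[\tilde{S}_\tau-(1-\lambda)S_\tau\bigr]+(\phi^{1}_\tau)^{-}\bigl[S_\tau-\tilde{S}_\tau\bigr] \]
is generically strictly positive, so a further argument is needed.

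The hard part will be this final step, and it is exactly where Assumption~\ref{Assum} becomes indispensable: in its absence the supermartingale argument above only produces the weaker bound on $V^{\tilde{S}}_\tau$, not on $V(\phi)_\tau$. The idea is to exploit the availability of $\lambda'$-CPS for every $\lambda'\in(0,1)$ to construct, given $\tau$, a sequence $\tilde{S}^{(n)}\in\mathcal{S}$ whose values at $\tau$ approach the bid $(1-\lambda)S_\tau$ on $\{\phi^{1}_\tau>0\}$ and the ask $S_\tau$ on $\{\phi^{1}_\tau<0\}$; rerunning the supermartingale argument along this sequence and passing carefully to the limit---while reconciling the approximating maximal elements and pricing measures with the fixed $\hat{X}^{\max,\tilde{S}}$---collapses the slack and produces the bound $V(\phi)_\tau\geq-\hat{X}^{\max,\tilde{S}}_\tau$. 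Engineering the approximation so that both sides of the supermartingale inequality behave consistently in the limit is the principal technical challenge, and is precisely the point where the present setting departs from the constant-threshold admissibility framework of \cite{Sch2} and \cite{Sch33}.
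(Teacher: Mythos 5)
Your opening supermartingale argument is sound and matches the first half of the paper's reasoning, and you have correctly located the real difficulty: optional stopping controls the frictionless value $V^{\tilde S}_\tau=\phi^0_\tau+\phi^1_\tau\tilde S_\tau$, not the liquidation value $V(\phi)_\tau$, and the slack $(\phi^1_\tau)^+[\tilde S_\tau-(1-\lambda)S_\tau]+(\phi^1_\tau)^-[S_\tau-\tilde S_\tau]$ must be eliminated using Assumption~\ref{Assum}. But that elimination is the entire content of the proposition, and you leave it as a sketch (``the principal technical challenge''), so what you have is a plan rather than a proof. Two concrete obstacles in your plan remain unresolved. First, the hypothesis \eqref{later} controls only $\hat X^{\max,\tilde S}_T$, whereas the conclusion \eqref{bef} concerns $\hat X^{\max,\tilde S}_\tau$; when you re-run the supermartingale argument under a different CPS $\tilde S^{(n)}$ with a different pricing measure, the fixed process $\hat X^{\max,\tilde S}$ is in general only a supermartingale under that new measure, so a bound at time $T$ does not transfer back to time $\tau$ for free. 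Second, a sequence $\tilde S^{(n)}\in\mathcal S$ whose value at $\tau$ approaches the bid on $\{\phi^1_\tau>0\}$ and the ask on $\{\phi^1_\tau<0\}$, while each $\tilde S^{(n)}$ remains a local martingale under some equivalent measure, is not obviously constructible, and you would additionally need to control the convergence of the associated maximal elements and conditional expectations.

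The paper avoids both problems by arguing by contradiction with a fixed quantitative margin. If \eqref{bef} fails, there exist $0<\alpha<\lambda/2$ and a stopping time $\tau$ such that one of the sets $A_+=\{\phi^1_\tau\ge 0,\ \phi^0_\tau+\phi^1_\tau\tfrac{1-\lambda}{1-\alpha}S_\tau<-\hat X^{\max,\tilde S}_\tau\}$ and $A_-=\{\phi^1_\tau\le 0,\ \phi^0_\tau+\phi^1_\tau(1-\alpha^2)S_\tau<-\hat X^{\max,\tilde S}_\tau\}$ has positive probability; these are then intersected with $\{\hat X^{\max,\tilde S}_\tau\ge\hat X^{\max,\tilde S}_T\}$, a set of positive probability because $\hat X^{\max,\tilde S}$ is a $\mathbb Q$-supermartingale for $\mathbb Q\in\mathcal M(\tilde S)$. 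This intersection is precisely the device that converts the comparison against $\hat X^{\max,\tilde S}_\tau$ into a pointwise comparison against $\hat X^{\max,\tilde S}_T$ on the conditioning set, sidestepping your first obstacle. Then a single $\lambda'$-CPS $\bar S$ with $\lambda'<\alpha$, supplied by Assumption~\ref{Assum}, is rescaled to $(1-\alpha)\bar S$ and $\tfrac{1-\lambda}{1-\alpha}\bar S$, both of which still lie in the $\lambda$-spread and land within the margin $\alpha$ of the ask and the bid uniformly; the optional strong supermartingale property of $\phi^0+\phi^1(1-\alpha)\bar S+X^{\max,\bar S}$ (resp.\ with $\tfrac{1-\lambda}{1-\alpha}\bar S$), conditioned on the sets above, then yields $\mathbb E^{\mathbb Q}[V(\phi)_T\mid B_\mp]<\mathbb E^{\mathbb Q}[-\hat X^{\max,\tilde S}_T\mid B_\mp]$, contradicting \eqref{later}. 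No limiting procedure over a sequence of consistent price systems is needed; to complete your outline you should replace the proposed limit by this fixed-$\alpha$ contradiction scheme.
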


Proposition $\ref{auxlem}$ provides a convenient way to check the definition of acceptable portfolios. If there exists a random variable $B$ which satisfies $\sup_{\mathbb{Q}\in\mathcal{M}}\mathbb{E}^{\mathbb{Q}}[B]<\infty$ and $V(\phi^0, \phi^1)_T\geq -B$, Proposition $\ref{auxlem}$ together with Lemma $\ref{boundE}$ imply that the self-financing portfolio $(\phi^0,\phi^1)$ is acceptable. More importantly, the backward implication in Proposition $\ref{auxlem}$ can replace the super-martingale property in the later proof of the super-hedging theorem.

\section{Utility Maximization with Unbounded Random Endowments}\label{section3}

We first introduce the primal set of acceptable portfolio processes with the initial wealth $x\in\mathbb{R}$ whose terminal liquidation value dominates the payoff $-q\cdot\mathcal{E}_T$ by
\begin{equation}\label{primeH}
\mathcal{H}(x,q)\triangleq \{ V_T: V_T+q\cdot\mathcal{E}_T \geq 0,\ \ V\in\mathcal{V}_x\}.
\end{equation}
The effective domain is defined by $\mathcal{K}\triangleq \text{int}\left\{ (x,q)\in\mathbb{R}^{1+N}:\mathcal{H}(x,q)\neq \emptyset\right\}$.

The agent's preference is represented by a utility function $U:(0,\infty)\rightarrow\mathbb{R}$, which is assumed to be strictly increasing, strictly concave and continuously differentiable. It is assumed that the utility function satisfies the Inada conditions $U'(0)\triangleq\lim_{x\rightarrow 0}U'(x)=\infty$ and $U'(\infty)\triangleq \lim_{x\rightarrow\infty}U'(x)=0$. Moreover, we make the assumption on the asymptotic elasticity of the utility function
\begin{equation}\label{asym}
AE(U)\triangleq \underset{x\rightarrow\infty}{\lim\sup}\frac{xU'(x)}{U(x)}<1.
\end{equation}
The convex conjugate of $U(x)$ is defined by $\tilde{U}(y)\triangleq \sup_{x>0}\Big(U(x)-xy\Big)$, $y>0$.

Given $(x,q)\in\mathcal{K}$, the agent is to maximize the expected utility defined on the terminal wealth consisting of the terminal liquidation value and the final payoff from the contingent claims. The \textbf{primal utility optimization problem} is defined by
\begin{equation}\label{primeu}
u(x,q)\triangleq \sup_{V_T\in\mathcal{H}(x,q)}\mathbb{E}[U(V_T+q\cdot\mathcal{E}_T)],\ \ \ (x,q)\in\mathcal{K}.
\end{equation}

Let $\mathcal{C}(x,q)$ be the solid hull of the primal set $\mathcal{H}(x,q)$
\begin{equation}\label{primsp}
\mathcal{C}(x,q)\triangleq \{g\in\mathbb{L}_+^0: g\leq V_T+q\cdot\mathcal{E}_T,\ \ V_T\in\mathcal{H}(x,q)\},\ \ (x,q)\in\mathcal{K}.
\end{equation}
The monotonicity of $U(x)$ implies that $u(x,q)=\sup_{g\in\mathcal{C}(x,q)}\mathbb{E}[U(g)]$, $(x,q)\in\mathcal{K}$.

Following \cite{kram04}, we consider the \textit{relative interior} of the polar cone of $-\mathcal{K}$ defined by
\begin{equation}
\mathcal{L}\triangleq \text{ri}\{(y,r)\in\mathbb{R}^{1+N}: xy+q\cdot r\geq 0\ \ \text{for all}\ (x,q)\in\mathcal{K}\}.\nonumber
\end{equation}

Denote $\mathcal{B}$ as the set of density processes of $\lambda$-CPS that
\begin{align*}
\mathcal{B}\triangleq  \bigg\{(Z^0, Z^1)\geq 0: &Z_t^0=\mathbb{E}\Big[\frac{d\mathbb{Q}}{d\mathbb{P}}\Big|\mathcal{F}_t\Big],\ \ \text{and}\ \ Z_t^1=\tilde{S}_tZ_t^0,\\
&\text{where}\ \mathbb{Q}\in\mathcal{M}(\tilde{S}),\ \text{for each}\ \tilde{S}\in\mathcal{S} \bigg\}.
\end{align*}

In general, the set $\mathcal{B}$ lacks the closedness property and a proper enlargement is needed for it to serve as a dual set of $\mathcal{C}(x,q)$.
\begin{definition}
Starting with a strictly positive initial position $(\phi^0_0,\phi^1_0)=(x,0)$ where $ x>0$, the admissible portfolio $(\phi^0,\phi^1)$ is called $0$-admissible if for every $[0,T]$-valued stopping time $\tau$, the liquidation value process satisfies $V(\phi)_{\tau}\geq 0$, a.s. Given $x>0$, we shall denote the set of all $0$-admissible portfolio by $\mathcal{A}_x^{\text{adm}}$ and the set of all terminal values of the $0$-admissible portfolio by $\mathcal{U}_x^{\text{adm}}$, i.e.,
\begin{equation}\label{0admU}
\mathcal{U}_x^{\text{adm}}=\{(\phi_T^0,\phi^1_T)\in\mathbb{L}^0(\mathbb{R}^2): (\phi^0,\phi^1)\in\mathcal{A}_x^{\text{adm}}\},\ \ x>0.
\end{equation}
We also denote $\mathcal{V}_x^{\text{adm}}$ as the set of the terminal value of all $0$-admissible liquidation value processes with initial position $(x,0)$ such that the position in the stock is liquidated at $t=T$, i.e.,
\begin{equation}\label{0admV}
\mathcal{V}_x^{\text{adm}}=\{V_T\in\mathbb{L}^0_+(\mathbb{R}):  \exists (\phi_T^0,\phi_T^1)\in\mathcal{U}_x^{\text{adm}}\ \text{such that}\ \phi_T^0=V_T,\ \phi_T^1=0\}.
\end{equation}
\end{definition}

As it is assumed that $S=(S_t)_{0\leq t\leq T}$ is c\`{a}dl\`{a}g, all self-financing portfolio processes $(\phi^0_t, \phi_t^1)_{0\leq t\leq T}$ need to be predictable of finite variation and can have both left and right jumps in order to obtain that $\mathcal{U}_x^{\text{adm}}$ is closed under convergence in probability, see \cite{Campi06} and \cite{Sch33} for details. To retain supermartingale properties, a new limit is required to replace Fatou's limit; see \cite{Chris} and \cite{Chris22}. The convergence in probability at all finite stopping times and the concept of \textit{optional strong supermartingales} seem to be tailor-made to analyze problems with transaction costs. The following definition in \cite{Chris} plays an important role in the definition of the dual set.

\begin{definition}
An optional process $X=(X_t)_{0\leq t\leq T}$ is called an optional strong supermartingale if, for all stopping times $0\leq \sigma\leq \tau\leq T$, we have
\begin{equation}
\mathbb{E}[X_{\tau}|\mathcal{F}_{\sigma}]\leq X_{\sigma},\nonumber
\end{equation}
where we impose that $X_{\tau}$ is integrable for any $[0,T]$-valued stopping time $\tau$.
\end{definition}

We shall enlarge the dual set using the optional strong supermartingales. For $y>0$,
\begin{align}\label{deflator}
\mathcal{Z}(y)\triangleq \bigg\{&(Y^0, Y^1)\ \text{are nonnegative optional strong supermartingales}:\nonumber\\
&Y_0^0=y, \frac{Y^1}{Y^0}\in[(1-\lambda)S, S], \phi^0Y^0+\phi^1Y^1\ \text{is a non-negative}\nonumber\\
&\text{optional strong supermartingale},\ \forall(\phi^0,\phi^1)\in\mathcal{A}_1^{\text{adm}}\bigg\},
\end{align}
and
\begin{equation}\label{dualY}
\mathcal{Y}(y)\triangleq \{Y_T\in\mathbb{L}_+^0(\mathbb{R}): \exists (Y^0, Y^1)\in\mathcal{Z}(y)\ \text{with}\ Y_T=Y_T^0\},\ \ y>0.
\end{equation}
Due to Proposition $1.6$ in \cite{Sch2}, we have that $y\mathcal{B}\subset\mathcal{Z}(y)$.

Given $(y,r)\in\mathcal{L}$, we are interested in the subset
\begin{equation}\label{dualsp}
\mathcal{Y}(y,r)\triangleq \{Y_T\in\mathcal{Y}(y): \mathbb{E}[Y_T(V_T+q\cdot\mathcal{E}_T)]\leq xy+q\cdot r,\ \ V_T\in\mathcal{H}(x,q),\ \ (x,q)\in\mathcal{K}\},
\end{equation}
which is the proposed dual set to work on as random endowments can be hidden by its definition.

Let the abstract set $\mathcal{D}(y,r)$ be the solid hull of $\mathcal{Y}(y,r)$,
\begin{equation}
\mathcal{D}(y,r)=\{h\in\mathbb{L}_+^0(\mathbb{R}^2): h\leq Y_T,\ \ Y_T\in\mathcal{Y}(y,r)\},\ \ (y,r)\in\mathcal{L}.\nonumber
\end{equation}

We can then define the corresponding \textbf{dual optimization problem} to problem $(\ref{primeu})$ by
\begin{equation}\label{dualv}
v(y,r)\triangleq  \inf_{Y_T\in\mathcal{Y}(y,r)}\mathbb{E}[\tilde{U}(Y_T)]=\inf_{h\in\mathcal{D}(y,r)}\mathbb{E}[\tilde{U}(h)],\ \ (y,r)\in\mathcal{L}.
\end{equation}

The following duality theorem provides the existence and uniqueness of the optimal solution to the utility maximization problem $(\ref{primeu})$.
\begin{theorem}\label{mainthm}
Let Assumptions $\ref{Assum}$, $\ref{assE}$ and $\ref{assNo}$ and condition $(\ref{asym})$ hold. Furthermore, we assume that $u(x,q)<\infty$ for some $(x,q)\in\mathcal{K}$. Then we have
\begin{itemize}
\item[(i)] The function $u$ is finitely valued on $\mathcal{K}$ and the function $v$ is finitely valued on $\mathcal{L}$. The value functions $u$ and $v$ are conjugate
\begin{equation}
\begin{split}
u(x,q)=\inf_{(y,r)\in\mathcal{L}}\Big( v(y,r)+xy+q\cdot r\Big),\ \ \ (x,q)\in\mathcal{K},\\
v(y,r)=\sup_{(x,q)\in\mathcal{K}}\Big(u(x,q)-xy-q\cdot r \Big),\ \ \ (y,r)\in\mathcal{L}.\nonumber
\end{split}
\end{equation}

\item[(ii)] The optimal solution $Y_T^{\ast}(y,r)$ to $(\ref{dualv})$ exists and is unique for all $(y,r)\in\mathcal{L}$.

\item[(iii)] The optimal solution $V_T^{\ast}(x,q)$ to $(\ref{primeu})$ exists and is unique for all $(x,q)\in\mathcal{K}$.

\item[(iv)] There are $(\phi^{0,\ast}, \phi^{1,\ast})\in\mathcal{A}_x$ and $(Y^{0,\ast}, Y^{1,\ast})\in\mathcal{Z}(y)$ such that
\begin{equation}
V(\phi^{0,\ast}, \phi^{1,\ast})_T=V_T^{\ast}(x,q),\ \ \text{and}\ \ \ Y_T^{0,\ast}=Y_T^{\ast}(y,r).\nonumber
\end{equation}

\item[(v)] The super-differential of $u$ maps $\mathcal{K}$ into $\mathcal{L}$, i.e.,
\begin{equation}
\partial u(x,q)\subset \mathcal{L},\ \ (x,q)\in\mathcal{K}.\nonumber
\end{equation}
\item[(vi)] If $(y,r)\in\partial u(x,q)$, the optimal solutions are related by
\begin{equation}\label{ok1}
\begin{split}
Y^{\ast}_T(y,r)&=U'(V^{\ast}_T(x,q)+q\cdot \mathcal{E}_T),\\
\mathbb{E}[Y^{\ast}_T(y,r)(V_T^{\ast}(x,q)+q\cdot\mathcal{E}_T)]&=xy+q\cdot r.
\end{split}
\end{equation}
\end{itemize}
\end{theorem}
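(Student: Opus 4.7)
The plan is to reduce Theorem \ref{mainthm} to an abstract bipolar/duality result in the spirit of Kramkov and Schachermayer, applied to the pair of sets $\mathcal{C}(x,q)$ and $\mathcal{D}(y,r)$. The first and most substantive step is the super-hedging/bipolar theorem: I would prove that for every $(x,q)\in\mathcal{K}$ and $(y,r)\in\mathcal{L}$,
\begin{equation}
g\in\mathcal{C}(x,q)\iff \mathbb{E}[gh]\leq xy+q\cdot r\ \text{for all}\ h\in\mathcal{D}(y,r),\ (y,r)\in\mathcal{L},\nonumber
\end{equation}
together with the symmetric characterization of $\mathcal{D}(y,r)$. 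The easy inclusion is the super-martingale property of $\phi^0 Y^0+\phi^1 Y^1$ from the definition of $\mathcal{Z}(y)$, combined with Lemma \ref{boundE} and Assumption \ref{assE} which allow me to handle the payoff $q\cdot\mathcal{E}_T$. The non-trivial direction requires closedness of $\mathcal{C}(x,q)$ in $\mathbb{L}_+^0$; here the auxiliary Proposition \ref{auxlem} is essential, because it lets me reduce the acceptability constraint to a single terminal inequality, and then apply the closedness techniques of Campi-Schachermayer and Czichowsky-Schachermayer (convergence in probability at finite stopping times, extraction of a self-financing càdlàg limit via Komlós-type arguments). A Hahn-Banach separation then gives the bipolar relation.

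Next, I would check the standing hypotheses needed for the abstract duality theorem: $\mathcal{C}(x,q)$ and $\mathcal{D}(y,r)$ are convex, solid, closed in $\mathbb{L}^0_+$, bounded in $\mathbb{L}^0_+$, and contain a strictly positive element (the latter using Assumption \ref{Assum} and Lemma \ref{boundE} on the primal side, and the existence of $\lambda$-CPS on the dual side). The parametrization by $(x,q)$ and $(y,r)$ is bilinear and the effective domains $\mathcal{K},\mathcal{L}$ are open in their relative interiors by construction, mirroring the framework of \cite{kram04}. With the bipolar relation and these properties in hand, (i), (ii), (iii), (v), (vi) follow by a direct translation of the Kramkov-Schachermayer duality machinery: finiteness of $u$ at one point of $\mathcal{K}$ together with the asymptotic elasticity condition (\ref{asym}) propagates finiteness of $v$ on $\mathcal{L}$; the conjugacy, existence and uniqueness of $V_T^{\ast}(x,q)$ and $Y_T^{\ast}(y,r)$, the inclusion $\partial u(x,q)\subset\mathcal{L}$, and the first-order condition $Y_T^{\ast}=U'(V_T^{\ast}+q\cdot\mathcal{E}_T)$ with the budget equality are all consequences of strict concavity/convexity of $U,\tilde U$ and the bipolar theorem.

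Part (iv) requires attainability by a genuine acceptable portfolio on the primal side and by a strong supermartingale deflator on the dual side. For the primal assertion, I would start from the optimal terminal value $V_T^{\ast}(x,q)\in\mathcal{V}_x$ (using that $\mathcal{V}_x$ is closed and that the maximizer lies on the boundary, so the candidate $g$ is of the form $V_T+q\cdot\mathcal{E}_T$ with $V_T\in\mathcal{V}_x$) and select a corresponding self-financing $(\phi^{0,\ast},\phi^{1,\ast})\in\mathcal{A}_x$ with the liquidated position at $T$ equal to $V_T^{\ast}$. For the dual assertion, I would pick a minimizing sequence $(Y^{0,n},Y^{1,n})\in\mathcal{Z}(y)$ and extract a limit via the optional strong supermartingale convergence theorem of Czichowsky-Schachermayer; the limit $(Y^{0,\ast},Y^{1,\ast})$ lies in $\mathcal{Z}(y)$ because the defining supermartingale inequalities and the bid-ask constraint on $Y^1/Y^0$ are preserved under this convergence, and $Y_T^{0,\ast}=Y_T^{\ast}(y,r)$ by uniqueness of the terminal optimizer.

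The main obstacle, as flagged in the introduction, is the bipolar/closedness step: verifying that a Komlós-type limit of acceptable portfolio processes remains self-financing and acceptable, without the benefit of solvency-cone supermartingale arguments available in \cite{Campi06} and \cite{Yu}. Here Proposition \ref{auxlem} is the replacement tool, reducing closedness on the whole time interval to the terminal condition, which is compatible with $\mathbb{L}^0$-limits. Once this closedness is in hand, everything else is a careful, but by now standard, convex-duality bookkeeping.
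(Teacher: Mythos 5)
Your proposal is correct and follows essentially the same route as the paper: establish the bipolar relation between $\mathcal{C}(x,q)$ and $\mathcal{D}(y,r)$ via a super-hedging theorem whose closedness step is powered by Proposition \ref{auxlem}, and then invoke the Kramkov--Schachermayer duality machinery of \cite{kram04}, with the optional strong supermartingale convergence of \cite{Chris22} supplying dual attainment in part (iv). This is precisely the paper's strategy (Proposition \ref{dualprop} and its supporting lemmas, followed by the citation of Theorems 1 and 2 of \cite{kram04} and Theorem 3.2 of \cite{Chris22}).
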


\begin{remark}
Denote $\mathcal{P}(x,q;U)$ the set of all marginal utility-based prices at $(x,q)\in\mathcal{K}$
\begin{equation}\label{marprice}
\mathcal{P}(x,q;U)\triangleq\{p\in\mathbb{R}: u(x-q'p, q+q')\leq u(x,q)\ \text{for all}\ q'\in\mathbb{R}\}.
\end{equation}
The definition asserts that the agent's holdings $q$ in $\mathcal{E}_T$ is optimal in the model where the contingent claims can be traded at the marginal utility-based price $p$ at time zero. Equivalently, see \cite{kram04} and \cite{Hugo}, we have
\begin{equation}\label{marprice2}
\mathcal{P}(x,q;U)= \Big\{\frac{r}{y}: (y,r)\in\partial u(x,q)\Big\}.
\end{equation}
The duality theorem above can serve as the first step to perform the sensitivity analysis and first order expansion of the marginal utility-based prices similar to \cite{Sirbu2006} and \cite{Kramkov20071606}.
\end{remark}

\section{Connections to the Shadow Prices}\label{section4}
In this section, we apply the duality theorem to study the existence of the shadow price process in the frictionless market with random endowments. To simplify the notation, we shall take $N=1$ and hence $q\in\mathbb{R}$.

First, we introduce the concept of a shadow price in the usual sense. To this end, we need some preparations of definitions. For a fixed $\lambda$-CPS $(\mathbb{Q}, \hat{S})$, i.e., $\hat{S}\in\mathcal{S}$ and the positive initial wealth $x>0$, we define the set of self-financing and $0$-admissible trading strategies in the market without transaction costs by
\begin{equation}
\begin{split}
\mathcal{A}_x^{\text{adm}}(\hat{S})\triangleq \bigg\{(\phi^0, \phi^1): \ \ &x+\int_0^t\phi_u^1d\hat{S}_u\geq 0,\ \forall t\in[0,T],\ \ \text{$\phi^1$ is predictable}\\
&\text{and $\hat{S}$-integrable},\ \phi_t^0=x+\int_0^t\phi_u^1d\hat{S}_u-\phi_t^1\hat{S}_t\bigg\}.\nonumber
\end{split}
\end{equation}

The set of wealth processes with $0$-admissible strategies in the $\hat{S}$-market is define by, for $x>0$,
\begin{equation}
\mathcal{X}(\hat{S},x)\triangleq \bigg\{X: X_t=x+\int_0^t\phi_u^1d\hat{S}_u\geq 0,\ \ \forall t\in[0,T],\ \ (\phi^0,\phi^1)\in\mathcal{A}_x^{\text{adm}}(\hat{S}) \bigg\}.\nonumber
\end{equation}
We denote $X^{\max}$ the maximal element in the set $\mathcal{X}(\hat{S}, x)$ for some $x>0$ and $\mathcal{X}(\hat{S})\triangleq \bigcup_{x>0}\mathcal{X}(\hat{S}, x)$.

\begin{definition}\label{notransactions}
For a fixed $\hat{S}\in\mathcal{S}$, the self-financing portfolio is called acceptable in the frictionless $\hat{S}$-market if the wealth process $X$ admits a representation $X=X'-X^{\max}$, where $X'$ is a wealth process under some $0$-admissible portfolios and $X^{\max}$ is a maximal element in $\mathcal{X}(\hat{S})$. That is to say, the set of all acceptable portfolios can be written as
\begin{equation}\label{shadowaccpt}
\begin{split}
\mathcal{A}_x(\hat{S})\triangleq \bigg\{(\phi^0, \phi^1):\ \ &x+\int_0^t\phi_u^1d\hat{S}_u=X'_t-X_t^{\max},\ \phi^1\ \text{is predictable}\\
&\text{and $\hat{S}$-integrable}, \text{where}\ X', X^{\max}\in\mathcal{X}(\hat{S})\\
&\text{and}\ \phi_t^0=x+\int_0^t\phi_u^1d\hat{S}_u-\phi_t^1\hat{S}_t,\ \forall t\in[0,T] \bigg\}.\nonumber
\end{split}
\end{equation}
\end{definition}

The set of all terminal wealth processes in the $\hat{S}$-market is denoted by
\begin{equation}
\mathcal{V}_x(\hat{S})\triangleq \bigg\{X_T\in\mathbb{L}^0(\mathbb{R}): X_T=x+\int_0^T\phi_u^1d\hat{S}_u=\phi_T^0+\phi_T^1\hat{S}_T,\ \ (\phi^0,\phi^1)\in\mathcal{A}_x(\hat{S}) \bigg\},\nonumber
\end{equation}
and the set of terminal wealth values under acceptable portfolios dominating the payoff $-q\mathcal{E}_T$ is defined by
\begin{equation}
\mathcal{H}(x,q;\hat{S})\triangleq \{X_T: X_T+q\mathcal{E}_T\geq 0,\ X_T\in\mathcal{V}_x(\hat{S}) \}.\nonumber
\end{equation}
The corresponding effective domain is given by
\begin{equation}
\mathcal{K}(\hat{S})\triangleq \text{int} \{(x,q)\in\mathbb{R}^{2}: \mathcal{H}(x,q;\hat{S})\neq \emptyset \}.\nonumber
\end{equation}

\begin{definition}\label{shadowprice}
A process $\hat{S}\in\mathcal{S}$, i.e., $(\mathbb{Q}, \hat{S})$ is a $\lambda$-CPS for some $\mathbb{Q}\in\mathcal{M}(\hat{S})$, is called a \textit{shadow price process}, if the optimal solution $(\hat{\phi}^0, \hat{\phi}^1)$ with the terminal wealth $X(\hat{\phi}^0,\hat{\phi}^1)_T\in\mathcal{H}(x,q;\hat{S})$ to the frictionless utility maximization problem
\begin{equation}\label{frictionless}
u(x,q;\hat{S})\triangleq \sup_{X_T\in\mathcal{H}(x,q;\hat{S})}\mathbb{E}[U(X_T+q\mathcal{E}_T)],
\end{equation}
exists for $(x,q)\in\mathcal{K}\cap\mathcal{K}(\hat{S})$ and coincides with the optimal solution $\phi^{\ast}=(\phi^{0,\ast}, \phi^{1, \ast})$ to the problem $(\ref{primeu})$ under transaction costs $\lambda$. In particular, we have $u(x,q)=u(x,q;\hat{S})$.
\end{definition}

\begin{remark}
Our definition of the classic shadow price process $\hat{S}$ is more restrictive than \cite{Chris22} and \cite{Chris33} and a shadow price process $\hat{S}$ satisfies the NFLVR condition by its definition. The acceptable portfolio differs from the admissible portfolio and the existence of equivalent local martingale measures for $\hat{S}$ are required to build the duality relationship in the shadow price market without transaction costs. Therefore, unlike \cite{Chris33}, even when $S$ is continuous, the existence of consistent local martingale system $(Z^0,Z^1)$ (see \cite{BayYu} for its definition and the equivalent characterization) is no longer a sufficient condition for the existence of a shadow price process with random endowments.
\end{remark}

\begin{remark}
Comparing Definition $\ref{accptpot}$ and Definition $\ref{notransactions}$, it is easy to see that $\mathcal{A}_x\subseteq\mathcal{A}_x(\hat{S})$ since we require $\hat{S}\in\mathcal{S}$. Therefore, it follows that $\mathcal{H}(x,q)\neq\emptyset$ implies that $\mathcal{H}(x,q;\hat{S})\neq\emptyset$ and hence $\mathcal{K}\subset\mathcal{K}(\hat{S})$. In Definition $\ref{shadowprice}$, it is then enough to require $(x,q)\in\mathcal{K}$ for the well-posedness of both $u(x,q)$ and $u(x,q;\hat{S})$.
\end{remark}

If a shadow price $\hat{S}$ exists, an optimal strategy $(\hat{\phi})=(\hat{\phi}^0, \hat{\phi}^1)$ for the utility maximization problem $(\ref{frictionless})$ in the frictionless market can be realized in the market with transaction costs. In particular, we aim to show that the optimal strategy $(\phi^{0,\ast}, \phi^{1,\ast})$ to the problem $(\ref{primeu})$ under transaction costs only trades if $\hat{S}$ is at the bid or ask price, i.e.,
\begin{equation}
\{d\phi^{1,\ast}>0\}\subseteq\{\hat{S}=S\},\ \ \text{and}\ \ \ \{d\phi^{1,\ast}<0\}\subseteq\{\hat{S}=(1-\lambda)S\}\nonumber
\end{equation}
in the sense that
\begin{align}\label{stratno}
\{d\phi^{1,\ast,c}>0\} &\subseteq\{\hat{S}=S\},\ \ \ \ \{d\phi^{1,\ast,c}<0\}\subseteq\{ \hat{S}=(1-\lambda)S\},\nonumber\\
\{\triangle \phi^{1,\ast}>0 \}&\subseteq \{\hat{S}_{-}=S_{-}\},\ \ \ \ \{ \triangle \phi^{1,\ast}<0\}\subseteq \{\hat{S}_{-}=(1-\lambda)S_{-}\},\nonumber\\
\{\triangle_{+} \phi^{1,\ast}>0 \}&\subseteq \{\hat{S}=S\},\ \ \ \ \{ \triangle_{+} \phi^{1,\ast}<0\}\subseteq \{\hat{S}=(1-\lambda)S\}.
\end{align}

Define the dual set for the shadow price $\hat{S}$ by
\begin{align*}
\mathcal{Y}(y; \hat{S})\triangleq \bigg\{Y\geq 0:& Y_0=y\ \text{and}\ Y_t(\phi_t^0+\phi_t^1\hat{S}_t)=Y_t\bigg(1+\int_0^t\phi^1_ud\hat{S}_u\bigg),\\
&\text{is a \cadlag supermartingale for all $(\phi^0, \phi^1)\in\mathcal{A}_1^{\text{adm}}(\hat{S})$} \bigg\}.
\end{align*}

The \text{relative interior} of the polar cone of $-\mathcal{K}(\hat{S})$ is denoted by
\begin{equation}
\mathcal{L}(\hat{S})\triangleq \text{ri}\{(y,r)\in\mathbb{R}^{2}: xy+qr\geq 0\ \text{for all}\ (x,q)\in\mathcal{K}(\hat{S})\}.\nonumber
\end{equation}

Define $\mathcal{Y}(y,r;\hat{S})$ as the subset of $\mathcal{Y}(y;\hat{S})$ by
\begin{align*}
\mathcal{Y}(y,r;\hat{S})\triangleq \{Y_T\in\mathcal{Y}(y;\hat{S}):&\ \mathbb{E}[Y_T(X_T+q\mathcal{E}_T)]\leq xy+qr,\\
&\ X_T\in\mathcal{H}(x,q;\hat{S}),\ (x,q)\in\mathcal{K}(\hat{S})\}.\nonumber
\end{align*}

The dual optimization problem to $(\ref{frictionless})$ is then formulated as
\begin{equation}\label{shadowdualv}
v(y,r;\hat{S})=\inf_{Y_T\in\mathcal{Y}(y,r;\hat{S})}\mathbb{E}[\tilde{U}(Y_T)].
\end{equation}

Example $4.1$ of \cite{Chris22} shows that if $S$ is \cadlag, the dual optimizer $(Y^{0,\ast}, Y^{1,\ast})$ to problem $(\ref{dualv})$ and the candidate of the shadow price process $\hat{S}\triangleq \frac{Y^{1,\ast}}{Y^{0,\ast}}$ may not be \cadlag and therefore may not be semimartingales. The existence of a shadow price may fail in general. However, the stochastic integral $\int_0^t\hat{\phi}^1_ud\hat{S}_u$ can still be defined as long as $\hat{\phi}^1$ is a predictable process of finite variation and $\hat{S}$ is \ladlag (see \cite{Chris} and \cite{Chris22}) and
\begin{equation}\label{wrongsha}
\int_0^t\hat{\phi}^1_ud\hat{S}_u=\int_0^t\hat{\phi}_u^{1,c}d\hat{S}_u+\sum_{0<u\leq t}\triangle\hat{\phi}_u^{1}(\hat{S}_t-\hat{S}_{u-})+\sum_{0\leq u<t}\triangle_+\hat{\phi}_u^1(\hat{S}_t-\hat{S}_u),\ \ 0\leq t\leq T.
\end{equation}
The integral above can still be interpreted as gains from the trading of the self-financing portfolio $(\hat{\phi}^1_t)_{0\leq t\leq T}$ without transaction costs under the price process $\hat{S}=(\hat{S}_t)_{0\leq t\leq T}$, although $\hat{S}$ is not a semimartingale. Therefore, the natural question is that whether or not can we choose the quotient $\hat{S}=\frac{Y^{1,\ast}}{Y^{0,\ast}}$ as the underlying asset and define the wealth process in this general shadow price market by the stochastic integral $(\ref{wrongsha})$? Unfortunately, the answer is negative in general. Example $4.2$ in \cite{Chris22} points out that we may not be able to verify properties $(\ref{stratno})$ using the wealth process defined by $(\ref{wrongsha})$. In particular, it is difficult to guarantee that
\begin{equation}
\{\triangle \phi^{1,\ast}>0 \}\subseteq \{\hat{S}_{-}=S_{-}\},\ \ \ \ \{ \triangle \phi^{1,\ast}<0\}\subseteq \{\hat{S}_{-}=(1-\lambda)S_{-}\},\nonumber
\end{equation}
where $\phi^{1,\ast}$ is the optimal portfolio process in Theorem $\ref{mainthm}$. As a consequence, we are not able to verify that $(\phi^{0,\ast}, \phi^{1,\ast})$ is the optimal solution in the shadow price market driven by $\hat{S}=\frac{Y^{1,\ast}}{Y^{0,\ast}}$. It requires us to modify either the definition of $\hat{S}$ or the wealth process given by $(\ref{wrongsha})$.

To examine the shadow price process in a correct generalized form, Example $4.2$ in \cite{Chris22} shows the importance of the following concepts.

\begin{definition}
A predictable process $X=(X_t)_{0\leq t\leq T}$ is called a predictable strong supermartingale if, for all predictable stopping times $0\leq \sigma\leq \tau\leq T$, we have
\begin{equation}
\mathbb{E}[X_{\tau}|\mathcal{F}_{\sigma}]\leq X_{\sigma},\nonumber
\end{equation}
where we impose that $X_{\tau}$ is integrable for any $[0,T]$-valued predictable stopping time $\tau$.
\end{definition}

\begin{definition}\label{sandsuper}
A sandwiched strong supermartingale is a pair $\mathbf{X}=(X^p,X)$ such that $X^p$ (resp. $X$) is a predictable (resp. optional) strong supermartingale and such that
\begin{equation}\label{sand}
X_{\tau-}\geq X_{\tau}^p\geq \mathbb{E}[X_{\tau}|\mathcal{F}_{\tau-}],
\end{equation}
for all predictable stopping times $\tau$.
\end{definition}

For a sandwiched strong supermartingale $\mathbf{X}=(X^p, X)$ and a predictable process $\phi$ of finite variation, as in \cite{Chris}, the stochastic integral is defined in \textit{a sandwiched sense} by
\begin{equation}
\int_0^t\phi_ud\mathbf{X}_u\triangleq \int_0^t\phi_u^cdX_u+\sum_{0< u\leq t}\triangle\phi_u(X_t-X_u^p)+\sum_{0\leq u< t}\triangle_+\phi_u(X_t-X_u),\ \ 0\leq t\leq T. \nonumber
\end{equation}

\begin{definition}\label{sanddefla}
We call $\mathbf{Y}=(Y^p, Y)=((Y^{0,p}, Y^{1,p}), (Y^0, Y^1))$ a sandwiched strong supermartingale deflator if $Y=(Y^0, Y^1)\in\mathcal{Z}(y)$ (see $(\ref{deflator})$) and $(Y^{0,p}, Y^0)$ and $(Y^{1,p}, Y^1)$ are sandwiched strong supermartingales and the process $\hat{S}^p$ stays inside the bid-ask spread,
\begin{equation}
\hat{S}^p_t=\frac{Y^{1,p}}{Y^{0,p}}\in\ [(1-\lambda)S_{t-}, S_{t-}],\ \ t\in[0,T]. \nonumber
\end{equation}
\end{definition}

Following the proof of Lemma $A.1$ of \cite{Chris22}, by passing to the forward convex combinations if necessary, we have the following convergence results.
\begin{lemma}\label{seplemma}
Fix $(x,q)\in\mathcal{K}$. For any $(y,r)\in\partial u(x,q)$, there exists a minimizing sequence $Z^n(y,r)=(Z_t^{0,n}(y,r), Z_t^{1,n}(y,r))_{0\leq t\leq T}$ in $\mathcal{B}(1)$ to the dual problem $(\ref{dualv})$, i.e.,
\begin{equation}
\mathbb{E}[\tilde{U}(yZ_T^{0,n}(y,r))]\searrow v(y,r),\ \ \text{as}\ n\rightarrow\infty,\nonumber
\end{equation}
and a sandwiched strong supermartingale deflator $\mathbf{Y}^{\ast}(y,r)=(Y^{\ast, p}(y,r), Y^{\ast}(y,r))$ such that
\begin{equation}\label{con111}
(yZ^{0,n}_{\tau-}(y,r), yZ^{1,n}_{\tau-}(y,r))\xrightarrow{\mathbb{P}}(Y^{0,\ast,p}_{\tau}(y,r), Y^{1,\ast,p}_{\tau}(y,r)),
\end{equation}
and
\begin{equation}\label{con222}
(yZ^{0,n}_{\tau}(y,r), yZ^{1,n}_{\tau}(y,r))\xrightarrow{\mathbb{P}}(Y^{0,\ast}_{\tau}(y,r), Y^{1,\ast}_{\tau}(y,r)),
\end{equation}
as $n\rightarrow\infty$ for all $[0,T]$-valued stopping time $\tau$, where $Y^{0,\ast}(y,r)$ is the dual optimizer to $(\ref{dualv})$.
\end{lemma}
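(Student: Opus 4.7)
The plan is to mimic closely the proof of Lemma A.1 in \cite{Chris22}, with the extra bookkeeping needed for the random-endowment constraint hidden in the dual set $\mathcal{Y}(y,r)$. First I would take any sequence $(Z^n)_{n\ge 1}\subset \mathcal{B}(1)$ with $\mathbb{E}[\tilde{U}(yZ_T^{0,n})]\to v(y,r)$; such a sequence exists because, by the bipolar relation implied by the super-hedging theorem (Proposition \ref{auxlem} being the key ingredient), $y\mathcal{B}(1)$ is dense enough in $\mathcal{Y}(y,r)$ that the infimum over the latter equals the infimum over the former. Note that each $Z^{0,n}$ is a \cadlag $\mathbb{P}$-martingale and each $Z^{1,n}=\tilde{S}^n Z^{0,n}$ is a nonnegative $\mathbb{P}$-local martingale, so both are \cadlag nonnegative supermartingales.

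Second, I would invoke the Komlós-type result from \cite{Chris} (the forward convex combination lemma for nonnegative \cadlag supermartingales), applied jointly to $(Z^{0,n})$ and $(Z^{1,n})$. After replacing $Z^n$ by convex combinations $\bar{Z}^n\in\operatorname{conv}(Z^n,Z^{n+1},\dots)$ (still denoted $Z^n$), both components converge in probability at every $[0,T]$-valued stopping time $\tau$, both in the left-limit and in the value at $\tau$. Denote the limits, after multiplication by $y$, by $Y^{0,*,p}_\tau, Y^{1,*,p}_\tau, Y^{0,*}_\tau, Y^{1,*}_\tau$, which gives candidates for $\mathbf{Y}^*(y,r)$. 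Convexity of $\tilde{U}$ and Jensen's inequality guarantee $\mathbb{E}[\tilde{U}(y\bar{Z}^{0,n}_T)]\searrow v(y,r)$.

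Third, I would verify that $\mathbf{Y}^*(y,r)=(Y^{*,p}(y,r),Y^{*}(y,r))$ is a sandwiched strong supermartingale deflator in the sense of Definition \ref{sanddefla}. Fatou-type limits in the supermartingale inequality for $Z^{0,n}$ and $Z^{1,n}$ yield the strong supermartingale property for both $Y^{*,p}$ and $Y^*$. The sandwich inequality $Y^{i,*}_{\tau-}\ge Y^{i,*,p}_\tau\ge \mathbb{E}[Y^{i,*}_\tau\mid\mathcal{F}_{\tau-}]$ is obtained by passing to the limit in the corresponding relations for the \cadlag supermartingales $Z^{i,n}$ (trivial on the left; conditional Fatou on the right). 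The bid-ask inclusion $\hat{S}^p_t=Y^{1,*,p}_t/Y^{0,*,p}_t\in[(1-\lambda)S_{t-},S_{t-}]$ follows from $\tilde{S}^n_{t-}\in[(1-\lambda)S_{t-},S_{t-}]$ together with pointwise convergence; the analogous inclusion for $\hat{S}=Y^{1,*}/Y^{0,*}$ inside $[(1-\lambda)S,S]$ gives $Y^*\in\mathcal{Z}(y)$ after also verifying, via Proposition 1.6 of \cite{Sch2}, that $\phi^0Y^{0,*}+\phi^1Y^{1,*}$ remains a nonnegative optional strong supermartingale for every $(\phi^0,\phi^1)\in\mathcal{A}_1^{\text{adm}}$, which again is preserved under the joint convex-combination limit.

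Finally, I would identify $Y^{0,*}_T(y,r)$ with the dual optimizer from Theorem \ref{mainthm}(ii). By lower semicontinuity of $\mathbb{E}[\tilde{U}(\cdot)]$ under convergence in probability plus Fatou's lemma, $\mathbb{E}[\tilde{U}(Y^{0,*}_T)]\leq \liminf_n\mathbb{E}[\tilde{U}(yZ^{0,n}_T)]=v(y,r)$; combined with $Y^{0,*}_T\in\mathcal{Y}(y,r)$ and the uniqueness part of Theorem \ref{mainthm}(ii), this forces $Y^{0,*}_T=Y^*_T(y,r)$. The main obstacle is precisely the passage to the limit in the random-endowment constraint $\mathbb{E}[Y_T(V_T+q\cdot\mathcal{E}_T)]\leq xy+q\cdot r$ defining $\mathcal{Y}(y,r)$: with bounded endowments one could simply invoke dominated convergence, but in the present unbounded setting the uniform integrability from Assumption \ref{assE}, together with the maximal-element dominating bound furnished by Lemma \ref{boundE}, is exactly what rules out loss of mass when taking the Komlós-type limits of the convex combinations.
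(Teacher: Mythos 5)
Your proposal is correct and follows essentially the same route as the paper: the paper gives no detailed argument for this lemma, simply asserting that it follows from the proof of Lemma A.1 of \cite{Chris22} by passing to forward convex combinations, which is exactly the strategy you carry out (minimizing sequence in $\mathcal{B}(1)$, the Koml\'os-type limit theorem for nonnegative c\`adl\`ag supermartingales from \cite{Chris} applied jointly to both components, verification of the sandwiched deflator properties, and identification with the dual optimizer via lower semicontinuity and uniqueness). Your closing remark correctly isolates the only point where the random endowments intervene, namely passing the constraint defining $\mathcal{Y}(y,r)$ to the limit, which is handled by Fatou together with Assumption \ref{assE} and Lemma \ref{boundE}.
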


To ensure the existence of a sandwiched shadow price process in the next Theorem $\ref{theorem1}$, the following assumption is needed for some technical reasons. 

\begin{assumption}\label{importantnew}
Fix $(x,q)\in\mathcal{K}$. Assume that there exists some $(y,r)\in\partial u(x,q)$ such that the minimizing sequence $Z^n(y,r)=(Z_t^{0,n}(y,r), Z_t^{1,n}(y,r))_{0\leq t\leq T}$ in $\mathcal{B}(1)$ to the dual problem $(\ref{dualv})$ satisfies
\begin{equation}\label{marginalprice}
\underset{n\rightarrow\infty}{\lim\inf}\ \mathbb{E}[Z_T^{0,n}(y, r)\mathcal{E}_T]=\frac{r}{y}.
\end{equation}
\end{assumption}

Denote $\mathcal{P}$ the set of all arbitrage-free prices. For any $Z\in\mathcal{B}(1)$, we have $\mathbb{E}[Z_T^{0}\mathcal{E}_T]\in\mathcal{P}$ and $\mathcal{P}(x,q;U)\subset\mathcal{P}$, where $\mathcal{P}(x,q;U)$ is the set of all marginal utility-based prices, see $(\ref{marprice})$ and $(\ref{marprice2})$ for its definition. The condition $(\ref{marginalprice})$ requires the existence of a marginal utility based price $\frac{r}{y}\in\mathcal{P}(x,q;U)$ which can be achieved by a minimizing sequence $Z^{0,n}(y,r)$. In other words, the limit infimum of the arbitrage free prices under the minimizing sequence $Z^{0,n}$ equals $\frac{r}{y}$, i.e., $\underset{n\rightarrow\infty}{\lim\inf}\ \mathbb{E}[Z_T^{0,n}(y, r)\mathcal{E}_T]=\frac{r}{y}$. Here, we reveal a sufficient condition for the existence of a sandwiched shadow price process related to the property of some marginal utility-based prices. The following two examples provide some concrete market models satisfying the condition $(\ref{marginalprice})$ for separate cases when $q>0$ and $q<0$.

\begin{example}
We assume that $\mathcal{E}_T\leq (1-\lambda)S_T$. Suppose that for $(x,q)\in\mathcal{K}$ and $q>0$, there exists some marginal utility based price $(y,r)\in\partial u(x,q)$ of $\mathcal{E}_T$ which satisfies $(1-\lambda)S_0\leq \frac{r}{y} \leq S_0$. As the initial value of $Y_0^{1,\ast}$ is flexible. Without loss of generality, we can consider the initial value of $Y_0^{1,\ast}$ as $Y_0^{1,\ast}(y,r)=r$, which satisfies the bid-ask spread constraint
\begin{equation}
(1-\lambda)S_0\leq \frac{Y_0^{1,\ast} (y,r)}{Y_0^{0,\ast} (y,r)} \leq S_0
\end{equation}

Consider the minimizing sequence $(Z^{0,n}(y,r), Z^{1,n}(y,r))\in\mathcal{B}(1)$, we have $\tilde{S}^n\triangleq \frac{Z^{1,n}(y,r)}{Z^{0,n}(y,r)}\in[(1-\lambda)S, S]$. Therefore, it is easy to see that
\begin{align*}
\mathbb{E}[Z_T^{0,n}(y, r)\mathcal{E}_T]&\leq a\mathbb{E}[Z_T^{0,n}(y,r)(1-\lambda)S_T]\leq a\mathbb{E}[Z_T^{0,n}(y,r)\tilde{S}_T^n]\\
&=\mathbb{E}[Z_T^{1,n}(y,r)]\leq Z_0^{1,n}(y,r).
\end{align*}
As $yZ_0^{1,n}(y,r)$ converges to $Y_0^{1,\ast}(y,r)=r$, it follows that
\begin{equation}\label{exampleinq}
\underset{n\rightarrow\infty}{\lim\inf}\mathbb{E}[Z_T^{0,n}(y, r)\mathcal{E}_T]\leq \underset{n\rightarrow\infty}{\lim\inf}Z_0^{1,n}(y,r)\leq \frac{r}{y}.
\end{equation}

On the other hand, for the same pair $(y,r)\in\partial u(x,q)$, we have that
\begin{align}\label{exampleinq2}
xy+qr=&\mathbb{E}[Y_T^{0,\ast}(y,r)(V_T^{\ast}(x,q)+q\mathcal{E}_T)]\leq\underset{n\rightarrow\infty}{\lim\inf}\mathbb{E}[yZ_T^{0,n}(y,r)(V_T^{\ast}(x,q)+q\mathcal{E}_T)]\nonumber\\
\leq & xy+qy\underset{n\rightarrow\infty}{\lim\inf}\mathbb{E}[Z_T^{0,n}(y,r)\mathcal{E}_T].
\end{align}
As $q>0$, it follows that $\underset{n\rightarrow\infty}{\lim\inf}\mathbb{E}[Z_T^{0,n}(y,r)\mathcal{E}_T]\geq \frac{r}{y}$. The last inequality and $(\ref{exampleinq})$ yield $(\ref{marginalprice})$.
\end{example}

\begin{example}
Assume that $\mathcal{E}_T\geq S_T$. Suppose that for the choice of $(x,q)\in\mathcal{K}$ and $q<0$, there exists some marginal utility based price $(y,r)\in\partial u(x,q)$ of $\mathcal{E}_T$ which satisfies
\begin{equation}\label{exampleinq3}
\mathbb{E}[Y_T^{1,\ast}(y,r)]\geq r^{\ast},
\end{equation}
where $r^{\ast}$ is defined as the smallest value of $r$ such that $(y,r)\in\partial u(x,q)$.

Because for any $(y,r)$ and $(\bar{y},\bar{r})\in\partial u(x,q)$, we always have $Y_T^{0,\ast}(y,r)=Y_T^{0,\ast}(\bar{y},\bar{r})$. We shall pick the pair $(y^{\ast}, r^{\ast})\in\partial u(x,q)$. For the minimizing sequence $(Z^{0,n}(y^{\ast},r^{\ast}), Z^{1,n}(y^{\ast},r^{\ast}))\in\mathcal{B}(1)$, Fatou's lemma together with $(\ref{exampleinq3})$ and the fact that $\frac{Y^{1,\ast}(y^{\ast}, r^{\ast})}{Y^{0,\ast}(y^{\ast}, r^{\ast})}\in [(1-\lambda)S, S]$ imply that
\begin{align*}
\underset{n\rightarrow\infty}{\lim\inf}\mathbb{E}[Z_T^{0,n}(y^{\ast},r^{\ast})\mathcal{E}_T]\geq &\mathbb{E}\Big[\frac{1}{y^{\ast}}Y_T^{0,\ast}(y^{\ast}, r^{\ast})\mathcal{E}_T\Big]\geq \mathbb{E}\Big[\frac{1}{y^{\ast}}Y_T^{0,\ast}(y^{\ast}, r^{\ast})S_T\Big]\\
\geq &\mathbb{E}\bigg[\frac{1}{y^{\ast}}Y_T^{0,\ast}(y^{\ast}, r^{\ast}) \frac{Y_T^{1,\ast}(y^{\ast}, r^{\ast})}{Y^{0,\ast}_T(y^{\ast}, r^{\ast})}\bigg]\\
=&\mathbb{E}\Big[\frac{1}{y^{\ast}}Y_T^{1,\ast}(y^{\ast}, r^{\ast})\Big]\geq \frac{r^{\ast}}{y^{\ast}}.
\end{align*}
For the same pair $(y^{\ast},r^{\ast})\in\partial u(x,q)$, following $(\ref{exampleinq2})$ and the fact $q<0$, we will have
\begin{equation}
\underset{n\rightarrow\infty}{\lim\inf}\mathbb{E}[Z_T^{0,n}(y^{\ast},r^{\ast})\mathcal{E}_T]\leq\frac{r^{\ast}}{y^{\ast}},\nonumber
\end{equation}
which verifies $(\ref{marginalprice})$ with the choice of $r=r^{\ast}$ and $y=y^{\ast}$.
\end{example}

\begin{remark}
Assumption $\ref{importantnew}$ is in general not straightforward to verify and may hold valid under certain conditions on $\mathcal{E}_T$ and the choices of $(x,q)$. The major difficulty in the proof of the next Theorem $\ref{theorem1}$ is that for the fixed choice of $(y,r)$ in the dual problem, we can not compare the value of $\mathbb{E}[Y_T^{0,\ast}(y,r)q\mathcal{E}_T]$ or the approximating sequence $\mathbb{E}[yZ_T^{0,n}(y,r)q\mathcal{E}_T]$ with the value $r$ as the product $Y^{0,\ast}(y,r)V(\phi^{0,\ast}(x,q),\phi^{1,\ast}(x,q))$ may not be a martingale in general. Existence of a sandwiched shadow price without Assumption $\ref{importantnew}$ is a challenging but interesting problem, which will be left as a future research project. 
\end{remark}

\begin{remark}
It should be possible to extend the framework of \cite{SLZ} for unbounded random endowments and then establish the duality theorem using the pair of bounded finitely additive measures which admits the Yosida-Hewitt decomposition $\mathbb{Q}^i=\mathbb{Q}^{i,r}+\mathbb{Q}^{i,s}$, $i=0,1$, and $\mathbb{Q}^{i, r}$ is a countably additive measure. It then might be easier to check Assumption $\ref{importantnew}$ in this extended framework. However, one should also be aware of some new challenges in this framework: 1. One needs to verify that $\hat{S}=\frac{Y^{1,\ast}}{Y^{0,\ast}}$ induced by the (regular part of) cluster point of a minimizing net of finitely additive measures will still stay in the bid-ask spread for all time; 2. On the other hand, another new difficulty is to verify the dual optimizers in two models are the same, which will be critical to guarantee that the optimal portfolios in two markets can coincide. This becomes nontrivial because one may expect that regular parts of two dual optimizers are the same while it might be difficult to conclude that values $\langle \mathbb{Q}^{0,\ast}, \mathcal{E}_T\rangle$ in two dual models also coincide as the singular parts in the primal market and the shadow market may differ. In conclusion, although the extension of the framework in \cite{SLZ} with unbounded random endowments might open the door to check the existence of shadow price without Assumption $\ref{importantnew}$, many further technical efforts are still required. 
\end{remark}

We are now ready to present the next main result which provides the existence of a sandwiched supermartingale deflator related to the dual minimizer of the problem $(\ref{dualv})$, and hence the candidate sandwiched shadow price process is well-defined.

\begin{theorem}\label{theorem1}
Fix $(x,q)\in\mathcal{K}$. Under all assumptions of Theorem $\ref{mainthm}$ and under Assumption $\ref{importantnew}$, there exists at least a pair of $(y,r)\in\partial u(x,q)$ and for the optimizer $\phi^{\ast}(x,q)=(\phi^{0,\ast}(x,q), \phi^{1,\ast}(x,q))$ to the primal utility maximization problem $(\ref{primeu})$, we have
\begin{equation}\label{equiphi}
Y^{0,\ast}(y,r)\phi^{0,\ast}(x,q)+Y^{1,\ast}(y,r)\phi^{1,\ast}(x,q)=Y^{0,\ast}(y,r)(x+\phi^{1,\ast}(x,q)\cdot\hat{\mathbf{S}}),
\end{equation}
where
\begin{equation}
\hat{\mathbf{S}}=(\hat{S}^p, \hat{S})=\Big(\frac{Y^{1,\ast,p}(y, r)}{Y^{0,\ast,p}(y, r)}, \frac{Y^{1,\ast}(y, r)}{Y^{0,\ast}(y, r)}\Big)\nonumber
\end{equation}
and
\begin{align}\label{selfphixq}
(\phi^{1,\ast}(x,q)\cdot\hat{\mathbf{S}})_t\triangleq &\int_0^t\phi^{1,\ast,c}_u(x,q)d\hat{S}_u\nonumber\\
&+\sum_{0\leq u<t}\triangle \phi^{1,\ast}_u(x,q)(\hat{S}_t-\hat{S}^p_u)+\sum_{0<u\leq t}\triangle_+\phi_u^{1,\ast}(x,q)(\hat{S}_t-\hat{S}_u).
\end{align}
It follows that
\begin{equation}\label{behavphi}
\begin{split}
\{d\phi^{1,\ast,c}(x,q)>0\}&\subseteq \{\hat{S}=S\},\ \ \ \ \{d\phi^{1,\ast,c}(x,q)<0\}\subseteq\{\hat{S}=(1-\lambda)S\},\\
\{\triangle\phi^{1,\ast}(x,q)>0\}&\subset\{\hat{S}^p=S_{-}\},\ \ \ \ \{\triangle\phi^{1,\ast}(x,q)<0\}\subseteq \{\hat{S}^p=(1-\lambda)S_{-}\},\\
\{\triangle_+\phi^{1,\ast}(x,q)>0\}&\subset\{\hat{S}=S\},\ \ \ \ \ \{\triangle_+\phi^{1,\ast}(x,q)<0\}\subseteq \{\hat{S}=(1-\lambda)S\}.
\end{split}
\end{equation}
\end{theorem}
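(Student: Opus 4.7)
The plan is to fix some $(y,r)\in\partial u(x,q)$ for which Assumption \ref{importantnew} holds, use Lemma \ref{seplemma} to extract the minimizing sequence $Z^n=(Z^{0,n},Z^{1,n})\subset\mathcal{B}(1)$ and the sandwiched deflator $\mathbf{Y}^\ast(y,r)=((Y^{0,\ast,p},Y^{1,\ast,p}),(Y^{0,\ast},Y^{1,\ast}))$, and then establish (\ref{equiphi}) by sandwiching both sides. Set $\tilde S^n\triangleq Z^{1,n}/Z^{0,n}$ and $\tilde{\mathbf{S}}^n=(\tilde S^n_{-},\tilde S^n)$; each $\tilde S^n$ lies in the bid-ask spread $[(1-\lambda)S,S]$, and by (\ref{con111})-(\ref{con222}) we have $\tilde S^n_\tau\to \hat S_\tau$ and $\tilde S^n_{\tau-}\to \hat S^p_\tau$ in probability at every $[0,T]$-valued stopping time $\tau$. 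Since $(\phi^{0,\ast},\phi^{1,\ast})$ is self-financing with transaction costs, the constraint (\ref{self}) combined with $S_u\geq \tilde S^n_u\geq (1-\lambda)S_u$ (and the analogous inequalities at left- and right-jumps) yields $d\phi^{0,\ast}_u+\tilde S^n_u\,d\phi^{1,\ast,\uparrow}_u-\tilde S^n_u\,d\phi^{1,\ast,\downarrow}_u\leq 0$ on the continuous, left-jump, and right-jump parts separately. Integrating and applying integration by parts produces the pointwise bound $\phi^{0,\ast}_t+\phi^{1,\ast}_t\tilde S^n_t\leq x+(\phi^{1,\ast}\cdot\tilde{\mathbf{S}}^n)_t$, and passing to the limit $n\to\infty$ (dominated convergence on each of the three summands of the sandwich integral, using finite variation of $\phi^{1,\ast}$) gives
$$\phi^{0,\ast}_t+\phi^{1,\ast}_t\hat S_t\leq x+(\phi^{1,\ast}\cdot\hat{\mathbf{S}})_t,\qquad t\in[0,T].$$
Multiplying by $Y^{0,\ast}_t>0$ yields the ``$\leq$'' direction of (\ref{equiphi}).

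The next step is to close the gap at $t=T$ and then propagate the equality backward. Assumption \ref{importantnew} gives $\liminf_n \mathbb{E}[Z^{0,n}_T\mathcal{E}_T]=r/y$; combined with the uniform integrability of $(Z^{0,n}_T\mathcal{E}_T)_n$ delivered by Assumption \ref{assE} and the in-probability convergence $Z^{0,n}_T\to Y^{0,\ast}_T/y$, Vitali's theorem produces $\mathbb{E}[Y^{0,\ast}_T\mathcal{E}_T]=r$. Feeding this into (\ref{ok1}) gives $\mathbb{E}[Y^{0,\ast}_T V^\ast_T]=xy$. Since $\hat S_T\in[(1-\lambda)S_T,S_T]$, the liquidation value satisfies $V^\ast_T\leq\phi^{0,\ast}_T+\phi^{1,\ast}_T\hat S_T$, so $\mathbb{E}[Y^{0,\ast}_T(\phi^{0,\ast}_T+\phi^{1,\ast}_T\hat S_T)]\geq xy$. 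On the other hand, the sandwiched strong supermartingale deflator property (applied, if necessary, to a truncation of $\phi^{1,\ast}$ via a maximal element as in Definition \ref{accptpot}) implies $Y^{0,\ast}_t(x+(\phi^{1,\ast}\cdot\hat{\mathbf{S}})_t)$ is a non-negative optional strong supermartingale starting at $xy$, so $\mathbb{E}[Y^{0,\ast}_T(x+(\phi^{1,\ast}\cdot\hat{\mathbf{S}})_T)]\leq xy$. Chaining these bounds with the ``$\leq$'' direction from the first paragraph forces all four quantities to coincide at $t=T$. The process
$$\Delta_t\triangleq Y^{0,\ast}_t(x+(\phi^{1,\ast}\cdot\hat{\mathbf{S}})_t)-\bigl(Y^{0,\ast}_t\phi^{0,\ast}_t+Y^{1,\ast}_t\phi^{1,\ast}_t\bigr)$$
is a non-negative optional strong supermartingale with $\mathbb{E}[\Delta_T]=0$, hence $\Delta\equiv 0$, proving (\ref{equiphi}) at every $t\in[0,T]$.

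Once the pathwise equality in (\ref{equiphi}) is in hand, the inequalities in the first paragraph must themselves be equalities; decomposing the self-financing constraint into its continuous, left-jump, and right-jump components and matching the resulting equalities against $\hat S\in[(1-\lambda)S,S]$, $\hat S^p\in[(1-\lambda)S_{-},S_{-}]$ yields the three pairs of inclusions in (\ref{behavphi}) (for instance, equality of $-S_u\,d\phi^{1,\ast,c,\uparrow}_u$ and $-\hat S_u\,d\phi^{1,\ast,c,\uparrow}_u$ on $\{d\phi^{1,\ast,c,\uparrow}_u>0\}$ forces $\hat S_u=S_u$ there, and symmetrically for the remaining cases). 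The main obstacle I anticipate is the convergence step $(\phi^{1,\ast}\cdot\tilde{\mathbf{S}}^n)_t\to(\phi^{1,\ast}\cdot\hat{\mathbf{S}})_t$ in probability: the sandwich integral has three pieces and one needs a uniform domination (for instance by $\|\phi^{1,\ast}\|_{\mathrm{Var}}\cdot\sup_u S_u$) together with the stopping-time-wise convergence from Lemma \ref{seplemma} to apply dominated convergence on each piece. A secondary subtle point is verifying that $Y^{0,\ast}_t(x+(\phi^{1,\ast}\cdot\hat{\mathbf{S}})_t)$ retains its supermartingale property for the acceptable (not merely $0$-admissible) portfolio $\phi^{1,\ast}$; this is where Proposition \ref{auxlem} and the maximal lower bound $\hat X^{\max,\tilde S}$ enter, allowing one to decompose $\phi^{1,\ast}$ appropriately and transfer the supermartingale property from the class $\mathcal{A}_1^{\mathrm{adm}}$ appearing in the definition of $\mathcal{Z}(y)$.
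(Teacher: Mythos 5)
Your skeleton matches the paper's: reduce \eqref{equiphi} to \eqref{behavphi} by writing $\phi^{0,\ast}_t+\phi^{1,\ast}_t\hat S_t=x+(\phi^{1,\ast}\cdot\hat{\mathbf S})_t+K_t$ with $K$ a non-increasing predictable ``slippage'' process, use Assumption \ref{importantnew} to pin down $\lim_n\mathbb{E}[yZ^{0,n}_T q\mathcal{E}_T]=qr$, and combine with the duality identity $xy+qr=\mathbb{E}[Y^{0,\ast}_T(\phi^{0,\ast}_T+q\mathcal{E}_T)]$ to force the slippage to vanish at $T$. However, there are two genuine gaps. The first is the budget inequality $\mathbb{E}[Y^{0,\ast}_T(x+(\phi^{1,\ast}\cdot\hat{\mathbf S})_T)]\leq xy$, which you derive from the claim that $Y^{0,\ast}(x+\phi^{1,\ast}\cdot\hat{\mathbf S})$ is a \emph{non-negative} optional strong supermartingale. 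This is not a ``secondary subtle point'': the deflator property in \eqref{deflator} is only stipulated for $0$-admissible portfolios, $x+\phi^{1,\ast}\cdot\hat{\mathbf S}$ is not non-negative for an acceptable $\phi^{1,\ast}$, and $\hat S$ need not belong to $\mathcal{S}$, so no maximal element $X^{\max,\hat S}$ is available to compensate. The paper avoids this entirely by never invoking a supermartingale property of the limiting deflator against $\phi^{\ast}$: it establishes, for each $n$, $\mathbb{E}[Z^{0,n}_T(x+\int_0^T\phi^{1,\ast}_u d\tilde S^n_u+q\mathcal{E}_T)]\leq x+\mathbb{E}[Z^{0,n}_Tq\mathcal{E}_T]$ using that $Z^{0,n}(x+\phi^{1,\ast}\cdot\tilde S^n)$ is an honest local martingale (integration by parts with $Z^{1,n}=Z^{0,n}\tilde S^n$), the acceptability lower bound $X^{\max,\tilde S^n}$, and the $\mathbb{L}^1$-dense subset $\mathcal{M}'(\tilde S^n)$ of measures making $X^{\max,\tilde S^n}$ uniformly integrable; only then does it pass to the limit via Fatou on the left-hand side. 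Your argument needs to be rerouted through this approximation to be valid.

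The second gap is the back-propagation from $t=T$ to all $t$. Your $\Delta_t=-Y^{0,\ast}_tK_t$ is the product of a supermartingale with the non-decreasing process $-K$, which is not in general a supermartingale (it is a difference of two supermartingales); and even if it were, a non-negative optional strong supermartingale with $\mathbb{E}[\Delta_T]=0$ only yields $\Delta_T=0$ a.s., not $\Delta_t=0$ for $t<T$ (consider $\Delta\equiv 1$ on $[0,T)$ and $\Delta_T=0$). The correct propagation is pathwise and elementary: once $K_T=0$ a.s.\ (which in the paper follows from $Z^{0,n}_TK^n_T\to 0$ in $\mathbb{L}^1$ together with $K^n_T\to K_T$ a.s., and in your setup would follow from $\Delta_T=0$ and $Y^{0,\ast}_T=U'(V^\ast_T+q\mathcal{E}_T)>0$), the facts that $K_0=0$ and $K$ is non-increasing force $K\equiv 0$, which is exactly \eqref{behavphi} and hence \eqref{equiphi}. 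With these two repairs --- the approximating-sequence budget constraint in place of the unjustified supermartingale claim, and the monotonicity argument in place of the $\Delta$-supermartingale argument --- your proof becomes essentially the paper's.
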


For any sandwiched supermartingale deflator $\mathbf{Y}=(Y^p,Y)$ with the associated price process $\hat{\mathbf{S}}=(\hat{S}^p,\hat{S})=\Big(\frac{Y^{1,p}}{Y^{0,p}}, \frac{Y^{1}}{Y^0}\Big)$, and any acceptable trading strategy $\phi\in\mathcal{A}_x$, it is easy to verify that the liquidation value $V(\phi^0, \phi^1)$ satisfies
\begin{align*}
V(\phi^0,\phi^1)_t&=\phi_t^0+(\phi_t^1)^+(1-\lambda)S_t-(\phi_t^1)^{-}S_t\\
&\leq x+\int_0^t\phi_u^{1,c}d\hat{S}_u+\sum_{0< u\leq t}\triangle \phi_u^1(\hat{S}_t-\hat{S}^p_u)+\sum_{0\leq u< t}\triangle_+\phi_u^1(\hat{S}_t-\hat{S}_u)\\
&=x+(\phi^1\cdot\hat{\mathbf{S}})_t.
\end{align*}

Thanks to $(\ref{equiphi})$ and $(\ref{behavphi})$, we are able to verify that the optimal strategy $(\phi^{0,\ast}, \phi^{1,\ast})$ only trades when the sandwiched shadow price process $\hat{\mathbf{S}}=(\hat{S}^p, \hat{S})$ assumes the least favorable position in the bid-ask spread.

In order to verify the existence of the sandwiched shadow price process, it is important to give a new definition of the acceptable portfolios for the underlying price process $\hat{\mathbf{S}}$. Clearly, the definition of $\mathcal{A}_x(\hat{S})$ in $(\ref{shadowaccpt})$ is too wide in general because we can only work with integrand processes of finite variation. The equality $(\ref{equiphi})$ gives us a hint of the definition of self-financing portfolios for the sandwiched shadow prices. Let us also recall that the important property behind the concept of a sandwiched shadow price $\hat{\mathbf{S}}$ is that any self-financing and acceptable portfolio trading with $\hat{\mathbf{S}}$ can not do better than the optimizer $(\phi^{0,\ast},\phi^{1,\ast})$ given in Theorem $\ref{mainthm}$ for the price process  $S$ with transaction costs $\lambda$. Moreover, the strategy $(\phi^{0,\ast}, \phi^{1,\ast})$ trading in $\hat{\mathbf{S}}$ without transaction costs brings the same expected utility value as the case of trading in $S$ under transaction costs $\lambda$. Similar to the definition of admissible portfolios in \cite{Chris22}, we can now give the following modified definition of self-financing and acceptable portfolios for the sandwiched shadow price process such that it is comparable with respect to the definition of acceptable portfolios for $S$ with transaction costs $\lambda$.

\begin{definition}\label{sandSaccpt}
The portfolio process $(\phi^0_t, \phi^1_t)_{0\leq t\leq T}$ is called \textbf{acceptable} for the sandwiched shadow price process $\hat{\mathbf{S}}$ if
\begin{itemize}
\item[(i)] $(\phi^0, \phi^1)$ is predictable process of finite variation.
\item[(ii)] $(\phi^0, \phi^1)$ is self-financing for $\hat{\mathbf{S}}$ without transaction costs in the sense that
\begin{equation}
\phi_t^0=x+\int_0^t\phi_u^1d\hat{\mathbf{S}}_u-\phi_t^1\hat{S}_t,\ \ 0\leq t\leq T.\nonumber
\end{equation}
\item[(iii)] Define the auxiliary liquidation value process by
\begin{equation}
V(\phi^0,\phi^1)_t\triangleq \phi_t^0+(\phi_t^1)^+(1-\lambda)S_t-(\phi_t^1)^-S_t.\nonumber
\end{equation}
There exists a constant $a>0$ such that for each $\tilde{S}\in\mathcal{S}$, there exists a maximal element $X^{\max,\tilde{S}}\in\mathcal{X}(\tilde{S}, a)$ and $V(\phi^0, \phi^1)_{\tau}\geq -X^{\max,\tilde{S}}_{\tau}$
for all $[0,T]$-valued stopping time $\tau$.
\end{itemize}
\end{definition}
Denote $\mathcal{A}_x(\hat{\mathbf{S}})$ the set of all acceptable portfolio processes for the sandwiched shadow price process $\hat{\mathbf{S}}$ starting with initial position $(\phi^0_0, \phi^1_0)=(x,0)$. Also, denote $\mathcal{V}_x(\hat{\mathbf{S}})$ the set of terminal value of all wealth processes generated by acceptable portfolios
\begin{equation}
\mathcal{V}_x(\hat{\mathbf{S}})\triangleq \bigg\{X_T: X_T=\phi_T^0+\phi_T^1\hat{S}_T=x+\int_0^T\phi_u^1d\hat{\mathbf{S}}_u,\ \ (\phi^0, \phi^1)\in\mathcal{A}_x(\hat{\mathbf{S}})\bigg\}.\nonumber
\end{equation}

Similar to the case of shadow price process in the usual sense, given the same random endowment $\mathcal{E}_T$ and initial static position $q\in\mathbb{R}$, let us consider the primal set
\begin{equation}
\mathcal{H}(x,q;\hat{\mathbf{S}})\triangleq \{X_T:\  X_T+q\mathcal{E}_T\geq 0,\ X_T\in\mathcal{V}_x(\hat{\mathbf{S}})\},\ \ (x,q)\in\mathcal{K}(\hat{\mathbf{S}})\nonumber
\end{equation}
where we define $\mathcal{K}(\hat{\mathbf{S}})\triangleq \text{int}\{(x,q)\in\mathbb{R}^{2}: \mathcal{H}(x,q;\hat{\mathbf{S}})\neq\emptyset\}$.

The next theorem concerns the existence of a sandwiched shadow price process.

\begin{theorem}\label{theosandsha}
Fix $(x,q)\in\mathcal{K}$. Under all assumptions in Theorem $\ref{theorem1}$, let $(y,r)\in\partial u(x,q)$ satisfy Assumption $\ref{importantnew}$ and let $(Y^{0,\ast}(y,r), Y^{1,\ast}(y,r))$ be the dual optimizer of problem $(\ref{dualv})$. Consider any $X(\phi^0,\phi^1)_T\in\mathcal{H}(x,q;\hat{\mathbf{S}})$ for the sandwiched shadow price process defined by $\hat{\mathbf{S}}=(\hat{S}^p,\hat{S})=\Big(\frac{Y^{1,\ast,p}(y,r)}{Y^{0,\ast,p}(y,r)}, \frac{Y^{1,\ast}(y,r)}{Y^{0,\ast}(y,r)}\Big)$. We have
\begin{align*}
&\mathbb{E}\Big[U(X(\phi^0,\phi^1)_T+q\mathcal{E}_T)\Big]\leq \mathbb{E}\bigg[U(x+\int_0^T\phi^{1,\ast}_ud\hat{\mathbf{S}}_u+q\mathcal{E}_T)\bigg]\\
=&\mathbb{E}\Big[U(\phi_T^{0,\ast}+\phi_T^{1,\ast}\hat{S}_T +q\mathcal{E}_T)\Big]=\mathbb{E}\Big[U(V(\phi^{0,\ast}, \phi^{1,\ast})_T+q\mathcal{E}_T)\Big],
\end{align*}
where $(\phi^{0,\ast}(x,q), \phi^{1,\ast}(x,q))$ is the optimal solution to the primal utility maximization problem $(\ref{primeu})$.
\end{theorem}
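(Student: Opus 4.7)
The plan is to first establish the two equalities on the right-hand side using Theorem \ref{theorem1}, and then to reduce the main inequality to a convex-duality bound of the form $\mathbb{E}[Y^{0,\ast}_T(X_T+q\mathcal{E}_T)]\le xy+qr$ for every $X_T\in\mathcal{H}(x,q;\hat{\mathbf{S}})$. For the equalities, I would use the fact that every element of $\mathcal{V}_x$ corresponds to a portfolio that is liquidated at maturity, so the primal optimizer $(\phi^{0,\ast},\phi^{1,\ast})$ of \eqref{primeu} satisfies $\phi^{1,\ast}_T=0$ and therefore $V(\phi^{0,\ast},\phi^{1,\ast})_T=\phi^{0,\ast}_T=\phi^{0,\ast}_T+\phi^{1,\ast}_T\hat{S}_T$. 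Dividing the identity \eqref{equiphi} by the strictly positive $Y^{0,\ast}(y,r)$ yields $\phi^{0,\ast}_T+\phi^{1,\ast}_T\hat{S}_T=x+(\phi^{1,\ast}\cdot\hat{\mathbf{S}})_T$, from which both claimed equalities follow.

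For the inequality, the natural route is to invoke the first-order conditions of Theorem \ref{mainthm}(vi): $Y^{0,\ast}_T(y,r)=U'(V^{\ast}_T(x,q)+q\mathcal{E}_T)$ and $\mathbb{E}[Y^{0,\ast}_T(V^{\ast}_T+q\mathcal{E}_T)]=xy+qr$. The concavity inequality
\[
U(X_T+q\mathcal{E}_T)\le U(V^{\ast}_T+q\mathcal{E}_T)+Y^{0,\ast}_T\bigl((X_T+q\mathcal{E}_T)-(V^{\ast}_T+q\mathcal{E}_T)\bigr)
\]
reduces the target estimate, after taking expectations, to
\[
\mathbb{E}[Y^{0,\ast}_T X_T]+q\,\mathbb{E}[Y^{0,\ast}_T\mathcal{E}_T]\le xy+qr.
\]
The endowment term is handled by applying Fatou's lemma to the convergence $yZ^{0,n}_T(y,r)\to Y^{0,\ast}_T(y,r)$ from Lemma \ref{seplemma}, together with Assumption \ref{importantnew}, to obtain $\mathbb{E}[Y^{0,\ast}_T\mathcal{E}_T]\le r$; the two examples preceding the theorem show that this can be sharpened into $q\,\mathbb{E}[Y^{0,\ast}_T\mathcal{E}_T]\le qr$ in both sign regimes of $q$. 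To control $\mathbb{E}[Y^{0,\ast}_T X_T]$, I would rewrite $X_t=x+(\phi^1\cdot\hat{\mathbf{S}})_t$ using the self-financing condition in Definition \ref{sandSaccpt} and the identity $Y^{1,\ast}_t=Y^{0,\ast}_t\hat{S}_t$ to get $Y^{0,\ast}_t X_t=Y^{0,\ast}_t\phi^0_t+Y^{1,\ast}_t\phi^1_t$, and then show that this process is an optional strong supermartingale starting at $xy$.

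The hard part is precisely the last supermartingale claim. In the original market, the supermartingale property of $\phi^0Y^0+\phi^1Y^1$ for $0$-admissible strategies is built directly into the definition of $\mathcal{Z}(y)$ in \eqref{deflator}, but for a merely acceptable trading strategy for $\hat{\mathbf{S}}$ in the sense of Definition \ref{sandSaccpt} it has to be proved. The route I would take is an integration-by-parts along the sandwiched integral \eqref{selfphixq} paired with the sandwiched strong supermartingale $\mathbf{Y}^{\ast}=(Y^{\ast,p},Y^{\ast})$ of Definition \ref{sanddefla}, using that both $\hat{S}^p$ and $\hat{S}$ lie in the bid-ask spread to ensure that the infinitesimal contributions from jumps and continuous parts have the right sign. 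The negative excursions of $\phi^0Y^0+\phi^1Y^1$ must then be dominated by an integrable process; this domination is delivered by the acceptability bound $V(\phi^0,\phi^1)_{\tau}\ge-X^{\max,\tilde{S}}_{\tau}$ of Definition \ref{sandSaccpt} combined with Proposition \ref{auxlem} and Lemma \ref{boundE}, which uniformly bound the stochastic thresholds under all $\lambda$-CPS. Once $\mathbb{E}[Y^{0,\ast}_T X_T]\le xy$ is secured, combining it with the bound on $\mathbb{E}[Y^{0,\ast}_T\mathcal{E}_T]$ and substituting back into the concavity inequality delivers the main inequality of the theorem.
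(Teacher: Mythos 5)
Your treatment of the two equalities is fine and matches the paper's (they both rest on \eqref{equiphi} from Theorem \ref{theorem1} and the liquidation of the optimal position at $T$), and your Fenchel/concavity reduction to the duality bound $\mathbb{E}[Y^{0,\ast}_T(X_T+q\mathcal{E}_T)]\le xy+qr$ is exactly the paper's first step. The gap is in how you propose to prove that bound. You split it into $\mathbb{E}[Y^{0,\ast}_TX_T]\le xy$ and $q\,\mathbb{E}[Y^{0,\ast}_T\mathcal{E}_T]\le qr$, and neither piece survives scrutiny. For the endowment term, Fatou applied to $yZ^{0,n}_T\to Y^{0,\ast}_T$ only yields $\mathbb{E}[Y^{0,\ast}_T\mathcal{E}_T]\le r$; since the convergence is merely in probability, mass can be lost in the limit and the reverse inequality need not hold, so for $q<0$ the required bound $q\,\mathbb{E}[Y^{0,\ast}_T\mathcal{E}_T]\le qr$ does not follow. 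The two examples preceding the theorem do not rescue this: they verify Assumption \ref{importantnew}, which is a statement about $\liminf_n\mathbb{E}[Z^{0,n}_T\mathcal{E}_T]$ along the minimizing sequence, not about $\mathbb{E}[Y^{0,\ast}_T\mathcal{E}_T]$ itself. The paper never separates the two terms: it applies Fatou to the \emph{nonnegative} combination $Z^{0,n}_T(\phi^0_T+q\mathcal{E}_T)$, proves $\mathbb{E}[yZ^{0,n}_T(\phi^0_T+q\mathcal{E}_T)]\le xy+y\,\mathbb{E}[Z^{0,n}_Tq\mathcal{E}_T]$ at each fixed $n$, and only then uses Assumption \ref{importantnew} to send the endowment term to $qr$ in the limit.

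The "hard part" you identify — the optional strong supermartingale property of $\phi^0Y^{0,\ast}+\phi^1Y^{1,\ast}$ for a merely acceptable $\phi$ — is indeed hard, and your sketch does not close it. The definition of $\mathcal{Z}(y)$ gives this property only for $0$-admissible strategies; for acceptable ones the negative part of $\phi^0Y^{0,\ast}+\phi^1Y^{1,\ast}$ would have to be dominated by $Y^{0,\ast}X^{\max,\tilde{S}}$, but $X^{\max,\tilde{S}}$ is a wealth process in the $\tilde{S}$-market for a genuine CPS $\tilde{S}$, not in the $\hat{\mathbf{S}}$-market, so the product $Y^{0,\ast}X^{\max,\tilde{S}}$ carries no usable (super)martingale structure under $\mathbb{P}$. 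This is precisely why the paper routes the whole estimate through the approximating CPS $\tilde{S}^n=Z^{1,n}(y,r)/Z^{0,n}(y,r)\in\mathcal{S}$: there, $Z^{0,n}(x+\phi^1\cdot\tilde{S}^n)$ is a local martingale, Definition \ref{sandSaccpt}(iii) supplies a threshold $X^{\max,\tilde{S}^n}$ adapted to that very $\tilde{S}^n$, and the supermartingale argument is carried out under measures $\mathbb{Q}^{n,m}$ in the dense subset $\mathcal{M}'(\tilde{S}^n)$ making $X^{\max,\tilde{S}^n}$ a UI martingale, with a final passage to the limit in $m$ and then in $n$. To repair your proof you should abandon the direct supermartingale claim for $Y^{\ast}$ and the separate bound on $\mathbb{E}[Y^{0,\ast}_T\mathcal{E}_T]$, and instead run the entire estimate at the level of the sequence $(Z^{0,n},Z^{1,n})$ before taking a single Fatou limit of the combined nonnegative quantity.
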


Comparing with Proposition $3.7$ in \cite{Chris22} and Theorem $3.1$ in \cite{Chris33}, the existence of a shadow price process under random endowments becomes more delicate and can fail in general. In our framework, it is even not enough to require that the dual optimizer $(Y^{0,\ast}(y,r), Y^{1,\ast}(y,r))$ satisfies the condition that $Y^{0,\ast}(y,r)$ is a martingale and $Y^{1,\ast}(y,r)$ is a local martingale. Actually, first, we need to require that the classic shadow price process admits NFLVR condition so that the duality theory can be obtained in the shadow price market. Second, in order to check that the dual optimizer $Y^{0,\ast}(y,r)$ is in the dual space $\mathcal{Y}(y,r;\hat{S})$ of the shadow price market and to compare utility value functions in two corresponding markets, we have to make the assumption that $Y^{0,\ast}(y,r)\in y\mathcal{M}(\frac{r}{y})\subset \mathcal{Y}(y,r)$ where we define $\mathcal{M}(p)=\{\mathbb{Q}\in\mathcal{M}: \mathbb{E}^{\mathbb{Q}}[\mathcal{E}_T]=p\}$, $p\in\mathcal{P}(x,q;U)$ and $\mathcal{P}(x,q;U)$ is the set of all marginal utility-based prices. Therefore, it is assumed that there exists some $(y,r)\in\partial u(x,q)$ such that the arbitrage-free price of $\mathcal{E}_T$ under the measure $\frac{d\mathbb{Q^{\ast}}}{d\mathbb{P}}=\frac{1}{y}Y_T^{0,\ast}(y,r)$ equals the chosen marginal utility-based price, i.e.,  $\mathbb{E}^{\mathbb{Q}^{\ast}}[\mathcal{E}_T]=\frac{r}{y}$.

The next theorem summarizes the existence of a shadow price in the usual sense under some sufficient conditions discussed above.
\begin{theorem}\label{classical1}
Fix $(x,q)\in\mathcal{K}$ and $q>0$ and consider some $(y,r)\in\partial u(x,q)\subset\mathcal{L}$. If the dual minimizer $(Y^{0,\ast}(y,r), Y^{1,\ast}(y,r))$ to the problem $(\ref{dualv})$ satisfies that $(Y^{0,\ast}(y,r), Y^{1,\ast}(y,r))\in y\mathcal{B}$ and $Y^{0,\ast}_T(y,r)\in y\mathcal{M}\Big(\frac{r}{y}\Big)$. The process $\hat{S}(y,r)$ defined by $\hat{S}(y,r)\triangleq \frac{Y^{1,\ast}(y,r)}{Y^{0,\ast}(y,r)}$ is a classical shadow price process given in Definition $\ref{shadowprice}$ to the utility maximization problem $(\ref{primeu})$ with the price process $S$ and the transaction costs $\lambda$.
\end{theorem}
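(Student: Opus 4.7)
My plan is to use the hypothesis $(Y^{0,\ast}(y,r), Y^{1,\ast}(y,r))\in y\mathcal{B}$ to upgrade the sandwiched shadow price structure of Theorem \ref{theorem1} to a genuine $\lambda$-CPS, and then show $u(x,q)=u(x,q;\hat{S})$ by a two-sided estimate in which the assumption $Y^{0,\ast}_T(y,r)\in y\mathcal{M}(r/y)$ is used to absorb the random endowment. First I would note that, by definition of $\mathcal{B}$, the pair $(Y^{0,\ast}/y, Y^{1,\ast}/y)$ yields a measure $\mathbb{Q}^{\ast}\sim\mathbb{P}$ with density $Y^{0,\ast}_T/y$ under which $\hat{S}\triangleq Y^{1,\ast}/Y^{0,\ast}$ is a $\mathbb{Q}^{\ast}$-local martingale in $[(1-\lambda)S, S]$, so $\hat{S}\in\mathcal{S}$. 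Since $Y^{0,\ast}, Y^{1,\ast}$ are now strictly positive \cadlag martingales, the predictable counterparts satisfy $Y^{i,\ast,p}=Y^{i,\ast}_{-}$, hence $\hat{S}^p=\hat{S}_{-}$ and the sandwiched integral in \eqref{selfphixq} reduces to the classical integral $\int_0^{\cdot}\phi^{1,\ast}_u d\hat{S}_u$; moreover $Y^{0,\ast}_T\in y\mathcal{M}(r/y)$ translates to $\mathbb{E}^{\mathbb{Q}^{\ast}}[\mathcal{E}_T]=r/y$.

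Next I would establish $u(x,q;\hat{S})\leq u(x,q)$ by the standard duality argument in the frictionless $\hat{S}$-market. For any $X_T\in\mathcal{H}(x,q;\hat{S})$, Definition \ref{notransactions} decomposes $X=X'-X^{\max}$ with $X'$ a $0$-admissible wealth process and $X^{\max}$ a maximal element in $\mathcal{X}(\hat{S})$; NFLVR of $\hat{S}$ under $\mathbb{Q}^{\ast}$ then yields $\mathbb{E}^{\mathbb{Q}^{\ast}}[X_T]\leq x$, which combined with $\mathbb{E}^{\mathbb{Q}^{\ast}}[\mathcal{E}_T]=r/y$ gives $\mathbb{E}[Y^{0,\ast}_T(X_T+q\mathcal{E}_T)]\leq xy+qr$. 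Fenchel's inequality and the conjugacy $u(x,q)=v(y,r)+xy+qr$ from Theorem \ref{mainthm}(i,vi) then produce $\mathbb{E}[U(X_T+q\mathcal{E}_T)]\leq u(x,q)$. For the reverse inequality, Theorem \ref{theorem1} applied to this collapsed deflator delivers the pathwise identity $\phi^{0,\ast}_t+\phi^{1,\ast}_t\hat{S}_t=x+\int_0^t\phi^{1,\ast}_u d\hat{S}_u$, so $(\phi^{0,\ast},\phi^{1,\ast})$ is self-financing for $\hat{S}$ in the frictionless sense; a case analysis on the sign of $\phi^{1,\ast}$ combined with $\hat{S}\in[(1-\lambda)S,S]$ shows $\phi^{0,\ast}_t+\phi^{1,\ast}_t\hat{S}_t\geq V(\phi^{0,\ast},\phi^{1,\ast})_t$ at every $t\in[0,T]$. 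Acceptability of $\phi^{\ast}$ in $\mathcal{A}_x$ together with Lemma \ref{boundE} (applied with $\tilde{S}=\hat{S}$ and the budget constant enlarged so that $q\mathcal{E}_T\leq X^{\max,\hat{S}}_T$) produces an $X^{\max,\hat{S}}\in\mathcal{X}(\hat{S})$ for which $X_t:=x+\int_0^t\phi^{1,\ast}_u d\hat{S}_u\geq -X^{\max,\hat{S}}_t$; writing $X=(X+X^{\max,\hat{S}})-X^{\max,\hat{S}}$ exhibits $(\phi^{0,\ast},\phi^{1,\ast})\in\mathcal{A}_x(\hat{S})$, and since $X_T\geq V^{\ast}_T(x,q)\geq -q\mathcal{E}_T$ we obtain $X_T\in\mathcal{H}(x,q;\hat{S})$. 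Monotonicity of $U$ then gives $u(x,q;\hat{S})\geq\mathbb{E}[U(V^{\ast}_T(x,q)+q\mathcal{E}_T)]=u(x,q)$.

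Combining the two bounds yields $u(x,q)=u(x,q;\hat{S})$, and strict monotonicity/concavity of $U$ force $X_T=V^{\ast}_T(x,q)$ almost surely; hence $(\phi^{0,\ast},\phi^{1,\ast})$ is the optimizer of \eqref{frictionless}, which is exactly what Definition \ref{shadowprice} requires. The step I expect to be the most delicate is the verification that $(\phi^{0,\ast},\phi^{1,\ast})$ qualifies as acceptable in the frictionless $\hat{S}$-market in the sense of Definition \ref{notransactions}: one must invoke Lemma \ref{boundE} with a carefully scaled budget so that a single maximal element $X^{\max,\hat{S}}\in\mathcal{X}(\hat{S})$ simultaneously dominates $q\mathcal{E}_T$ and lower bounds the liquidation value $V(\phi^{\ast})$ uniformly in $t$, and then exhibit the precise decomposition $X=X'-X^{\max,\hat{S}}$ at every time. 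The collapse $\hat{S}^p=\hat{S}_{-}$ in Step 1 is also non-trivial and relies on $Y^{0,\ast}, Y^{1,\ast}$ being strictly positive \cadlag martingales as guaranteed by $y\mathcal{B}$, rather than merely optional strong supermartingales as in the general sandwiched setting.
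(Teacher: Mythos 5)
Your overall architecture is the same as the paper's: collapse the sandwiched deflator to a genuine $\lambda$-CPS, show that $Y^{0,\ast}_T(y,r)$ belongs to the frictionless dual set $\mathcal{Y}(y,r;\hat{S})$, and then squeeze $u(x,q)\leq u(x,q;\hat{S})\leq v(y,r;\hat{S})+xy+qr\leq \mathbb{E}[\tilde{U}(Y^{0,\ast}_T(y,r))]+xy+qr=u(x,q)$. However, there is one genuine gap in your Step 2. You assert that for $X_T\in\mathcal{H}(x,q;\hat{S})$ with $X=X'-X^{\max}$, ``NFLVR of $\hat{S}$ under $\mathbb{Q}^{\ast}$ yields $\mathbb{E}^{\mathbb{Q}^{\ast}}[X_T]\leq x$.'' This is not valid for a \emph{fixed} equivalent local martingale measure when $X$ is merely acceptable rather than admissible: under $\mathbb{Q}^{\ast}$ one only gets $\mathbb{E}^{\mathbb{Q}^{\ast}}[X'_T]\leq x'$ and $\mathbb{E}^{\mathbb{Q}^{\ast}}[X^{\max}_T]\leq a$, and the second inequality enters with the wrong sign: if $X^{\max}$ fails to be a uniformly integrable $\mathbb{Q}^{\ast}$-martingale, then $\mathbb{E}^{\mathbb{Q}^{\ast}}[X_T]=\mathbb{E}^{\mathbb{Q}^{\ast}}[X'_T]-\mathbb{E}^{\mathbb{Q}^{\ast}}[X^{\max}_T]$ can exceed $x'-a=x$ (take $X'\equiv a$ and $X=a-X^{\max}$ for a concrete violation). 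The superhedging bound for acceptable wealth processes holds directly only on the dense subset $\mathcal{M}'(\hat{S})$ of measures making $X^{\max}$ a UI martingale (Theorem $5.2$ of \cite{DS97}); to transfer it to $\mathbb{Q}^{\ast}$ one must approximate $\mathbb{Q}^{\ast}$ in $\mathbb{L}^1$ by $\mathbb{Q}^n\in\mathcal{M}'(\hat{S})$ and pass to the limit in $\mathbb{E}^{\mathbb{Q}^n}[X_T+q\mathcal{E}_T]\leq x+\mathbb{E}^{\mathbb{Q}^n}[q\mathcal{E}_T]$, which is exactly where Assumption $\ref{assE}$ is needed to control the endowment term. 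The paper's proof carries out precisely this density argument; without it, the inequality $\mathbb{E}[Y^{0,\ast}_T(y,r)(X_T+q\mathcal{E}_T)]\leq xy+qr$ that your Fenchel step relies on is not established.

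The rest of your argument is sound and in places more explicit than the paper's. Your Step 3 verifies $(\phi^{0,\ast},\phi^{1,\ast})\in\mathcal{A}_x(\hat{S})$ by hand and identifies the optimizers through a pathwise identity, whereas the paper shortcuts the inequality $u(x,q)\leq u(x,q;\hat{S})$ via the earlier remark that $\mathcal{A}_x\subseteq\mathcal{A}_x(\hat{S})$ whenever $\hat{S}\in\mathcal{S}$, and then reads off the primal optimizer from the fact that $Y_T^{0,\ast}(y,r)$ is the dual optimizer of $v(y,r;\hat{S})$. Your appeal to Theorem $\ref{theorem1}$ is legitimate even though Theorem $\ref{classical1}$ does not list Assumption $\ref{importantnew}$: under the present hypotheses the constant sequence $Z^n\equiv(Y^{0,\ast}/y, Y^{1,\ast}/y)$ is a minimizing sequence and $Y^{0,\ast}_T\in y\mathcal{M}(\frac{r}{y})$ is exactly condition $(\ref{marginalprice})$ for it, so Assumption $\ref{importantnew}$ holds automatically. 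One small inaccuracy: under the hypothesis $(Y^{0,\ast},Y^{1,\ast})\in y\mathcal{B}$ the process $Y^{1,\ast}$ is in general only a $\mathbb{P}$-local martingale rather than a true martingale, though this does not affect the collapse $\hat{S}^p=\hat{S}_{-}$.
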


\section{Proofs of Main Results}\label{section5}
This section contains proofs of all main theorems and auxiliary results in the previous sections.

\subsection{Proof of Proposition $\ref{auxlem}$}
\begin{proof}
The proof of Theorem $1.7$ in \cite{Sch2} can be modified to our setting using acceptable portfolios. Assume that $(\ref{bef})$ does not hold for one fixed $\tilde{S}\in\mathcal{S}(\lambda, S)$, we may find $\frac{\lambda}{2}>\alpha>0$ and a stopping time $0\leq \tau\leq T$ such that either $\mathbb{P}(A_+)>0$ or $\mathbb{P}(A_{-})>0$, where we define
\begin{equation}
A_{+}\triangleq \Big\{\phi_{\tau}^{1}\geq 0,\ \phi_{\tau}^{0}+\phi_{\tau}^{1}\frac{1-\lambda}{1-\alpha}S_{\tau}<-\hat{X}_{\tau}^{\max,\tilde{S}} \Big\},\nonumber
\end{equation}
and
\begin{equation}
A_{-}\triangleq \Big\{\phi_{\tau}^{1}\leq 0, \phi_{\tau}^{0}+\phi_{\tau}^{1}(1-\alpha^2)S_{\tau}<-\hat{X}_{\tau}^{\max,\tilde{S}} \Big\}.\nonumber
\end{equation}

For the fixed $\tilde{S}\in\mathcal{S}(\lambda, S)$ and $\hat{X}^{\max,\tilde{S}}\in\mathcal{X}(\tilde{S},a)$, consider any $\mathbb{Q}\in\mathcal{M}(\tilde{S})$, $\hat{X}^{\max,\tilde{S}}$ is a supermartingale under $\mathbb{Q}$. Hence, $\mathbb{Q}(\hat{X}_{\tau}^{\max,\tilde{S}} \geq \hat{X}_T^{\max,\tilde{S}})>0$ holds for the previous stopping time $\tau$. Also, as $\mathbb{P}\sim\mathbb{Q}$, we deduce that $\mathbb{P}(\hat{X}_{\tau}^{\max,\tilde{S}} \geq \hat{X}_T^{\max,\tilde{S}})>0$. Let us define two auxiliary sets
\begin{equation}
B_+\triangleq \{\hat{X}_{\tau}^{\max,\tilde{S}} \geq \hat{X}_T^{\max,\tilde{S}}\}\cap A_+,\nonumber
\end{equation}
and
\begin{equation}
B_{-}\triangleq \{\hat{X}_{\tau}^{\max,\tilde{S}} \geq \hat{X}_T^{\max,\tilde{S}}\}\cap A_{-}.\nonumber
\end{equation}
Clearly it follows that $\mathbb{P}(B_+)>0$ or $\mathbb{P}(B_{-})>0$.

Choose $0<\lambda'<\alpha$ and consider a $\lambda'$-CPS with $\bar{S}$ taking values in the spread $[(1-\lambda')S, S]$ and $\mathbb{Q}\in\mathcal{M}(\bar{S}; \lambda')$, where we denote $\mathcal{M}(\bar{S}; \lambda')$ as the set of all $\mathbb{Q}$ such that $(\mathbb{Q}, \bar{S})$ is a $\lambda'$-CPS. It is easy to check that $(1-\alpha)\bar{S}$ and $\frac{1-\lambda}{1-\alpha}\bar{S}$ stays in the spread $[(1-\lambda)S, S]$, and it follows that for any $\mathbb{Q}\in\mathcal{M}(\bar{S}; \lambda')$, $(\mathbb{Q}, (1-\alpha)\bar{S})$ and $(\mathbb{Q}, \frac{1-\lambda}{1-\alpha}\bar{S})$ are both $\lambda$-CPS. Moreover, thanks to Proposition $1.6$ of \cite{Sch33}, we deduce that $\phi_t^0+\phi_t^1(1-\alpha)\bar{S}_t$ and $\phi_t^0+\phi_t^1\frac{1-\lambda}{1-\alpha}\bar{S}_t$, $0\leq t\leq T$ are both local optional strong $\mathbb{Q}$-supermartingales. Since $(\phi^0,\phi^1)$ is an acceptable portfolio, there exists a constant $a>0$ and for $\bar{S}\in\mathcal{S}(\lambda', S)\subset\mathcal{S}$, there exists a $X^{\max,\bar{S}}\in\mathcal{X}(\bar{S},a)$ as the lower bound. It follows that
\begin{equation}
0\leq V(\phi^0,\phi^1)_{\tau}+X_{\tau}^{\max,\bar{S}}\leq \phi_{\tau}^{0}+\phi_{\tau}^{1}(1-\alpha)\bar{S}_{\tau}+X_{\tau}^{\max,\bar{S}},\nonumber
\end{equation}
for any $[0,T]$-valued stopping time $\tau$. Therefore $\phi_{t}^{0}+\phi_{t}^{1}(1-\alpha)\bar{S}_{t}+X_t^{\max,\bar{S}}$ is an optional strong $\mathbb{Q}$-supermartingale for any $\mathbb{Q}\in\mathcal{M}(\bar{S}; \lambda')$. Consider the subset
\begin{equation}
\mathcal{M}'(\bar{S}; \lambda')\triangleq \{\mathbb{Q}\in\mathcal{M}(\bar{S};\lambda'): \text{$X^{\max,\bar{S}}$ is a UI martingale under $\mathbb{Q}$}\}.\nonumber
\end{equation}
For any fixed $\mathbb{Q}\in\mathcal{M}'(\bar{S}; \lambda')$, as $\bar{S}\geq (1-\alpha)S$, we obtain that
\begin{align*}
\mathbb{E}^{\mathbb{Q}}[V(\phi^0,\phi^1)_T| B_{-}]&\leq \mathbb{E}^{\mathbb{Q}}[\phi_T^0+\phi_T^1(1-\alpha)\bar{S}_T+X_T^{\max,\bar{S}}|B_{-}]-\mathbb{E}^{\mathbb{Q}}[X_T^{\max,\bar{S}}|B_{-}]\\
&\leq \mathbb{E}^{\mathbb{Q}}[\phi_{\tau}^0+\phi_{\tau}^1(1-\alpha)\bar{S}_{\tau}+X_{\tau}^{\max,\bar{S}}|B_{-}]-\mathbb{E}^{\mathbb{Q}}[X_{\tau}^{\max,\bar{S}}|B_{-}]\\
&=\mathbb{E}^{\mathbb{Q}}[\phi_{\tau}^0+\phi_{\tau}^1(1-\alpha)\bar{S}_{\tau}|B_{-}]\leq \mathbb{E}^{\mathbb{Q}}[\phi_{\tau}^{0}+\phi_{\tau}^{1}(1-\alpha)^2 S_{\tau}|B_{-}]\\
&< \mathbb{E}^{\mathbb{Q}}[-\hat{X}_{\tau}^{\max,\tilde{S}}|B_{-}]\leq \mathbb{E}^{\mathbb{Q}}[-\hat{X}_{T}^{\max,\tilde{S}}|B_{-}].
\end{align*}

Similarly, for the same lower bound $X^{\max,\bar{S}}$ chosen above, we have $\phi_t^0+\phi_t^1\frac{1-\lambda}{1-\alpha}\bar{S}_t+X_t^{\max,\bar{S}}$ is an optional strong $\mathbb{Q}$-supermartingale for any $\mathbb{Q}\in\mathcal{M}(\bar{S}; \lambda')$. Again, pick one $\mathbb{Q}\in\mathcal{M}'(\bar{S}; \lambda')$, we have
\begin{align*}
\mathbb{E}^{\mathbb{Q}}[ V(\phi^0,\phi^1)_T| B_+]&\leq \mathbb{E}^{\mathbb{Q}}\Big[\phi_T^0+\phi_T^1\frac{1-\lambda}{1-\alpha}\bar{S}_T+X_T^{\max,\bar{S}} \Big|B_+\Big]-\mathbb{E}^{\mathbb{Q}}[X_T^{\max,\bar{S}}|B_+]\\
&\leq \mathbb{E}^{\mathbb{Q}}\Big[\phi_{\tau}^0+\phi_{\tau}^1\frac{1-\lambda}{1-\alpha}\bar{S}_{\tau}+X_{\tau}^{\max,\bar{S}} \Big|B_+\Big]-\mathbb{E}^{\mathbb{Q}}[X_{\tau}^{\max,\bar{S}}|B_+]\\
&= \mathbb{E}^{\mathbb{Q}}\Big[\phi_{\tau}^0+\phi_{\tau}^1\frac{1-\lambda}{1-\alpha}\bar{S}_{\tau} \Big|B_+\Big]\leq \mathbb{E}^{\mathbb{Q}}\Big[\phi_{\tau}^0+\phi_{\tau}^{1}\frac{1-\lambda}{1-\alpha}S_{\tau}\Big|B_+\Big]\\
&< \mathbb{E}^{\mathbb{Q}}[-\hat{X}_{\tau}^{\max,\tilde{S}}|B_+]\leq \mathbb{E}^{\mathbb{Q}}[-\hat{X}_{T}^{\max,\tilde{S}}|B_+].\nonumber
\end{align*}
As either $\mathbb{P}(B_+)>0$ or $\mathbb{P}(B_{-})>0$, we arrive at a contradiction to $V(\phi^0,\phi^1)_T\geq -\hat{X}_T^{\max,\tilde{S}}$ $\mathbb{P}$-a.s., and our conclusion holds.
\end{proof}

\subsection{Proof of Theorem $\ref{mainthm}$}

The following proposition plays a central role to build a bipolar result required in the proof of Theorem $\ref{mainthm}$.

\begin{proposition}\label{dualprop}
Let Assumption $\ref{Assum}$, $\ref{assE}$ and $\ref{assNo}$ hold. The families $(\mathcal{C}(x,q))_{(x,q)\in\mathcal{K}}$ and $(\mathcal{D}(y,r))_{(y,r)\in\mathcal{L}}$ defined in $(\ref{primsp})$ and $(\ref{dualsp})$ have the following properties:
\begin{itemize}
\item[(i)] For any $(x,q)\in\mathcal{K}$, the set $\mathcal{C}(x,q)$ contains a strictly positive constant. A nonnegative function $g$ belongs to $\mathcal{C}(x,q)$ if and only if
\begin{equation}\label{bipo1}
\mathbb{E}[gh]\leq xy+q\cdot r,\ \ \text{for all}\ (y,r)\in\mathcal{L}\ \text{and}\ h\in\mathcal{D}(y,r).
\end{equation}
\item[(ii)] For any $(y,r)\in\mathcal{L}$, the set $\mathcal{D}(y,r)$ contains a strictly positive random variable. A nonnegative function $h$ belongs to $\mathcal{D}(y,r)$ if and only if
\begin{equation}\label{bipo2}
\mathbb{E}[gh]\leq xy+q\cdot r,\ \ \text{for all}\ (x,q)\in\mathcal{K}\ \text{and}\ g\in\mathcal{C}(x,q).
\end{equation}
\end{itemize}
\end{proposition}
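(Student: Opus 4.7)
The plan is to prove both bipolar statements through three ingredients: the defining inequality of $\mathcal{Y}(y,r)$ (which gives the polar inclusion essentially for free), closedness in $L^0$ of the two solid hulls, and the Brannath--Schachermayer bipolar theorem for convex solid subsets of $L^0_+$. The forward implications in (i) and (ii) are immediate: if $g\in\mathcal{C}(x,q)$ and $h\in\mathcal{D}(y,r)$, pick dominators $V_T\in\mathcal{H}(x,q)$ and $Y_T\in\mathcal{Y}(y,r)$; then
\begin{equation}
\mathbb{E}[gh]\leq \mathbb{E}[(V_T+q\cdot\mathcal{E}_T)Y_T]\leq xy+q\cdot r \nonumber
\end{equation}
by the very definition \eqref{dualsp}. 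Strict positivity is a matter of interior-point considerations: since $(x,q)$ is interior in $\mathcal{K}$, some $(x',q)\in\mathcal{K}$ with $x'<x$ exists and the constant $x-x'>0$ lies in $\mathcal{C}(x,q)$ (augment the cash component of any $V'\in\mathcal{H}(x',q)$); dually, $y\mathcal{B}\subset\mathcal{Z}(y)$ provides strictly positive density processes which land in $\mathcal{D}(y,r)$ because $(y,r)\in\mathcal{L}$ is chosen precisely to make the defining inequality of $\mathcal{Y}(y,r)$ compatible with the superhedging bound on $\mathcal{K}$.

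The heart of the proof is showing that $\mathcal{C}(x,q)$ is closed under convergence in probability; convexity, solidity, and $L^0$-boundedness are routine, with boundedness following from $g\leq X^{\max,\tilde{S}}_T+q\cdot\mathcal{E}_T$ combined with Assumption \ref{assE} and the no-arbitrage content of Assumption \ref{Assum}. Given $g_n\in\mathcal{C}(x,q)$ with $g_n\to g$ in probability, select acceptable $(\phi_n^0,\phi_n^1)$ such that $V(\phi_n^0,\phi_n^1)_T+q\cdot\mathcal{E}_T\geq g_n$, pass to a forward convex combination, and apply Czichowsky--Schachermayer type compactness for predictable processes of finite variation to extract a subsequential limit $(\phi^0,\phi^1)$. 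Two things remain to check. First, self-financingness: pass to the limit in the Riemann--Stieltjes inequality \eqref{self}, handling left and right jumps carefully since $S$ is only \cadlag. Second, acceptability: here Proposition \ref{auxlem} is the decisive tool, since it reduces the stochastic lower bound at every stopping time to the terminal inequality $V(\phi^0,\phi^1)_T\geq -\hat{X}^{\max,\tilde{S}}_T$ for each $\tilde{S}\in\mathcal{S}$, and the latter follows by a Fatou argument from $V(\phi_n^0,\phi_n^1)_T+\hat{X}^{\max,\tilde{S}}_T\geq 0$.

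Once $\mathcal{C}(x,q)$ is closed, the Brannath--Schachermayer bipolar theorem identifies it with its bipolar, and combining this with the explicit polar computed from $\mathcal{D}(y,r)$ yields the converse in (i). Part (ii) follows symmetrically; the closedness of $\mathcal{D}(y,r)$ under convergence in probability rests on the Czichowsky--Schachermayer compactness theorem for optional strong supermartingale deflators together with the stability of the constraint $\mathbb{E}[Y_T(V_T+q\cdot\mathcal{E}_T)]\leq xy+q\cdot r$ under countable convex combinations and Fatou limits. The main obstacle in this whole argument is the closedness of $\mathcal{C}(x,q)$ — specifically, passing to the limit in the self-financing condition \eqref{self} in the one-stock setting where convex solvency cones are unavailable — and it is precisely the backward implication of Proposition \ref{auxlem} that compensates for the missing supermartingale-based closedness argument used in \cite{Campi06,Yu}.
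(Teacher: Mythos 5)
Your forward implications and the construction of a strictly positive constant in $\mathcal{C}(x,q)$ are correct and match the paper, and you have correctly identified that Proposition \ref{auxlem} is what replaces the supermartingale argument of \cite{Campi06,Yu} in the closedness of the acceptable-portfolio sets (this is Lemma \ref{fatoulem} in the paper). However, the route you propose for the converse of (i) has a genuine gap. Closedness, convexity and solidity of $\mathcal{C}(x,q)$ plus the Brannath--Schachermayer theorem only yield $\mathcal{C}(x,q)=\mathcal{C}(x,q)^{\circ\circ}$, i.e., that $g\in\mathcal{C}(x,q)$ is equivalent to $\mathbb{E}[gh]\leq 1$ for \emph{every} $h$ in the polar of $\mathcal{C}(x,q)$. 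Statement (i) is strictly stronger: it asserts that testing $g$ against the much smaller family $\bigcup_{(y,r)\in\mathcal{L}}\mathcal{D}(y,r)$ already suffices. Making that reduction is exactly the superhedging theorem, namely that $\sup_{p\in\mathcal{P}}\sup_{\mathbb{Q}\in\mathcal{M}(p)}\big(\mathbb{E}^{\mathbb{Q}}[g]-q\cdot p\big)\leq x$ forces the existence of an acceptable portfolio with $(\phi^0_T,\phi^1_T)=(g-q\cdot\mathcal{E}_T,0)$ dominated appropriately; this is the content of Lemmas \ref{charac} and \ref{lemma11}, which in turn rest on the $\mathrm{weak}^{\ast}$-closedness and polar characterization of $\mathcal{U}_0^{\infty}$ (Lemmas \ref{lemZ} and \ref{lem13}, proved only for bounded $S$), a localization via $\tau_n=\inf\{t:S_t\geq n\}\wedge T$ with a nontrivial pasting of $\lambda_n$-CPS for $S^{\tau_n}$ into $\lambda$-CPS for $S$, and the Moore--Osgood exchange of limits permitted by Assumption \ref{assE}. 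Your proposal contains none of this; the hard analytic content of the proposition is not the closedness of $\mathcal{C}(x,q)$ but the identification of its polar with the CPS densities, and that identification cannot be extracted from the abstract bipolar theorem.

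Two further points. First, (ii) does not ``follow symmetrically'': the paper's argument for (ii) reduces to $(y,r)=(1,p)$, applies Lemma A.1 of \cite{Chris22} to the single set $\mathcal{C}(1,0)=\mathcal{V}_1^{\text{adm}}$ to produce $(Y^0,Y^1)\in\mathcal{Z}(1)$ with $h\leq Y^0_T$, and then pastes $h$ in at the terminal time; the constraint defining $\mathcal{Y}(1,p)$ is then read off from \eqref{bipo2} directly rather than from any compactness of deflators. Second, the existence of a \emph{strictly positive} element of $\mathcal{D}(y,r)$ is not automatic from $y\mathcal{B}\subset\mathcal{Z}(y)$: one must exhibit a CPS $\mathbb{Q}$ with $\mathbb{E}^{\mathbb{Q}}[\mathcal{E}_T]=r/y$, i.e., the non-emptiness of $\mathcal{M}(p)$ for every $p\in\mathcal{P}$ (Lemma \ref{lemma9}), which again relies on the superhedging theorem and Assumption \ref{assNo}. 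As written, your argument assumes what it needs to prove at precisely these points.
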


The proof of Proposition $\ref{dualprop}$ is based on a sequel of auxiliary lemmas. Lemma $\ref{ineqC}$ together with Lemma $\ref{charac}$ below provide us a super-hedging result. In particular, the characterization of the set $\mathcal{C}(x,q)$ below gives one side of our super-hedging theorem.
\begin{lemma}\label{ineqC}
If $(x,q)\in\mathcal{K}$, for any $g\in\mathcal{C}(x,q)$, we have
\begin{equation}\label{gineq}
\mathbb{E}^\mathbb{Q}[g]\leq x+\mathbb{E}^{\mathbb{Q}}[q\cdot\mathcal{E}_T],\ \ \forall \mathbb{Q}\in\mathcal{M}.
\end{equation}
\end{lemma}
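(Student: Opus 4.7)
The plan is to reduce $(\ref{gineq})$ to the supermartingale bound $\mathbb{E}^{\mathbb{Q}}[\phi_T^0]\leq x$ and then exploit the $\lambda$-CPS structure associated to $\mathbb{Q}$. Unpacking the definitions of $\mathcal{C}(x,q)$, $\mathcal{H}(x,q)$ and $\mathcal{V}_x$, any $g\in\mathcal{C}(x,q)$ obeys $0\leq g\leq V_T+q\cdot\mathcal{E}_T$ with $V_T+q\cdot\mathcal{E}_T\geq 0$ and $V_T=\phi_T^0$ for some acceptable $(\phi^0,\phi^1)\in\mathcal{A}_x$ satisfying $\phi_T^1=0$. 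Assumption $\ref{assE}$ ensures $\mathbb{E}^{\mathbb{Q}}[q\cdot\mathcal{E}_T]$ is finite for every $\mathbb{Q}\in\mathcal{M}$, so the desired inequality reduces to $\mathbb{E}^{\mathbb{Q}}[\phi_T^0]\leq x$.

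For $\mathbb{Q}\in\mathcal{M}$ pick $\tilde{S}\in\mathcal{S}$ with $(\mathbb{Q},\tilde{S})$ a $\lambda$-CPS. Because $\tilde{S}\in[(1-\lambda)S,S]$, a case split on the sign of $\phi^1$ yields the pointwise bound $V(\phi)_t\leq \phi_t^0+\phi_t^1\tilde{S}_t$, which collapses to an equality at $t=T$ since $\phi_T^1=0$. Combining the self-financing condition with $\tilde{S}\in[(1-\lambda)S,S]$, Proposition~$1.6$ of \cite{Sch33}---the same ingredient already invoked in the proof of Proposition $\ref{auxlem}$---identifies $(\phi_t^0+\phi_t^1\tilde{S}_t)_{0\leq t\leq T}$ as a local optional strong $\mathbb{Q}$-supermartingale starting at $x$.

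The acceptability bound $V(\phi)_t\geq -X_t^{\max,\tilde{S}}$ combined with $V(\phi)\leq \phi^0+\phi^1\tilde{S}$ yields $\phi^0+\phi^1\tilde{S}+X^{\max,\tilde{S}}\geq 0$. Since $X^{\max,\tilde{S}}\in\mathcal{X}(\tilde{S},a)$ is a non-negative $\mathbb{Q}$-local martingale (and hence a $\mathbb{Q}$-supermartingale), this sum is a non-negative local optional strong $\mathbb{Q}$-supermartingale that Fatou's lemma upgrades to a true optional strong $\mathbb{Q}$-supermartingale. To sharpen the resulting inequality I then localize $\phi^0+\phi^1\tilde{S}$ directly along a sequence $\tau_n\uparrow T$ tailored to $X^{\max,\tilde{S}}$---for instance $\tau_n=\inf\{t:X^{\max,\tilde{S}}_t>n\}\wedge T$---so that each stopped process is a true $\mathbb{Q}$-supermartingale with $\mathbb{E}^{\mathbb{Q}}[(\phi^0+\phi^1\tilde{S})_{\tau_n}]\leq x$, and pass to the limit $n\to\infty$ using $(\phi^0+\phi^1\tilde{S})_{\tau_n}\to \phi_T^0$ a.s.\ (which holds because $X^{\max,\tilde{S}}$ is pathwise finite on $[0,T]$, so $\tau_n=T$ eventually on each sample path).

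The main obstacle is justifying the final limit interchange. The negative parts satisfy $((\phi^0+\phi^1\tilde{S})_{\tau_n})^-\leq X^{\max,\tilde{S}}_{\tau_n}$, which is $L^1$-bounded by the supermartingale property of $X^{\max,\tilde{S}}$, but genuine uniform integrability along the localizing sequence is what is needed for Fatou to deliver $\mathbb{E}^{\mathbb{Q}}[\phi_T^0]\leq \liminf_n\mathbb{E}^{\mathbb{Q}}[(\phi^0+\phi^1\tilde{S})_{\tau_n}]\leq x$. A crude Fatou applied to the non-negative sum $\phi^0+\phi^1\tilde{S}+X^{\max,\tilde{S}}$ alone yields only $\mathbb{E}^{\mathbb{Q}}[\phi_T^0]\leq x+a-\mathbb{E}^{\mathbb{Q}}[X_T^{\max,\tilde{S}}]$, which is strict unless $X^{\max,\tilde{S}}$ is a $\mathbb{Q}$-martingale---by the Delbaen--Schachermayer characterization of maximal elements this holds only under the specific $\mathbb{Q}^{\ast}\in\mathcal{M}(\tilde{S})$ associated with $X^{\max,\tilde{S}}$. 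Closing the gap for a generic $\mathbb{Q}\in\mathcal{M}$ therefore requires the careful localization described above, exploiting the cadlag structure on the finite horizon $[0,T]$ and the $L^1$-integrability of $X_T^{\max,\tilde{S}}$ to avoid any loss of mass; this is precisely the technical delicacy that distinguishes the acceptable-portfolio setting from the admissible one treated in \cite{Sch2}.
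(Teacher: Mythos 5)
Your reduction of $(\ref{gineq})$ to $\mathbb{E}^{\mathbb{Q}}[V_T]\leq x$, the bound $V(\phi)\leq\phi^0+\phi^1\tilde{S}$, and the Fatou upgrade of the nonnegative local optional strong supermartingale $\phi^0+\phi^1\tilde{S}+X^{\max,\tilde{S}}$ all match the first half of the paper's argument, and you correctly identify the crux: for a generic $\mathbb{Q}\in\mathcal{M}(\tilde{S})$ the maximal element $X^{\max,\tilde{S}}$ is only a $\mathbb{Q}$-supermartingale, so the crude argument yields $\mathbb{E}^{\mathbb{Q}}[V_T]\leq x+a-\mathbb{E}^{\mathbb{Q}}[X_T^{\max,\tilde{S}}]$, which is too weak whenever $\mathbb{E}^{\mathbb{Q}}[X_T^{\max,\tilde{S}}]<a$.

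The gap is that your proposed fix does not close this. Localizing along $\tau_n=\inf\{t:X_t^{\max,\tilde{S}}>n\}\wedge T$ does give $\mathbb{E}^{\mathbb{Q}}[(\phi^0+\phi^1\tilde{S})_{\tau_n}]\leq x$ for each $n$ (the stopped $X^{\max,\tilde{S}}$ is a UI $\mathbb{Q}$-martingale, so no mass is lost \emph{before} $T$), but the passage to the limit requires exactly the uniform integrability of $\{X_{\tau_n}^{\max,\tilde{S}}\}_n$ under $\mathbb{Q}$ that you flag as needed. Since $\mathbb{E}^{\mathbb{Q}}[X_{\tau_n}^{\max,\tilde{S}}]=a$ for every $n$ while $X_{\tau_n}^{\max,\tilde{S}}\to X_T^{\max,\tilde{S}}$ a.s., that uniform integrability is \emph{equivalent} to $\mathbb{E}^{\mathbb{Q}}[X_T^{\max,\tilde{S}}]=a$, i.e.\ to $X^{\max,\tilde{S}}$ being a UI $\mathbb{Q}$-martingale in the first place — the case you are trying to go beyond. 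Neither the finite horizon, the \cadlag structure, nor the integrability of $X_T^{\max,\tilde{S}}$ prevents the loss of mass at $T$ (a strict local martingale such as an inverse Bessel(3)-type process on $[0,T]$ shows this), so the localization collapses back to the insufficient bound. The paper's proof instead proves the inequality only on the subset $\mathcal{M}'(\tilde{S})$ of measures under which $X^{\max,\tilde{S}}$ is a UI martingale, invokes Theorem $5.2$ of \cite{DS97} to get that $\mathcal{M}'(\tilde{S})$ is dense in $\mathcal{M}(\tilde{S})$ in the $\mathbb{L}^1$-norm, and then transfers the inequality to an arbitrary $\mathbb{Q}\in\mathcal{M}(\tilde{S})$ by truncating $\gamma_T=V_T+q\cdot\mathcal{E}_T\geq 0$ and exchanging the double limit via the Moore--Osgood theorem — this is also precisely where Assumption $\ref{assE}$ is consumed, whereas your argument uses it only to make $\mathbb{E}^{\mathbb{Q}}[q\cdot\mathcal{E}_T]$ finite. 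You need some version of this density-plus-approximation step; a pathwise localization under a fixed $\mathbb{Q}$ cannot substitute for it.
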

\begin{proof}
For any $g\in\mathcal{C}(x,q)$, there exits a $V\in\mathcal{H}(x,q)$, and $g\leq V_T+q\cdot\mathcal{E}_T$. It is hence enough to verify $\mathbb{E}^{\mathbb{Q}}[V_T+q\cdot\mathcal{E}_T]\leq x+\mathbb{E}^{\mathbb{Q}}[q\cdot\mathcal{E}_T]$, $\forall \mathbb{Q}\in\mathcal{M}$, which is equivalent to show that
\begin{equation}\label{ineq1}
\mathbb{E}^{\mathbb{Q}}[V_T+q\cdot\mathcal{E}_T]\leq x+\mathbb{E}^{\mathbb{Q}}[q\cdot\mathcal{E}_T],\ \ \forall \mathbb{Q}\in\mathcal{M}(\tilde{S}),\ \ \forall \tilde{S}\in\mathcal{S}.
\end{equation}
By the definition of acceptable portfolios, there exists a constant $a>0$ such that for each fixed $\tilde{S}\in\mathcal{S}$, there exists a maximal element $X^{\max,\tilde{S}}\in\mathcal{X}(\tilde{S}, a)$ with $V_{\tau}+X_{\tau}^{\max,\tilde{S}}\geq 0$ for all $[0,T]$-valued stopping times. Therefore we can
rewrite
\begin{equation}\label{rew}
\mathbb{E}^{\mathbb{Q}}[V_T+q\cdot\mathcal{E}_T]=\mathbb{E}^{\mathbb{Q}}[V_T+X_T^{\max,\tilde{S}}]-\mathbb{E}^{\mathbb{Q}}[X_T^{\max,\tilde{S}}]+\mathbb{E}^{\mathbb{Q}}[q\cdot\mathcal{E}_T].
\end{equation}
Define the set
\begin{equation}
\mathcal{M}'(\tilde{S})\triangleq \{\mathbb{Q}\in\mathcal{M}(\tilde{S}):\  X^{\max,\tilde{S}}\ \text{is a UI martingale under $\mathbb{Q}$}\},\nonumber
\end{equation}
Theorem $5.2$ of \cite{DS97} asserts that $\mathcal{M}'(\tilde{S})$ is not empty and dense in $\mathcal{M}(\tilde{S})$ with respect to the norm topology of $\mathbb{L}^1(\Omega, \mathcal{F}, \mathbb{P})$. We shall first verify that the inequality $(\ref{ineq1})$ holds for all $\mathbb{Q}\in\mathcal{M}'(\tilde{S})$. As $X_T^{\max,\tilde{S}}$ is a UI martingale under $\mathbb{Q}\in\mathcal{M}'(\tilde{S})$, it is sufficient to verify that
\begin{equation}\label{sinq}
\mathbb{E}^{\mathbb{Q}}[V_T+X_T^{\max,\tilde{S}}]\leq x+a,\ \ \forall \mathbb{Q}\in\mathcal{M}'(\tilde{S}).
\end{equation}
As $\tilde{S}\in[(1-\lambda)S, S]$, it is easy to see that $V_t\leq \tilde{V}_t$ where $\tilde{V}_t\triangleq \phi_t^0+\phi_t^1\tilde{S}_t$ for $t\in[0,T]$. Follow the proof of Proposition $1.6$ of \cite{Sch2}, we get that $\tilde{V}_t$ is a local optional strong supermartingale under each $\mathbb{Q}\in\mathcal{M}'(\tilde{S})$, therefore $\tilde{V}_t+X_t^{\max,\tilde{S}}$ is also a local optional strong supermartingale under $\mathbb{Q}$. Since $\tilde{V}_{t}+X_t^{\max,\tilde{S}}\geq V_t+X_t^{\max,\tilde{S}}\geq 0$, we can deduce that $\tilde{V}_t+X_t^{\max,\tilde{S}}$ is an optional strong supermartingale under $\mathbb{Q}$ by Fatou's Lemma. We obtain that
\begin{equation}
\mathbb{E}^{\mathbb{Q}}[\tilde{V}_T+X_T^{\max,\tilde{S}}]\leq x+a,\ \ \forall \mathbb{Q}\in\mathcal{M}'(\tilde{S}),\nonumber
\end{equation}
which implies that $(\ref{sinq})$ holds. Hence, it follows that for each $\tilde{S}\in\mathcal{S}$,
\begin{equation}\label{ineq2}
\mathbb{E}^{\mathbb{Q}}[V_T+q\cdot\mathcal{E}_T]\leq x+\mathbb{E}^{\mathbb{Q}}[q\cdot\mathcal{E}_T],\ \ \forall \mathbb{Q}\in\mathcal{M}'(\tilde{S}).
\end{equation}
Denote $\gamma_T\triangleq V_T+q\cdot\mathcal{E}_T$. The density property of $\mathcal{M}'(\tilde{S})$ in $\mathcal{\tilde{S}}$ in the norm topology of $\mathbb{L}^1$ implies the existence of a sequence of $\mathbb{Q}^n\in\mathcal{M}'(\tilde{S})$ and by $(\ref{ineq2})$, we have
\begin{align*}
\mathbb{E}^{\mathbb{Q}}[\gamma_T]&=\lim_{m\rightarrow\infty}\mathbb{E}^{\mathbb{Q}}[\gamma_T\mathbf{1}_{\{\gamma_T\leq m\}}]=\lim_{m\rightarrow\infty}\lim_{n\rightarrow\infty}\mathbb{E}^{\mathbb{Q}^n}[\gamma_T\mathbf{1}_{\{\gamma_T\leq m\}}]\\
&\leq \lim_{n\rightarrow\infty}\mathbb{E}^{\mathbb{Q}^n}[\gamma_T]\leq x+\lim_{n\rightarrow\infty}\mathbb{E}^{\mathbb{Q}^n}[q\cdot\mathcal{E}_T].
\end{align*}

Clearly for $m>0$ and each $1\leq i\leq N$, we have
\begin{equation}
\mathcal{E}^i_T\mathbf{1}_{\{\mathcal{E}^i_T>m\}}\leq \sum_{i=1}^{N}\mathcal{E}_T^i\mathbf{1}_{\{\sum_{i=1}^{N}\mathcal{E}_T^i >m\}},\ \ \mathbb{P}-\text{a.s.}
\end{equation}
The assumption that $\mathcal{E}^i_T\geq 0$ a.s. under $\mathbb{P}$ implies $\mathcal{E}^i_T\geq 0$ a.s. under $\mathbb{Q}\in\mathcal{M}$, it follows that
\begin{equation}\label{woo}
\lim_{m\rightarrow\infty}\sup_{\mathbb{Q}\in\mathcal{M}}\mathbb{E}^{\mathbb{Q}}[\mathcal{E}^i_T\mathbf{1}_{\{\mathcal{E}^i_T>m\}}]=0,\ \ 1\leq i\leq N.
\end{equation}

Given Assumption $\ref{assE}$, Moore-Osgood Theorem (see Theorem $5$, p.$102$ of \cite{13}) and Monotone Convergence Theorem give us that
\begin{align*}
\lim_{n\rightarrow\infty}\mathbb{E}^{\mathbb{Q}^n}[\mathcal{E}^i_T]=&\lim_{n\rightarrow\infty}\lim_{m\rightarrow\infty}\mathbb{E}^{\mathbb{Q}^n}[\mathcal{E}^i_T\mathbf{1}_{\{\mathcal{E}^i_T\leq m\}}]=\lim_{m\rightarrow\infty}\lim_{n\rightarrow\infty}\mathbb{E}^{\mathbb{Q}^n}[\mathcal{E}^i_T\mathbf{1}_{\{\mathcal{E}^i_T\leq m\}}]\\
=&\lim_{m\rightarrow\infty}\mathbb{E}^{\mathbb{Q}}[\mathcal{E}^i_T\mathbf{1}_{\{\mathcal{E}^i_T\leq m\}}]=\mathbb{E}^{\mathbb{Q}}[\mathcal{E}^i_T],\ \ 1\leq i\leq N. \nonumber
\end{align*}
We thereby obtain that $\lim_{n\rightarrow\infty}\mathbb{E}^{\mathbb{Q}^n}[q\cdot\mathcal{E}_T]=\mathbb{E}^{\mathbb{Q}}[q\cdot\mathcal{E}_T]$. It follows that $(\ref{ineq1})$ holds for any $\mathbb{Q}\in\mathcal{M}(\tilde{S})$ and any $\tilde{S}\in\mathcal{S}$, which completes the proof.
\end{proof}

For the other side of the super-hedging result, we need more delicate work. Fix a constant $\hat{a}>0$ and define $\mathcal{A}_{0,\hat{a}}$ as  the set of all pairs $\phi=(\phi^0, \phi^1)\in\mathcal{A}_0$ and for each $\tilde{S}\in\mathcal{S}$, there exits a $\hat{X}^{\max,\tilde{S}}\in\mathcal{X}(\tilde{S}, \hat{a})$ such that $V(\phi)_T+\hat{X}_T^{\max,\tilde{S}}\geq 0$. We intend to show that elements in the set  $\mathcal{A}_{0,\hat{a}}$ are bounded in probability. In fact, any convex combinations of the elements in $\mathcal{A}_{0,\hat{a}}$ are also bounded in probability. This is the first step to obtain the almost surely convergence result for any sequence in $\mathcal{A}_{0,\hat{a}}$ by passing to convex combinations.

\begin{lemma}\label{lemm}
Let $S$ and $0<\lambda<1$ satisfy the previous assumptions and suppose that $(CPS^{\lambda'})$ is satisfied in the local sense for some $0<\lambda'<\lambda$. For $\hat{a}>0$, we can find one probability measure $\mathbb{Q}\sim\mathbb{P}$ and there exist constants $C_0>0$ and $C_1>0$ such that for all $(\phi^0,\phi^1)\in \mathcal{A}_{0,\hat{a}}$, we have
\begin{equation}\label{pro1}
\mathbb{E}^{\mathbb{Q}}\Big[ \|\phi^0\|_T\Big]\leq C_0\hat{a},
\end{equation}
and
\begin{equation}\label{pro2}
\mathbb{E}^{\mathbb{Q}}\Big[ \|\phi^1\|_T\Big]\leq C_1\hat{a},
\end{equation}
where $\|\phi\|$ denotes the total variation of $\phi$.
\end{lemma}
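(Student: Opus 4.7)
\subsection*{Proof proposal}

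\textbf{Plan.} The idea is to exploit a strictly smaller transaction cost level to produce a CPS lying \emph{in the interior} of the $\lambda$-bid-ask spread, then read off a quantitative drift estimate for the shadow mark-to-market process. First I construct the CPS with slack: under $(CPS^{\lambda'})$ in the local sense, pick a $\lambda'$-CPS $(\mathbb{Q}',\tilde S)$ and rescale to $\bar S_t \triangleq (1-\alpha)\tilde S_t$ with $\alpha = (\lambda-\lambda')/(2(1-\lambda'))$. A direct check using $\tilde S_t\in[(1-\lambda')S_t,S_t]$ yields $\bar S_t\in[(1-\lambda)S_t+\beta S_t,\,S_t-\beta S_t]$ with $\beta = (\lambda-\lambda')/2 > 0$, so $(\mathbb{Q}',\bar S)$ is a $\lambda$-CPS with a two-sided cushion of width $\beta S$. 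By the density argument of Theorem $5.2$ in \cite{DS97}, I may further assume that the maximal element $\hat X^{\max,\bar S}\in\mathcal X(\bar S,\hat a)$ prescribed by the acceptability of $\phi$ is a uniformly integrable $\mathbb{Q}'$-martingale.

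\textbf{Shadow-value drift estimate.} Define $\bar V_t\triangleq \phi^0_t+\phi^1_t\bar S_t$ and split $\bar V_t = \int_0^t \phi^1_{s-}\,d\bar S_s + A_t$, where $A$ collects the $d\phi^0 + \bar S\,d\phi^1$ contributions (continuous part, predictable left-jumps, and right-jumps). Substituting the self-financing inequality (\ref{self}) into $dA$ and applying the two-sided cushion gives, across all three pieces,
\begin{equation}
dA_t \;\leq\; (\bar S_{t-}-S_{t-})\,d\phi^{1,\uparrow}_t + ((1-\lambda)S_{t-}-\bar S_{t-})\,d\phi^{1,\downarrow}_t \;\leq\; -\beta\,S_{t-}\,d\|\phi^1\|_t,\nonumber
\end{equation}
so $A$ is predictable decreasing with $-A_T\geq\beta\int_0^T S_t\,d\|\phi^1\|_t$. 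Since $\bar S\in[(1-\lambda)S,S]$, one checks $\bar V_t\geq V(\phi)_t\geq -\hat X^{\max,\bar S}_t$. Consequently the local $\mathbb{Q}'$-martingale $M_t\triangleq \int_0^t \phi^1_{s-}d\bar S_s = \bar V_t - A_t$ satisfies $M\geq \bar V\geq -\hat X^{\max,\bar S}$, putting it above a UI $\mathbb{Q}'$-martingale and hence making it a true $\mathbb{Q}'$-supermartingale. Thus $\mathbb{E}^{\mathbb{Q}'}[M_T]\leq 0$ and, combined with $\mathbb{E}^{\mathbb{Q}'}[\bar V_T]\geq -\hat a$,
\begin{equation}
\beta\,\mathbb{E}^{\mathbb{Q}'}\!\left[\int_0^T S_t\,d\|\phi^1\|_t\right]\;\leq\;\hat a.\nonumber
\end{equation}

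\textbf{Measure change and closure.} To produce a single $\mathbb Q\sim\mathbb P$ handling both (\ref{pro1}) and (\ref{pro2}), set $\frac{d\mathbb Q}{d\mathbb{Q}'}\triangleq \frac{1}{c}\cdot\frac{S_*}{1+S^*}$, where $S_* = \inf_{[0,T]}S_t$, $S^*=\sup_{[0,T]}S_t$, and $c=\mathbb{E}^{\mathbb{Q}'}[S_*/(1+S^*)]$. Since $0<S_*\leq S_0<\infty$ and $S^*\geq S_*>0$ a.s.\ (using triviality of $\mathcal{F}_0$ and càdlàg positivity), the density lies in $(0,1/c]$ and is strictly positive, so $\mathbb Q\sim\mathbb{Q}'\sim\mathbb P$. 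Exploiting $S_*\leq S_t$,
\begin{equation}
\mathbb E^{\mathbb Q}[\|\phi^1\|_T]\;\leq\;\tfrac1{c}\,\mathbb E^{\mathbb{Q}'}\!\left[\int_0^T S_t\,d\|\phi^1\|_t\right]\;\leq\;C_1\hat a,\nonumber
\end{equation}
giving (\ref{pro2}). For (\ref{pro1}), the Jordan-decomposition inequality $d\phi^{0,\uparrow}_t\leq (1-\lambda)S_t\,d\phi^{1,\downarrow}_t$ extracted from (\ref{self}) (noting that $d\phi^{1,\uparrow}$ and $d\phi^{1,\downarrow}$ have disjoint support) yields $\phi^{0,\uparrow}_T\leq \int_0^T S_t\,d\|\phi^1\|_t$, while $V(\phi)_T\geq -\hat X^{\max,\bar S}_T$ gives the pointwise lower bound $\phi^0_T\geq -\hat X^{\max,\bar S}_T - S_T\|\phi^1\|_T$. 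Using $\|\phi^0\|_T = 2\phi^{0,\uparrow}_T - \phi^0_T$,
\begin{equation}
\|\phi^0\|_T\;\leq\;2\!\int_0^T S_t\,d\|\phi^1\|_t + \hat X^{\max,\bar S}_T + S_T\,\|\phi^1\|_T.\nonumber
\end{equation}
Each summand is $\mathbb{Q}$-integrable with expectation $\leq$ const $\cdot\,\hat a$: the first by the previous step (using $S_*/(1+S^*)\leq 1$), the second by UI of $\hat X^{\max,\bar S}$ under $\mathbb{Q}'$, and the third via $S_T/(1+S^*)\leq 1$, reducing it to $\mathbb{E}^{\mathbb{Q}'}[S_*\|\phi^1\|_T]\leq \mathbb{E}^{\mathbb{Q}'}[\int S\,d\|\phi^1\|]$. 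Hence $\mathbb E^{\mathbb Q}[\|\phi^0\|_T]\leq C_0\hat a$.

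\textbf{Main obstacle.} The principal technical difficulty is executing the drift computation rigorously in the presence of the three-way jump structure: $\phi^0,\phi^1$ are predictable of finite variation and $\bar S$ is only càdlàg, so the product-rule expansion of $\phi^1\bar S$ and the self-financing condition (\ref{self}) both split into continuous, predictable-left-jump, and right-jump pieces, and the cushion inequality $\bar S - S\leq -\beta S$, $(1-\lambda)S-\bar S\leq -\beta S$ has to be verified for each piece with the correctly oriented limit of $\bar S$. A secondary subtlety is promoting the local $\mathbb{Q}'$-martingale $M$ to a true supermartingale, which requires the pointwise lower bound $M\geq -\hat X^{\max,\bar S}$ together with the UI property from $\mathcal M'(\bar S)$.
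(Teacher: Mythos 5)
Your argument is correct in substance but takes a genuinely different route from the paper's. The paper perturbs the \emph{portfolio}: it sets $(\phi^0)'=\phi^0+\frac{\lambda-\lambda'}{1-\lambda}\phi^{0,\uparrow}$, observes (via Lemma $3.1$ of \cite{Sch33}) that $\phi'$ is self-financing at the smaller level $\lambda'$, and reads the bound on $\mathbb{E}^{\mathbb{Q}}[\phi_T^{0,\uparrow}]$ directly off the supermartingale property of $(\phi^0)'+(\phi^1)'\tilde S$ under a $\lambda'$-CPS; the bound on $\phi^{1}$ is then derived second, from $d\phi^{1,\uparrow}\leq d\phi^{0,\downarrow}/S$ together with a rather informal control of $\inf_t S_t$. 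You instead perturb the \emph{price}: shrinking a $\lambda'$-CPS into the interior of the $\lambda$-spread yields a two-sided cushion $\beta S$, which converts the self-financing inequality into a quantitative negative drift $dA\leq-\beta S\,d\|\phi^1\|$, so the weighted variation of $\phi^1$ is bounded first and $\|\phi^0\|$ is bootstrapped from it. These are the two dual ways of exploiting the gap $\lambda-\lambda'$, with the order of the two estimates reversed. Your explicit change of measure with density proportional to $S_*/(1+S^*)$ is a genuine improvement in rigor over the paper's treatment of $(\ref{pro2})$, which ends with ``it is easy to derive a control''; your route delivers a single $\mathbb{Q}\sim\mathbb{P}$ and explicit constants in one stroke.

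Three loose ends should be tightened. First, you ``may further assume'' that $\hat X^{\max,\bar S}$ is a UI $\mathbb{Q}'$-martingale, but the maximal element depends on $\phi$ while $\mathbb{Q}'$ must not; a single $\mathbb{Q}'$ cannot be assumed to render all of them UI. Fortunately the UI property is not actually needed: since $M_T+\hat X_T^{\max,\bar S}$ is the terminal value of a nonnegative local martingale started at $\hat a$, hence of a supermartingale, you get $-\mathbb{E}^{\mathbb{Q}'}[A_T]=\mathbb{E}^{\mathbb{Q}'}[M_T-\bar V_T]\leq\mathbb{E}^{\mathbb{Q}'}[M_T+\hat X_T^{\max,\bar S}]\leq\hat a$ directly, and $\mathbb{E}^{\mathbb{Q}'}[\hat X_T^{\max,\bar S}]\leq\hat a$ by the supermartingale property wherever you use it later; alternatively one extends from the $\phi$-dependent dense set $\mathcal{M}'(\bar S)$ to a fixed $\mathbb{Q}'$ by Fatou, as the paper does. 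Second, the set $\mathcal{A}_{0,\hat a}$ only stipulates $V(\phi)_T\geq-\hat X_T^{\max,\bar S}$ at the terminal time, whereas your supermartingale promotion needs $V(\phi)_t\geq-\hat X_t^{\max,\bar S}$ at all stopping times with the same $\hat a$-sized element; this is exactly what Proposition $\ref{auxlem}$ supplies and should be cited. Third, $S_*>0$ a.s.\ does not follow from c\`adl\`ag strict positivity alone (left limits could vanish); it follows, as in the paper, from the fact that the nonnegative supermartingale $\tilde S\leq S$ with $\tilde S_T>0$ cannot touch zero. With these patches the proof is complete.
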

\begin{proof}
Fix $0<\lambda'<\lambda$ as above.  Consider $\tilde{S}\in\mathcal{S}(\lambda',S)$ such that $\tilde{S}_t\in[(1-\lambda')S_t,S_t]$ and $(\tilde{S}_t)_{0\leq t\leq T}$ is a local $\mathbb{Q}$-martingale for all $\mathbb{Q}\in\mathcal{M}(\tilde{S};\lambda')$. Because the assertion of the lemma is of local type, we can assume by choosing stopping, that $\tilde{S}$ is a true martingale. We may also assume that $\phi_T^1=0$ so that the position in stock is liquidated at time $T$.

Assume $(\phi^0,\phi^1)$ is an acceptable portfolio under transaction costs $\lambda$ and $(\phi_0^0, \phi_0^1)=(0,0)$. Define the new process $\phi'=((\phi^0)',(\phi^1)')$ by
\begin{equation}
\phi'_t=((\phi^0)_t',(\phi^1)_t')=\Big(\phi_t^0+\frac{\lambda-\lambda'}{1-\lambda}\phi_t^{0,\uparrow}, \phi_t^1 \Big),\ \ 0\leq t\leq T.\nonumber
\end{equation}

Thanks to the proof of Lemma $3.1$ of \cite{Sch33}, $((\phi^0)',(\phi^1)')$ is a self-financing process under the transaction costs $\lambda'$. As $(\phi^0,\phi^1)$ is acceptable under transaction costs $\lambda$, and for any $\lambda'<\lambda$, it is clear that $\mathcal{S}(\lambda', S)\subset\mathcal{S}$. It follows that there exists a constant $a>0$ and for each $\tilde{S}\in\mathcal{S}(\lambda', S)$, there exists a $X^{\max,\tilde{S}}\in\mathcal{X}(\tilde{S}, a)$ such that $V(\phi)_{\tau}\geq -X_{\tau}^{\max,\tilde{S}}$ a.s. Moreover, it is easy to see that $V_{\tau}(\phi')\geq V_{\tau}(\phi)$ by the definition of $\phi'$. Therefore, we obtain that $\phi'=((\phi^0)',(\phi^1)')$ is an acceptable portfolio under the smaller transaction costs $\lambda'$.

Following the proof of Proposition $1.6$ of \cite{Sch33}, we see that $V(\phi')_t\leq \tilde{V}(\phi')_t$ where $\tilde{V}(\phi')_t\triangleq (\phi^0)'_t+(\phi^1)'_t\tilde{S}$, $\tilde{S}\in\mathcal{S}(\lambda', S)$ and $\tilde{V}(\phi')$ is a local optional strong super-martingale under all $\mathbb{Q}\in\mathcal{M}(\tilde{S};\lambda')$. For each fixed $\tilde{S}\in\mathcal{S}(\lambda',S)$, by the definition of acceptable portfolio, there exists a constant $a$ and a maximal element $X^{\max,\tilde{S}}\in\mathcal{X}(\tilde{S}, a)$ such that $V(\phi')_t\geq -X^{\max,\tilde{S}}_t$. Hence, we get $\tilde{V}(\phi')_t+X_t^{\max,\tilde{S}}\geq 0$ is an optional strong supermartingale. For this fixed $X^{\max,\tilde{S}}$, consider the set
\begin{equation}
\mathcal{M}'(\tilde{S}; \lambda')\triangleq \{\mathbb{Q}\in\mathcal{M}(\tilde{S}; \lambda'):\  X^{\max,\tilde{S}}\ \text{is a UI martingale under $\mathbb{Q}$}\}.\nonumber
\end{equation}
For each $\mathbb{Q}\in\mathcal{M}'(\tilde{S} ; \lambda')$, we obtain that
\begin{align*}
\mathbb{E}^{\mathbb{Q}}[(\phi^0)'_T+(\phi^1)'_T\tilde{S}_T]=&\mathbb{E}^{\mathbb{Q}}[(\phi^0)'_T+(\phi^1)'_T\tilde{S}_T+X_T^{\max,\tilde{S}}]-\mathbb{E}^{\mathbb{Q}}[X_T^{\max,\tilde{S}}]\\
\leq& 0+a-a=0.\nonumber
\end{align*}
By the definition of $(\phi^0)'$ and $(\phi^1)'$, we deduce that
\begin{equation}
\mathbb{E}^{\mathbb{Q}}[\phi_T^0+\phi_T^1\tilde{S}_T]+\frac{\lambda-\lambda'}{1-\lambda}\mathbb{E}^{\mathbb{Q}}[\phi_T^{0,\uparrow}]\leq 0.\nonumber
\end{equation}
Because $\phi^0_T+\phi_T^1\tilde{S}_T\geq V(\phi)_T$ and $\tilde{S}\in\mathcal{S}$, by definition, there exists a constant $\hat{a}>0$ and $\hat{X}^{\max,\tilde{S}}$ such that $V(\phi)_T\geq -\hat{X}_T^{\max,\tilde{S}}$. We obtain that
\begin{equation}
\mathbb{E}^{\mathbb{Q}}[\phi_T^{0,\uparrow}]\leq \frac{1-\lambda}{\lambda-\lambda'}\mathbb{E}^{\mathbb{Q}}[\hat{X}_T^{\max,\tilde{S}}]\leq \frac{(1-\lambda)\hat{a}}{\lambda-\lambda'},\ \ \forall \mathbb{Q}\in\mathcal{M}'(\tilde{S};\lambda')\nonumber
\end{equation}
as $\hat{X}^{\max,\tilde{S}}$ is a supermartingale under $\mathbb{Q}\in\mathcal{M}'(\tilde{S}; \lambda')$. As the set $\mathcal{M}'(\tilde{S}; \lambda')$ is dense in $\mathcal{M}(\tilde{S}; \lambda')$ with respect to the norm topology of $\mathbb{L}^1$, for any $\mathbb{Q}\in\mathcal{M}(\tilde{S}; \lambda')$, Fatou's lemma leads to
\begin{equation}
\mathbb{E}^{\mathbb{Q}}[\phi_T^{0,\uparrow}]\leq \frac{1-\lambda}{\lambda-\lambda'}\hat{a}.\nonumber
\end{equation}
For each $\tilde{S}\in\mathcal{S}(\lambda')$, $\phi_T^0=\phi_T^0+\phi_T^1\tilde{S}\geq -\hat{X}^{\max,\tilde{S}}$ by the previous argument and $\phi_T^1=0$. Therefore, it follows that
$\phi_T^{0,\downarrow}\leq \phi_T^{0,\uparrow}+\hat{X}_T^{\max,\tilde{S}}$. For each $\mathbb{Q}\in\mathcal{M}(\tilde{S}; \lambda')$, as $X^{\max,\tilde{S}}$ is a supermartingale under $\mathbb{Q}$, we can derive that
\begin{equation}
\mathbb{E}^{\mathbb{Q}}[\phi_T^{0,\uparrow}+\phi_T^{0,\downarrow}]\leq 2\frac{1-\lambda}{\lambda-\lambda'}\hat{a}+\hat{a},\nonumber
\end{equation}
which completes the proof of $(\ref{pro1})$.

As regards $(\ref{pro2})$, we can follow the proof of Lemma $3.1$ of \cite{Sch33}. First, we have that
\begin{equation}\label{phi1}
d\phi_t^{1,\uparrow}\leq \frac{d\phi_t^{0,\downarrow}}{S_t}.
\end{equation}
As $\tilde{S}$ is a $\mathbb{Q}$-local supermartingale and it follows that it is a $\mathbb{Q}$-supermartingale for $\mathbb{Q}\in\mathcal{M}(\tilde{S}; \lambda')$. It is easy to see that $\inf_{0\leq t\leq T}\tilde{S}_t(\omega)$ is $\mathbb{Q}$-a.s. strictly positive as $\tilde{S}_T>0$, $\mathbb{Q}$-a.s.. Therefore, $\inf_{0\leq t\leq T}\tilde{S}_t(\omega)$ is $\mathbb{P}$-a.s. as well. We can obtain that for any $\epsilon>0$, there exists $\delta>0$ such that
\begin{equation}\label{phi12}
\mathbb{P}\Big[ \inf_{0\leq t\leq T}S_t<\delta\Big]<\frac{\epsilon}{2}.
\end{equation}
Combining $(\ref{pro1})$, $(\ref{phi1})$ and $(\ref{phi12})$, it is easy to derive a control such that $\mathbb{E}^{\mathbb{Q}}[\phi_T^{1,\uparrow}]\leq k\hat{a}$ for some $k>0$. Finally, we recall that $\phi_T^1=0$ which implies that $\phi_T^{1,\uparrow}=\phi_T^{1,\downarrow}$. It follows that $(\ref{pro2})$ holds.
\end{proof}

It is now important for us to verify the closedness property of the set $\mathcal{U}_x$ (resp. $\mathcal{V}_x$) for the purpose of the super-hedging result. In particular, it is enough to consider the case $x=0$. For the admissible portfolio processes in Definition $\ref{admissiblephi}$, the Fatou-closedness is an appropriate concept, see Appendix $5.5$ of \cite{KS09}. As in \cite{Sch33}, a sequence $(\phi_T^n)_{n=1}^{\infty}$ in $\mathbb{L}^0(\mathbb{R}^2)$ \textit{Fatou-converges} to $\phi_T\in\mathbb{L}^0(\mathbb{R}^2)$ if there is $M>0$ such that $V(\phi^{0,n},\phi^{1,n})_T\geq -M$ and $\phi_T^n$ converges a.s. to $\phi_T$. A set is Fatou closed if it is closed under the Fatou convergence. Due to the stochastic lower-bounds for our acceptable portfolios, the previous Fatou-closedness has to be modified using the following alternative definition.

\begin{definition}
Fix some $\hat{a}>0$, for each $\tilde{S}\in\mathcal{S}$, we pick and fix one maximal element $\hat{X}^{\max,\tilde{S}}\in\mathcal{X}(\tilde{S},\hat{a})$. The set $\mathcal{U}_0$ (\text{resp.} $\mathcal{V}_0$) is said to be \textbf{relatively Fatou closed} if for any sequence $(\phi^{0,n}_T,\phi^{1,n}_T)\in\mathcal{U}_0$ which satisfies $V(\phi^n)_T\geq -\hat{X}_T^{\max,\tilde{S}}$, $\tilde{S}\in\mathcal{S}$ (\text{resp.} $\phi_T^{0,n}\geq -\hat{X}_T^{\max,\tilde{S}}$, $\tilde{S}\in\mathcal{S}$) and converges to $(\phi^0_T,\phi^1_T)\in\mathbb{L}^0(\mathbb{R}^2)$ (\text{resp.} $\phi_T^0\in\mathbb{L}^0(\mathbb{R})$) almost surely, we have that $(\phi_T^0,\phi_T^1)\in\mathcal{U}_0$ (\text{resp.} $\phi_T^0\in\mathcal{V}_0$).
\end{definition}

\begin{remark}\label{order}
In the two dimensional setting, let us introduce the partial order on $\mathbb{L}^0(\mathbb{R}^2)$ by $(\phi^0,\phi^1)\succeq (\psi^0, \psi^1)$ if $V(\phi^0-\psi^0, \phi^1-\psi^1)_T\geq 0$, a.s.. Therefore, in the above definition, we can also say $(\phi^{0,n}, \phi^{1,n})$ relatively Fatou converges to $(\phi^0,\phi^1)$, if there exists a constant $\hat{a}>0$ such that for each $\tilde{S}\in\mathcal{S}$, we can find one $\hat{X}^{\max,\tilde{S}}\in\mathcal{X}(\tilde{S}, \hat{a})$ such that $(\phi_T^{0,n}, \phi_T^{1,n})\succeq (-\hat{X}_T^{\max,\tilde{S}}, 0)$ and $(\phi_T^{0,n},\phi_T^{1,n})$ converges to $(\phi^0_T, \phi^1_T)$ almost surely.
\end{remark}

\begin{lemma}\label{fatoulem}
Fix $S=(S_t)_{0\leq t\leq T}$ and $0<\lambda<1$ as above and let Assumption $\ref{Assum}$ hold. The sets $\mathcal{U}_0$ and $\mathcal{V}_0$ are both relatively Fatou closed.
\end{lemma}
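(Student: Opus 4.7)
\subsection*{Proof plan for Lemma \ref{fatoulem}}

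Fix the sequence $(\phi^n)_{n\geq 1}=(\phi^{0,n},\phi^{1,n})_{n\geq 1}\subset\mathcal{A}_0$ and a pre-selected family $\{\hat{X}^{\max,\tilde S}\in\mathcal{X}(\tilde S,\hat a)\}_{\tilde S\in\mathcal S}$ such that $V(\phi^n)_T\geq -\hat{X}^{\max,\tilde S}_T$ for each $\tilde S\in\mathcal S$ and $(\phi^{0,n}_T,\phi^{1,n}_T)\to(\phi^0_T,\phi^1_T)$ a.s.  The overall plan is to produce a genuine process $\phi=(\phi^0,\phi^1)$ on $[0,T]$, predictable and of finite variation, whose terminal value coincides with the claimed limit, and then to verify (a) $\phi$ is self-financing under $\lambda$ in the sense of \eqref{self} and (b) $\phi$ satisfies the acceptability bound of Definition~\ref{accptpot}.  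Throughout, I will freely pass to forward convex combinations; since both self-financing and the stochastic lower bound $V(\,\cdot\,)_\tau\geq -\hat X^{\max,\tilde S}_\tau$ are convex conditions, the convex-combined sequence still belongs to $\mathcal{A}_0$ and still satisfies the hypotheses at $T$.

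The first key step is uniform control of the total variation: applying Lemma~\ref{lemm} to the acceptable family $\{\phi^n\}\subset\mathcal{A}_{0,\hat a}$, I obtain a probability measure $\mathbb Q\sim\mathbb P$ and constants $C_0,C_1$ such that $\mathbb E^{\mathbb Q}[\|\phi^{0,n}\|_T]\leq C_0\hat a$ and $\mathbb E^{\mathbb Q}[\|\phi^{1,n}\|_T]\leq C_1\hat a$.  This uniform $\mathbb L^1(\mathbb Q)$ bound on the total variations, together with Koml\'os' theorem (applied coordinate-wise to the signed measures $d\phi^{i,n}$) and a Helly-type selection argument, lets me pass to convex combinations, still denoted $\phi^n$, and find a predictable process $\phi=(\phi^0,\phi^1)$ of finite variation on $[0,T]$ such that for almost every $\omega$, the signed measures $d\phi^{i,n}(\omega,\cdot)$ converge weakly to $d\phi^i(\omega,\cdot)$ on $[0,T]$, and hence $\phi^{i,n}_t(\omega)\to\phi^i_t(\omega)$ for every continuity point $t$ of $\phi^i(\omega,\cdot)$, in particular $\phi^i_T(\omega)=\phi^i_T(\omega)$ as postulated.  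Predictability is preserved because each $\phi^{i,n}$ is predictable and the convex combinations/Helly limits remain so after replacing $\phi^i$ by its \emph{predictable version} (using that $\phi^i$ is \ladlag, hence its left-limit process is predictable).

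The main obstacle is showing that the limit $\phi$ satisfies the self-financing inequality \eqref{self}.  The Riemann–Stieltjes integrals appearing in \eqref{self} decompose into a continuous part, a left-jump sum and a right-jump sum, all against the \cadlag price $S$.  For the continuous part I use weak convergence of $d\phi^{1,n,c}$ together with continuity of $u\mapsto S_u$ on the support of the continuous component; for the jump sums the plan is to argue that the (at most countable) set of jump times of the limit process $\phi$ is reached by the jumps of $\phi^n$ along the convex combinations, and to pass to the limit in the sums $\sum S_{u-}\triangle\phi^{1,n,\uparrow}_u$ and $\sum S_u\triangle_+\phi^{1,n,\uparrow}_u$ using Fatou on the non-negative finite Borel measures $S_{u-}d\phi^{1,n,\uparrow,\mathrm{pure\ jump},-}$ and $S_u d\phi^{1,n,\uparrow,\mathrm{pure\ jump},+}$.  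Concretely, writing \eqref{self} integrated from $s$ to $t$ as an inequality between two finite Borel measures on $[s,t]$, the non-positive linear functional $\phi\mapsto\int d\phi^0+\int S\,d\phi^{1,\uparrow}-\int(1-\lambda)S\,d\phi^{1,\downarrow}$ is upper semicontinuous along relatively Fatou converging sequences, so the inequality $\leq 0$ is inherited by the limit.  This is exactly the place where the one-asset setting is harder than the Kabanov model of \cite{Campi06,Yu}: in those papers convex solvency cones turn the self-financing condition into membership in a closed cone, a property automatic under a.s. limits, whereas here it must be recovered integral by integral.

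Once $\phi$ is shown to be predictable, of finite variation, and self-financing, I turn to the acceptability bound.  By a.s. convergence of $(\phi^{0,n}_T,\phi^{1,n}_T)$ to $(\phi^0_T,\phi^1_T)$ and continuity of the map $(a,b)\mapsto a+b^+(1-\lambda)S_T-b^-S_T$, I obtain
\begin{equation}
V(\phi)_T=\phi^0_T+(\phi^1_T)^+(1-\lambda)S_T-(\phi^1_T)^-S_T\geq -\hat X^{\max,\tilde S}_T,\qquad\tilde S\in\mathcal S,\nonumber
\end{equation}
since each approximant does.  Proposition~\ref{auxlem} then promotes this terminal bound to every $[0,T]$-valued stopping time $\tau$, namely $V(\phi)_\tau\geq -\hat X^{\max,\tilde S}_\tau$ a.s.\ for every $\tilde S\in\mathcal S$, which is precisely Definition~\ref{accptpot}.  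Hence $(\phi^0_T,\phi^1_T)\in\mathcal U_0$, proving relative Fatou closedness of $\mathcal U_0$.  For $\mathcal V_0$ the only additional point is that the sequence satisfies $\phi^{1,n}_T=0$, which is preserved under a.s.\ convergence, so $\phi^1_T=0$ and $\phi^0_T\in\mathcal V_0$.  The real difficulty of the argument, as emphasized already after Proposition~\ref{auxlem}, lies in the self-financing preservation step of paragraph three; the role of Proposition~\ref{auxlem} is precisely to bypass the missing supermartingale argument of \cite{Campi06,Yu} and to transfer the terminal inequality backward in time.
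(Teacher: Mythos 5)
There is a genuine gap at the final step: you apply Proposition~\ref{auxlem} to the \emph{limit} process $\phi$ in order to promote the terminal bound $V(\phi)_T\geq -\hat X_T^{\max,\tilde S}$ to all stopping times, but Proposition~\ref{auxlem} has as hypothesis that $(\phi^0,\phi^1)\in\mathcal{A}_x$, i.e.\ that $\phi$ is \emph{already} acceptable with some family of maximal elements as running lower bounds. That hypothesis is not cosmetic: in the proof of Proposition~\ref{auxlem} the acceptability of $\phi$ is what makes $\phi^0+\phi^1(1-\alpha)\bar S+X^{\max,\bar S}$ a nonnegative local optional strong supermartingale, hence a true one, which is the engine of the contradiction argument. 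Since acceptability of the limit is precisely what Lemma~\ref{fatoulem} is trying to establish, your invocation is circular. The paper avoids this by applying Proposition~\ref{auxlem} to each approximant $\phi^n$ --- which \emph{is} in $\mathcal{A}_0$ by hypothesis and satisfies the terminal bound with the pre-fixed $\hat X^{\max,\tilde S}$ --- thereby obtaining $V(\phi^n)_\tau\geq-\hat X_\tau^{\max,\tilde S}$ for every stopping time $\tau$ and every $n$, and only then passes to the limit in this running inequality. For that limit passage one needs $\phi^{i,n}_t\to\phi^i_t$ at \emph{every} $t\in[0,T]$, not merely at continuity points of the limit; this is exactly what the compactness argument of Theorem~3.4 of \cite{Sch33} (fed by the total-variation bounds of Lemma~\ref{lemm}) delivers, whereas your Helly-type selection as stated only gives convergence at continuity points and would not suffice even for the corrected route without the refinement in \cite{Sch33}.

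Your treatment of the remaining ingredients is in the right spirit and matches the paper's: the uniform total-variation control via Lemma~\ref{lemm}, passage to convex combinations, and the preservation of the self-financing inequality \eqref{self} in the limit (which the paper delegates wholesale to the proof of Theorem~3.4 of \cite{Sch33} rather than re-deriving measure by measure as you sketch). But as written the acceptability of the limit is not proved, and the role you assign to Proposition~\ref{auxlem} --- ``transfer the terminal inequality backward in time for the limit'' --- must be replaced by ``transfer it backward for each $\phi^n$ and then take limits for each fixed $\tau$.''
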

\begin{proof}
Fix $\hat{a}>0$ and for each $\tilde{S}\in\mathcal{S}$, choose and fix $\hat{X}^{\max,\tilde{S}}\in\mathcal{X}(\tilde{S}, \hat{a})$. Consider a sequence $(\phi_T^{0,n},\phi_T^{1,n})\in\mathcal{U}_0$ such that $V(\phi^n)_T\geq -\hat{X}_T^{\max,\tilde{S}}$ and $(\phi_T^{0,n},\phi_T^{1,n})$ converges a.s. to some $(\phi_T^0,\phi_T^1)\in\mathbb{L}^0(\mathbb{R}^2)$. Thanks to Proposition $\ref{auxlem}$, we can deduce that for any $[0,T]$-valued stopping time $\tau$, $V(\phi^n)_{\tau}\geq -\hat{X}_{\tau}^{\max,\tilde{S}}$. Decompose canonically these processes $\phi_T^{0,n}=\phi_T^{0,n,\uparrow}-\phi_T^{0,n,\downarrow}$ and $\phi_T^{1,n}=\phi_T^{1,n,\uparrow}-\phi_T^{1,n,\downarrow}$. Thanks to Lemma $\ref{lemm}$, the proof of Theorem $3.4$ of \cite{Sch33} can be carried over verbatim in our setting, and we can find a predictable increasing process $\phi^{0,\uparrow}=(\phi^{0,\uparrow}_t)_{0\leq t\leq T}$ such that the sequence $\phi_t^{0,n,\uparrow}$ converges almost surely to $\phi_t^{0,\uparrow}$ for all $0\leq t\leq T$. Similar results hold for $\phi^{0,\downarrow}$, $\phi^{1,\uparrow}$ and $\phi^{1,\downarrow}$. These processes are all predictable, increasing and satisfy condition $(\ref{self})$.

Define the process $(\phi^0_t,\phi^1_t)_{0\leq t\leq T}$ by $\phi_t^0=\phi_t^{0,\uparrow}-\phi_t^{0,\downarrow}$ and $\phi_t^1=\phi_t^{1,\uparrow}-\phi_t^{1,\downarrow}$. The process $\phi$ is a predictable and self-financing portfolio process. Moreover, for each $\tilde{S}\in\mathcal{S}$, as $V(\phi^n)_{\tau}\geq -\hat{X}_{\tau}^{\max,\tilde{S}}$ for all $[0,T]$-valued stopping time $\tau$, we obtain that $V(\phi)_{\tau}\geq -\hat{X}_{\tau}^{\max,\tilde{S}}$ as the convergence of $(\phi^n)_{n=1}^{\infty}$ takes place for all $t\in[0,T]$. We can conclude that $(\phi^0,\phi^1)$ is an acceptable portfolio, i.e., $(\phi^0,\phi^1)\in\mathcal{U}_0$, and therefore $\mathcal{U}_0$ is relatively Fatou closed. The proof for $\mathcal{V}_0$ follows the same arguments.
\end{proof}

After the closedness property, we need to proceed to characterize the auxiliary set $\mathcal{W}(x; \lambda, S)$ (short as $\mathcal{W}(x)$) of two dimensional random variables for the purpose of the super-hedging result, where we define
\begin{align*}
\mathcal{W}(x)=\{(W^0, W^1): &W^0=\phi_T^0+\essinf_{\tilde{S}\in\mathcal{S}}X_T^{\text{max},\tilde{S}}, W^1=\phi_T^1,\ \text{for}\ (\phi^0_T,\phi^1_T)\in \mathcal{U}_x\\
&\text{with its corresponding thresholds}\ X^{\text{max},\tilde{S}}\\
&\text{in the definition of acceptable portfolios}\}.
\end{align*}

Furthermore, let us consider the auxiliary set $\mathcal{W}^{\infty}(x)$ of bounded random variables as elements in the set $\mathcal{W}(x)$ in the sense that $\mathcal{W}^{\infty}(x)=\mathcal{W}(x)\cap\mathbb{L}^{\infty}$.

\begin{definition}\label{nameZ}
Denote $\bar{\mathcal{Z}}(\lambda, S)$ (short as $\bar{\mathcal{Z}}$) as the set of all pairs $Z_T=(Z^0_T, Z^1_T)\in\mathbb{L}^1_+(\mathbb{R}^2;\mathcal{F}_T)$ such that $\mathbb{E}[Z_T^0]=1$ and
\begin{equation}\label{lateineq}
\mathbb{E}[W^0Z_T^0+W^1Z_T^1]\leq x+\mathbb{E}\left[\essinf_{\tilde{S}\in\mathcal{S}}X_T^{\text{max},\tilde{S}}Z_T^0\right],
\end{equation}
for all $(W^0, W^1)\in\mathcal{W}^{\infty}(x)$.
\end{definition}

Each $(Z_T^0, Z_T^1)\in\bar{\mathcal{Z}}$ can be identified with a pair $(\mathbb{Q}, \tilde{S})$ by setting
\begin{equation}
Z^i_t=\mathbb{E}[Z^i_T|\mathcal{F}_t],\ \ i=0,1,\ \ \tilde{S}_t=\frac{Z_t^1}{Z_t^0},\ \ \text{and}\ \ \ \frac{d\mathbb{Q}}{d\mathbb{P}}=Z_T^0.\nonumber
\end{equation}
However, here the measure $\mathbb{Q}$ is only absolutely continuous with respect to $\mathbb{P}$.

The following lemma builds the relationship between the definition of $\bar{\mathcal{Z}}$ and $\lambda$-CPS and shows that the set $\bar{\mathcal{Z}}$ is actually independent of the choice of the random endowments $q\cdot\mathcal{E}_T$.

\begin{lemma}\label{lemZ}
Assume that $S_t\leq K$ for some constant $K$ for all $0\leq t\leq T$. For each $Z\in\bar{\mathcal{Z}}$, define the martingale $Z=(Z_t^0, Z_t^1)_{0\leq t\leq T}$ by
\begin{equation}
Z^i_t\triangleq \mathbb{E}[Z^i_T|\mathcal{F}_t],\ \ i=0,1,\ \ 0\leq t\leq T.\nonumber
\end{equation}
We will have that
\begin{equation}
\tilde{S}_t\triangleq \frac{Z_t^1}{Z_t^0}\in [(1-\lambda)S_t,S_t],\ \ 0\leq t\leq T,\ \text{a.s.}\nonumber
\end{equation}
Conversely, suppose that $Z=(Z_t^0,Z_t^1)_{0\leq t\leq T}$ is an $\mathbb{R}_+^2$-valued $\mathbb{P}$-martingale such that $Z_0^0=1$ and $\tilde{S}_t=\frac{Z_t^1}{Z_t^0}$ takes values in $[(1-\lambda)S_t, S_t]$ a.s. on $\{Z_t^0>0\}$. Then we have $Z_T=(Z_T^0,Z_T^1)\in\bar{\mathcal{Z}}$.
\end{lemma}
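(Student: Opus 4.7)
The plan is to handle the two implications of Lemma~\ref{lemZ} separately: the forward direction via elementary test portfolios, and the converse via an integration-by-parts/supermartingale argument using the structure of self-financing acceptable portfolios.

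For the forward direction, I would fix a $[0,T]$-valued stopping time $\tau$ and a set $A\in\mathcal{F}_\tau$, then construct two elementary acceptable portfolios starting at $(0,0)$. A \emph{buying} strategy $\phi^+$ that on $A$ purchases one unit of stock at time $\tau$ and holds it to $T$, yielding $(\phi^{0,+}_T,\phi^{1,+}_T)=(-S_\tau\mathbf{1}_A,\mathbf{1}_A)$; and a \emph{selling} strategy $\phi^-$ that on $A$ short-sells one unit at $\tau$, yielding $(\phi^{0,-}_T,\phi^{1,-}_T)=((1-\lambda)S_\tau\mathbf{1}_A,-\mathbf{1}_A)$. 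The hypothesis $S\leq K$ both places the terminal positions in $\mathbb{L}^\infty(\mathbb{R}^2)$ and furnishes a constant lower bound for the liquidation value, so both terminal values lie in $\mathcal{U}_0^\infty$. Plugging into the defining inequality of $\bar{\mathcal{Z}}$ and conditioning on $\mathcal{F}_\tau$ using the $\mathbb{P}$-martingale property of $(Z^0,Z^1)$ produces
\begin{equation}
\mathbb{E}\big[\mathbf{1}_A(Z^1_\tau-S_\tau Z^0_\tau)\big]\leq 0,\qquad \mathbb{E}\big[\mathbf{1}_A((1-\lambda)S_\tau Z^0_\tau-Z^1_\tau)\big]\leq 0.\nonumber
\end{equation}
Since $A\in\mathcal{F}_\tau$ is arbitrary we obtain $(1-\lambda)S_\tau Z^0_\tau\leq Z^1_\tau\leq S_\tau Z^0_\tau$ a.s.\ at every stopping time $\tau$, and the optional section theorem applied to the \cadlag processes $Z^1-SZ^0$ and $Z^1-(1-\lambda)SZ^0$ upgrades this to the pathwise bound for all $t\in[0,T]$. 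Dividing by $Z^0_t$ on $\{Z^0_t>0\}$ gives $\tilde{S}_t\in[(1-\lambda)S_t,S_t]$.

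For the converse direction, fix $\phi\in\mathcal{A}_0$ with $\phi_T\in\mathbb{L}^\infty(\mathbb{R}^2)$ and set $M_t:=\phi^0_t Z^0_t+\phi^1_t Z^1_t$, so $M_t=Z^0_t(\phi^0_t+\phi^1_t\tilde{S}_t)$ on $\{Z^0_t>0\}$. Define the absolutely continuous measure $d\mathbb{Q}/d\mathbb{P}=Z^0_T$; generalized Bayes' rule makes $\tilde{S}$ a $\mathbb{Q}$-local martingale. The argument of Proposition~$1.6$ of \cite{Sch2}, which only uses the self-financing condition together with $\tilde{S}\in[(1-\lambda)S,S]$, then shows that $\phi^0+\phi^1\tilde{S}$ is a local strong $\mathbb{Q}$-supermartingale, and pulling back by $Z^0$ via Bayes' rule tells us $M$ is a local $\mathbb{P}$-supermartingale with $M_0=0$. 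Boundedness of $\phi_T$ together with integrability of $Z^0_T$ and $Z^1_T$ yields $M_T\in\mathbb{L}^1(\mathbb{P})$.

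The main obstacle is converting the local supermartingale property of $M$ into the final integral inequality $\mathbb{E}[M_T]\leq 0$. Here the plan is to exploit acceptability of $\phi$: by Assumption~\ref{Assum} we select $\tilde{S}'\in\mathcal{S}$ together with $X^{\max,\tilde{S}'}\in\mathcal{X}(\tilde{S}',a)$ such that $V(\phi)_t\geq -X^{\max,\tilde{S}'}_t$, and the bid-ask inclusion $\tilde{S}\in[(1-\lambda)S,S]$ delivers $\phi^0_t+\phi^1_t\tilde{S}_t\geq V(\phi)_t\geq -X^{\max,\tilde{S}'}_t$. This supplies a lower bound for $M$ along a localizing sequence, and Fatou's lemma applied at $t=T$ gives $\mathbb{E}[M_T]\leq M_0=0$. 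Two delicate points in this step are that $\mathbb{Q}$ is only absolutely continuous (not equivalent), which is why we must pull the supermartingale property back to $\mathbb{P}$ via $Z^0$ rather than work directly under $\mathbb{Q}$, and the behaviour on $\{Z^0_t=0\}$; for the latter one observes that $Z^1_t=\tilde{S}_t Z^0_t=0$ there by the bid-ask constraint, so $M_t$ is unambiguously zero on that set and all manipulations extend to all of $\Omega$.
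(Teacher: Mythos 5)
Your forward direction is sound and is essentially the paper's argument: the paper also tests the polar inequality defining $\bar{\mathcal{Z}}$ against elementary buy-and-hold and sell-and-hold strategies supported after a stopping time (it restricts to the ``bad'' sets $\{\tilde{S}_\tau>S_\tau\}$ and $\{\tilde{S}_\tau<(1-\lambda)S_\tau\}$ and argues by contradiction, whereas you use an arbitrary $A\in\mathcal{F}_\tau$ and then a section argument, but these are the same computation). The bound $S\leq K$ is used in the same way to place the terminal positions in $\mathcal{U}_0^{\infty}$.

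The converse direction has a genuine gap in the localization step. You bound $M_t=Z_t^0(\phi_t^0+\phi_t^1\tilde{S}_t)$ from below by $-Z_t^0X_t^{\max,\tilde{S}'}$, where $X^{\max,\tilde{S}'}$ is the maximal element attached to some \emph{other} consistent price system $\tilde{S}'\in\mathcal{S}$ supplied by acceptability, and then invoke Fatou's lemma along a localizing sequence $(\sigma_n)$ to conclude $\mathbb{E}[M_T]\leq 0$. For that Fatou argument to close, you need $\limsup_n\mathbb{E}[Z^0_{\sigma_n\wedge T}X^{\max,\tilde{S}'}_{\sigma_n\wedge T}]\leq\mathbb{E}[Z^0_TX^{\max,\tilde{S}'}_T]$, i.e.\ some uniform integrability or (sub)martingale structure for the product $Z^0X^{\max,\tilde{S}'}$. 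But $Z^0$ is the density process of the measure attached to $\tilde{S}$, while $X^{\max,\tilde{S}'}$ is a stochastic integral against the unrelated process $\tilde{S}'$; their product is neither a $\mathbb{P}$-supermartingale nor known to be uniformly integrable, so the inequality $\mathbb{E}[M_T]\leq M_0$ does not follow. The paper's route avoids this entirely: since $(\phi_T^0,\phi_T^1)\in\mathbb{L}^{\infty}(\mathbb{R}^2)$ and $S_T\leq K$, the terminal liquidation value is bounded below by a \emph{constant} $-a$; a constant is itself a maximal element, so Proposition~\ref{auxlem} propagates this bound to every stopping time, whence $\phi^0_t+\phi^1_t\tilde{S}_t+a\geq 0$ for all $t$. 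The nonnegative local supermartingale $Z^0(\phi^0+\phi^1\tilde{S}+a)$ is then a true supermartingale by the standard Fatou argument (the compensating term $aZ^0$ being a uniformly integrable $\mathbb{P}$-martingale), which yields $\mathbb{E}[\phi_T^0Z_T^0+\phi_T^1Z_T^1]\leq 0$. You already have all the ingredients for this repair — you noted the boundedness of $\phi_T$ — but you used it only for integrability of $M_T$ rather than to produce the constant threshold that makes the limit passage legitimate.
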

\begin{proof}
Choose any $Z_T\in\bar{\mathcal{Z}}$ and suppose that there exits a $[0,T)$-valued stopping time $\tau$ such that $\mathbb{Q}(\tilde{S}_{\tau} >S_{\tau})>0$. Let us consider the strategy
\begin{equation}
a_t=\Big(-1, \frac{1}{S_{\tau}}\Big)\mathbf{1}_{\{\tilde{S}_{\tau}>S_{\tau}\}}\mathbf{1}_{\rrbracket \tau, T\rrbracket}(t),\ \ 0\leq t\leq T,
\end{equation}
It is clear that $a_t=(\phi^0_t,\phi^1_t)\in\mathcal{U}_0$ is a self-financing strategy for $t\in[0,T]$. Moreover, it is also clear that for each $\tilde{S}\in\mathcal{S}$, we can choose $X^{\max,\tilde{S}}\equiv 1$ such that $W^0=\phi^0_T+X^{\max,\tilde{S}}=\phi_T^0+1\in\mathbb{L}^{\infty}$ and $W^1=\phi_T^1\in\mathbb{L}^{\infty}$ such and $(W^0, W^1)\in\mathcal{W}^{\infty}(0)$. Using the fact that $\tilde{S}$ is a martingale, we can follow the proof of Proposition $4.2$ of \cite{Sch33} to deduce a contradiction. To wit, we calculate that
\begin{align*}
&\mathbb{E}^{\mathbb{P}}[W^0Z_T^0+W^1Z_T^1]\\
=&\mathbb{E}^{\mathbb{P}}\left[ \left( -Z_T^0+\frac{Z_T^1}{S_{\tau}}\right)\mathbf{1}_{\{\tilde{S}_{\tau}>S_{\tau}\}}\right]+1=\mathbb{E}^{\mathbb{P}}\left[ \mathbb{E}^{\mathbb{P}}\left[ \left(-Z_T^0+\frac{Z_T^1}{S_{\tau}}\right)\mathbf{1}_{\{\tilde{S}_{\tau}>S_{\tau}\}} \Big|\mathcal{F}_{\tau}\right] \right]+1\\
=&\mathbb{E}^{\mathbb{P}}\left[ Z_{\tau}^0\left( -1+\frac{\tilde{S}_{\tau}}{S_{\tau}}\right)\mathbf{1}_{\{\tilde{S}_{\tau}>S_{\tau}\}}\right]+1=\mathbb{E}^{\mathbb{Q}}\left[\left( -1+\frac{\tilde{S}_{\tau}}{S_{\tau}}\right)\mathbf{1}_{\{\tilde{S}_{\tau}>S_{\tau}\}}\right]+1>1,
\end{align*}
which is a contradiction to $(\ref{lateineq})$. If $\mathbb{Q}(\tilde{S}_T>S_T)>0$, we can instead consider the portfolio process $a'_t=\Big(-1, \frac{1}{S_T}\Big)\mathbf{1}_{\{\tilde{S}_T>S_T\}}\mathbf{1}_{\llbracket T\rrbracket}$ and deduce a similar contradiction and therefore $\tilde{S}_t\leq S_t$ for all $0\leq t\leq T$.

As given in the proof of Proposition $4.2$ of \cite{Sch33}, the strategies
\begin{equation}
b_t=((1-\lambda)S_{\tau}, -1)\mathbf{1}_{\{\tilde{S}_{\tau}<(1-\lambda)S_{\tau}\}}\mathbf{1}_{\rrbracket \tau,T\rrbracket}(t),\ \ 0\leq t\leq T, \nonumber
\end{equation}
and
\begin{equation}
b'_t=((1-\lambda)S_{T}, -1)\mathbf{1}_{\{\tilde{S}_{T}<(1-\lambda)S_{T}\}}\mathbf{1}_{\llbracket T\rrbracket}(t),\ \ 0\leq t\leq T \nonumber
\end{equation}
satisfy $b_T\in\mathcal{U}_0$ (resp. $b'_T\in\mathcal{U}_0$). Notice that $V(b)_t\geq -K$ (resp. $V(b')_t\geq -K$) for $t\in[0,T]$, it is enough to choose that $X^{\max,\tilde{S}}=K$ for all $\tilde{S}\in\mathcal{S}$. By picking $W^0=\phi_T^0+K\in\mathbb{L}^{\infty}$ and $W^1=\phi_T^1\in\mathbb{L}^{\infty}$, we get that $(W^0, W^1)\in\mathcal{W}^{\infty}(0)$. Following the previous proof again, we can derive that $\tilde{S}_t\geq (1-\lambda)S_t$ for all $0\leq t\leq T$ using the above constructions of portfolios $(b_t)_{0\leq t\leq T}$ and $(b'_t)_{0\leq t\leq T}$.

For the other direction, for any $(W^0, W^1)\in\mathcal{W}^{\infty}(0)$, we have
\begin{align*}
W^0Z_T^0+W^1Z_T^1&=(\phi_T^0+\essinf_{\tilde{S}\in\mathcal{S}}X_T^{\text{max},\tilde{S}} )Z_T^0+\phi_T^1Z_T^1\\
&=(\phi_T^0+\phi_T^1\tilde{S}_T^{\ast}+ \essinf_{\tilde{S}\in\mathcal{S}}X_T^{\text{max},\tilde{S}})Z_T^0,\nonumber
\end{align*}
where $(\mathbb{Q}^{\ast}, \tilde{S}^{\ast})$ is the pair induced by $(Z^0_T, Z^1_T)$. Define the random variable $\mathcal{E}=\essinf_{\tilde{S}\in\mathcal{S}}X_T^{\text{max},\tilde{S}})Z_T^0$. It follows that
\begin{equation}
\mathbb{E}\left[W^0Z^0_T+W^1Z^1_T\right]\leq \mathbb{E}^{\mathbb{Q}^{\ast}}\left[ \phi_T^0+\phi_T^1\tilde{S}^{\ast}_T+X_T^{\text{max},\tilde{S}^{\ast}}\right]-\mathbb{E}^{\mathbb{Q}^{\ast}}\left[X_T^{\max, \tilde{S}^{\ast}}\right]+\mathbb{E}^{\mathbb{Q}^{\ast}}\left[\mathcal{E}\right].
\end{equation}
It is safe to split the above integral because $W^0\in\mathbb{L}^{\infty}$, $W^1\in\mathbb{L}^{\infty}$ and both $X_T^{\max, \tilde{S}^{\ast}}\geq 0$ and $\mathcal{E}\geq 0$ $\mathbb{P}$-a.s. as well as $\mathbb{Q}^{\ast}$-a.s.. Therefore, the expectation $\mathbb{E}^{\mathbb{Q}^{\ast}}\left[ \phi_T^0+\phi_T^1\tilde{S}^{\ast}_T+X_T^{\text{max},\tilde{S}^{\ast}}\right]$ is well defined. We can simply mimic the proof of Lemma $\ref{ineqC}$ and obtain that the validity of $(\ref{lateineq})$ which completes the proof that $(Z_T^0, Z_T^1)\in\bar{\mathcal{Z}}$.
\end{proof}

We now pass from the auxiliary set $\mathcal{W}^{\infty}(0)$ to the set $\mathcal{W}(x)$ and characterize the set $\mathcal{W}(x)$ still using the same dual set $\bar{\mathcal{Z}}$.

\begin{lemma}\label{lem13}
Assume that $S_t\leq K$ for some constant $K>0$, $0\leq t\leq T$. We can characterize the set $\mathcal{W}(x)$ using the set $\bar{\mathcal{Z}}$ by
\begin{align}\label{chaUacp}
\mathcal{W}(x)=\Big\{ (W^0, W^1)\in\mathbb{L}^0_C(\mathbb{R}^2): &\mathbb{E}[W^0Z_T^0+W^1Z_T^1]\leq x+\nonumber\\
&\mathbb{E}[\essinf_{\tilde{S}\in\mathcal{S}}X_T^{\max,\tilde{S}}Z_T^0],\ \ \forall Z_T\in\bar{\mathcal{Z}}\Big\},
\end{align}
where $(a,b)\in\mathbb{L}^0_C(\mathbb{R}^2)$ satisfies $(a,b)\succeq (0,0)$.
\end{lemma}
\begin{proof}
In the two dimensional setting with partial order defined in Remark $\ref{order}$, for any constant $\kappa>0$, it is easy to verify that the intersection of $\mathcal{W}^{\infty}(0)$ with the ball $\{\xi:\|\xi\|_{\infty}\leq \kappa\}$ is closed in probability. Proposition $5.5.1$ of \cite{KS09} gives that $\mathcal{U}^{\infty}(0)$ is $\text{weak}^{\ast}$ closed (i.e., closed in $\sigma(\mathbb{L}^{\infty}, \mathbb{L}^1)$). It is shown in Theorem $5.5.3$ of \cite{KS09} that we have the following characterization
\begin{align*}
\mathcal{W}^{\infty}(0)=\Big\{(W^0, W^1)\in\mathbb{L}^{\infty}_C(\mathbb{R}^2):
&\mathbb{E}[W^0Z_T^0+W^1Z_T^1]\leq \mathbb{E}[\essinf_{\tilde{S}\in\mathcal{S}}X_T^{\max,\tilde{S}}Z_T^0],\\
&\forall Z\in(\mathcal{U}_0^{\infty})^{\circ}\Big\}.\nonumber
\end{align*}
As $\mathcal{W}{\infty}(0)$ contains the negative orthant $-\mathbb{L}^{\infty}(\mathbb{R}^2)$, its polar $(\mathcal{W}^{\infty}(0))^{\circ}$ is therefore defined by
\begin{align*}
(\mathcal{W}^{\infty}(0))^{\circ}=\Big\{(Z_T^0,Z_T^1)\in\mathbb{L}^1(\mathbb{R}^2): &\mathbb{E}[W^0Z_T^0+W^1Z_T^1]\leq \mathbb{E}[\essinf_{\tilde{S}\in\mathcal{S}}X_T^{\max,\tilde{S}}Z_T^0],\\
&\ \forall (W^0,W^1)\in\mathcal{W}^{\infty}(0)\Big\}.
\end{align*}
By Definition $\ref{nameZ}$ and Lemma $\ref{lemZ}$, it is clear that $\bar{\mathcal{Z}}=(\mathcal{W}^{\infty}(0))^{\circ}$, and hence, we get that
\begin{align}\label{chaU}
\mathcal{W}^{\infty}(0)=\Big\{(W^0, W^1)\in\mathbb{L}^{\infty}_C(\mathbb{R}^2): &\mathbb{E}[W^0Z_T^0+W^1Z_T^1]\leq \mathbb{E}[\essinf_{\tilde{S}\in\mathcal{S}}X_T^{\max,\tilde{S}}Z_T^0],\nonumber\\
&\forall Z\in\bar{\mathcal{Z}}\Big\}.
\end{align}

We then claim that $\mathcal{W}^{\infty}(0)$ is relatively Fatou dense in $\mathcal{W}(0)$. To wit, let us consider any $(\phi_T^0,\phi_T^1)\in\mathcal{U}_0$ with the existence of $\hat{a}>0$, for each $\tilde{S}\in\mathcal{S}$, there exits a $\hat{X}^{\max,\tilde{S}}\in\mathcal{X}(\tilde{S}, \hat{a})$ such that $V(\phi^0,\phi^1)_T+\hat{X}_T^{\max,\tilde{S}}\geq 0$. We need to show that there exits a sequence $(W^{0,n}, W^{1,n})\in\mathcal{W}^{\infty}(0)$ such that $(W^{0,n}, W_T^{1,n})\rightarrow (W^0, W^1)$ a.s.

Define the set
\begin{equation}
E_n\triangleq \left\{|V_T(\phi^0,\phi^1)+\essinf_{\tilde{S}\in\mathcal{S}}\hat{X}_T^{\text{max},\tilde{S}}|\leq n, |\phi_T^1|\leq n\right\},
\end{equation}
and denote $E_n^c$ the complement of the set $E_n$.

Define the sequence $(\phi^{0,n}_T, \phi^{1,n}_T)$ by
\begin{equation}
\phi_T^{0,n}\triangleq \phi_T^0\mathbf{1}_{E_n}-\essinf_{\tilde{S}\in\mathcal{S}}\hat{X}_T^{\max,\tilde{S}}\mathbf{1}_{E_n^c},\ \ \ \phi_T^{1,n}\triangleq \phi_T^1\mathbf{1}_{E_n}.\nonumber
\end{equation}
For $0\leq t<T$, let us choose $\phi_t^{0,n}=\phi_t^0$ and $\phi_t^{1,n}=\phi_t^1$. It follows that $(\phi_t^{0,n}, \phi_t^{1,n})$ is a self-financing portfolio. Indeed, it is enough to check the terminal time $T$.
If $E_n^c$ happens, we close the position by liquidation.

We then define the sequence
\begin{equation}
W^{0,n}\triangleq \phi_T^{0,n}+\essinf_{\tilde{S}\in\mathcal{S}}\hat{X}_T^{\max,\tilde{S}},\ \ \ W^1\triangleq \phi_T^{1,n},\nonumber
\end{equation}
and
\begin{equation}
W^{0}\triangleq \phi_T^{0}+\essinf_{\tilde{S}\in\mathcal{S}}\hat{X}_T^{\max,\tilde{S}},\ \ \ W^1\triangleq \phi_T^{1}.\nonumber
\end{equation}

Clearly, $W^{i,n}\rightarrow W^{i}$ a.s. for $i=0,1$ as $(\phi_T^{0,n}+\essinf_{\tilde{S}\in\mathcal{S}}\hat{X}_T^{\max,\tilde{S}}, \phi_T^{1,n})\rightarrow (\phi^0_T+\essinf_{\tilde{S}\in\mathcal{S}}\hat{X}_T^{\max,\tilde{S}}, \phi_T^1)$ a.s.

Moreover, it follows by definition that $(\phi_T^{0,n}, \phi_T^{1,n})\succeq (-\hat{X}_T^{\max,\tilde{S}}, 0)$ as we have
\begin{align*}
V(\phi^{0,n}, \phi^{1,n})_T+\hat{X}_T^{\max,\tilde{S}}&\geq V(\phi^0,\phi^1)_T\mathbf{1}_{E_n}-\hat{X}_T^{\max,\tilde{S}}\mathbf{1}_{E_n^c}  +\hat{X}_T^{\max,\tilde{S}}\\
&=V(\phi^0,\phi^1)_T\mathbf{1}_{E_n}+\hat{X}_T^{\max,\tilde{S}}\mathbf{1}_{E_n}\\
&=\Big(V(\phi^0,\phi^1)_T + \hat{X}_T^{\max,\tilde{S}}\Big)\mathbf{1}_{E_n}\geq 0.
\end{align*}
Therefore, $(W^{0,n}, W^{1,n})\succeq (0,0)$ for each $n$. In addition, we also have
\begin{equation}
V(\phi^{0,n}, \phi^{1,n})_T+\essinf_{\tilde{S}\in\mathcal{S}}\hat{X}_T^{\max,\tilde{S}}=\Big(V(\phi^0,\phi^1)_T + \essinf_{\tilde{S}\in\mathcal{S}}\hat{X}_T^{\max,\tilde{S}}\Big)\mathbf{1}_{E_n}\in\mathbb{L}^{\infty}.\nonumber
\end{equation}
It yields that $(W^{0,n}, W^{1,n})\in\mathbb{L}^{\infty}_C(\mathbb{R}^2)$ which implies that the claim holds.

By using $(\ref{chaU})$ and the fact that $\mathcal{W}^{\infty}(x)$ is relatively Fatou dense in $\mathcal{W}(x)$, it is straightforward to verify the characterization in $(\ref{chaUacp})$.
\end{proof}

Based on all previous results from Lemma $\ref{lemm}$ to Lemma $\ref{lem13}$, we can finally build the other side of the super-hedging theorem for acceptable portfolios. The proof relies heavily on the characterization $(\ref{chaUacp})$ in Lemma $\ref{lem13}$. As it is assumed that the price process $S$ is uniformly bounded in Lemma $\ref{lem13}$, the trick of working with the localizing sequence becomes necessary here.

\begin{lemma}\label{charac}
Fix some $x\in\mathbb{R}$. Let $g$ be an $\mathbb{R}$-valued, $\mathcal{F}_T$-measurable random variable such that there exits a constant $a>0$ and for each $\tilde{S}\in\mathcal{S}(\lambda)$, there exists a $X^{\max,\tilde{S}}\in\mathcal{X}(\tilde{S}, a)$ with $g+X_T^{\max,\tilde{S}}\geq 0$. If for each $\lambda$-CPS $(\mathbb{Q},\tilde{S})$, i.e., for all $\mathbb{Q}\in\mathcal{M}$, we have
\begin{equation}\label{ineassum}
\mathbb{E}^{\mathbb{Q}}[g]\leq x,
\end{equation}
then there exits a pair $(\phi^0, \phi^1)\in\mathcal{A}_x$ such that $(\phi^0_0, \phi_0^1)=(x,0)$ and $(\phi_T^0, \phi_T^1)=(g,0)$.
\end{lemma}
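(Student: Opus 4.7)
The plan is to invoke the bipolar-type characterization $(\ref{chaUacp})$ established in Lemma \ref{lem13} to conclude that $(g,0)\in\mathcal{U}_x$, which by the definition of $\mathcal{V}_x$ supplies the desired acceptable portfolio $(\phi^0,\phi^1)\in\mathcal{A}_x$ starting at $(x,0)$ and terminating at $(g,0)$. Since Lemma \ref{lem13} is stated under the assumption that $S$ is uniformly bounded, I would first reduce to this case by a localization argument via $\tau_n\triangleq\inf\{t\geq 0:S_t\geq n\}\wedge T$, prove the statement in each stopped market, and pass $n\to\infty$ using the relative Fatou closedness of Lemma \ref{fatoulem} combined with the convex-combination compactness of Lemma \ref{lemm}.

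Under the boundedness assumption, the dominance condition $(g,0)\succeq(-X_T^{\max,\tilde{S}},0)$ appearing in $(\ref{chaUacp})$ is precisely the standing hypothesis on $g$, with the common constant $a>0$. Substituting $(\phi_T^0,\phi_T^1)=(g,0)$ into the inequality in $(\ref{chaUacp})$ and cancelling the $X_T^{\max,\tilde{S}}Z_T^0$ terms on both sides reduces the problem to showing
\[
\mathbb{E}[g\,Z_T^0]\leq x\quad\text{for every }Z_T=(Z_T^0,Z_T^1)\in\bar{\mathcal{Z}}.
\]
By Lemma \ref{lemZ}, each such $Z_T^0$ is the density of an absolutely continuous measure $\mathbb{Q}$, and the hypothesis $(\ref{ineassum})$ directly handles the subcase $\mathbb{Q}\sim\mathbb{P}$; the real task is to extend the bound from equivalent to merely absolutely continuous measures.

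For this extension I would fix a reference $Z_T^{\ast}=(Z_T^{\ast,0},Z_T^{\ast,1})\in\bar{\mathcal{Z}}$ corresponding to some $\mathbb{Q}^{\ast}\in\mathcal{M}$ (which exists by Assumption \ref{Assum}), and for $\epsilon\in(0,1)$ set $Z_T^{\epsilon}\triangleq(1-\epsilon)Z_T+\epsilon Z_T^{\ast}$. Then $Z_T^{\epsilon,0}>0$ almost surely and the ratio $Z_T^{\epsilon,1}/Z_T^{\epsilon,0}$ stays inside $[(1-\lambda)S,S]$ by convexity of the bid-ask spread, so by Lemma \ref{lemZ} the pair $Z_T^{\epsilon}$ corresponds to a genuine $\lambda$-CPS $(\mathbb{Q}^{\epsilon},\tilde{S}^{\epsilon})\in\mathcal{M}$; the hypothesis then yields $\mathbb{E}[g\,Z_T^{\epsilon,0}]\leq x$. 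To pass $\epsilon\to 0$, I would split $g=g^+-g^-$: Fatou's lemma controls the positive part, while $g^-\leq X_T^{\max,\tilde{S}^{\epsilon}}$ combined with the $\mathbb{Q}^{\epsilon}$-supermartingale property of $X^{\max,\tilde{S}^{\epsilon}}$ (which persists under absolutely continuous changes of measure, since each $\tilde{S}$ remains a local martingale) furnishes the $\mathbb{P}$-integrable dominant $g^-(Z_T^0+Z_T^{\ast,0})$ needed for dominated convergence on the negative part.

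The principal obstacle is the localization step: Lemma \ref{lem13}'s characterization $(\ref{chaUacp})$ depends explicitly on $S$ being bounded, and reassembling the acceptable portfolio from its stopped-market counterparts requires delicate use of Lemmas \ref{lemm} and \ref{fatoulem} to preserve self-financing, the common constant $a>0$, and the stochastic lower bounds indexed by the entire family $\tilde{S}\in\mathcal{S}$ through the limit $n\to\infty$. A secondary subtlety is verifying that the supermartingale bound on $X^{\max,\tilde{S}^{\epsilon}}$ carries over uniformly in $\epsilon$ so that the Fatou/dominated-convergence step in the extension from $\mathcal{M}$ to $\bar{\mathcal{Z}}$ is fully rigorous.
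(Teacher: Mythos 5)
Your overall architecture coincides with the paper's: localize $S$ via $\tau_n$, invoke the characterization $(\ref{chaUacp})$ of Lemma $\ref{lem13}$ in the bounded market, upgrade the bound from equivalent to merely absolutely continuous densities by mixing with a strictly positive element of $\bar{\mathcal{Z}}$, and pass $n\to\infty$ through convex combinations and relative Fatou closedness. Your mixing step is essentially the paper's construction $\hat{Z}^n=\beta\bar{Z}_T+(1-\beta)Z_T^n$ (the paper runs it inside a contradiction argument rather than as an $\epsilon\to 0$ limit, and by linearity of $Z\mapsto\mathbb{E}[gZ_T^0]$ no Fatou or dominated convergence is really needed there).

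The genuine gap is the localization step, which you flag as "the principal obstacle" but do not resolve, and it is not routine. The hypothesis $(\ref{ineassum})$ bounds $\mathbb{E}^{\mathbb{Q}}[g]$ only for $\mathbb{Q}\in\mathcal{M}(\lambda,S)$, i.e.\ for consistent price systems of the \emph{original} process $S$ on all of $[0,T]$. To apply $(\ref{chaUacp})$ in the stopped market you need the analogous bound against every element of $\bar{\mathcal{Z}}(\lambda_n,S^{\tau_n})$, and a CPS for $S^{\tau_n}$ is in general not (the restriction of) a CPS for $S$, so the hypothesis simply is not available there. The paper closes this gap with two constructions you omit: (i) it works with a strictly smaller transaction cost $\lambda_n<\lambda$ on the stopped market and, given any $\lambda_n$-CPS $(Z^0,Z^1)$ for $S^{\tau_n}$, pastes it at $\tau_n$ with a $\lambda'$-CPS for $S$ (supplied by Assumption $\ref{Assum}$ for $\lambda'$ small), rescaling by $(1-\lambda')$ so that the concatenated ratio $\bar{Z}^1/\bar{Z}^0$ lands back inside the full spread $[(1-\lambda)S,S]$ — this produces a genuine $\lambda$-CPS $\bar{\mathbb{Q}}$ for $S$ to which $(\ref{ineassum})$ applies and which agrees with the stopped CPS up to $\tau_n$; and (ii) it replaces $g$ by an $\mathcal{F}_{\tau_n}$-measurable truncation $g^n$ (equal to $g$ on $\{\tau_n=T\}$ and to a conditional essential infimum of the stochastic lower bounds on $\{\tau_n<T\}$), for which one can prove $\mathbb{E}^{\mathbb{Q}}[g^n]=\mathbb{E}^{\bar{\mathbb{Q}}}[g^n]\le\mathbb{E}^{\bar{\mathbb{Q}}}[g]\le x$; superhedging $g$ itself in the $n$-th stopped market is not meaningful. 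Without the pasting construction and the definition of $g^n$, the claim "prove the statement in each stopped market" cannot be carried out, so the proposal as written does not constitute a proof.
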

\begin{proof}

As $S$ is locally bounded, let us consider the localizing sequence $(\tau_n)_{n\in\mathbb{N}}$ such that $S_{t\wedge \tau_n}\leq K(n)$. Define
\begin{equation}
g^n=\left\{
\begin{array}{rl}
g,&\ \ \ \ \text{on}\ \ \{\tau_n=T\},\\
\underset{\mathbb{Q}\in\mathcal{M}(\lambda, S)}{\text{essinf}}\mathbb{E}^{\mathbb{Q}}\bigg[-\underset{\tilde{S}\in\mathcal{S}(\lambda)}{\text{essinf}}\ \Big(X_T^{\max,\tilde{S}}\Big) \bigg|\mathcal{F}_{\tau_n}\bigg],&\ \ \ \ \text{on}\ \ \{\tau_n<T\}.\nonumber
\end{array}\right.
\end{equation}
It is clear that $(g^n)_{n=1}^{\infty}$ is $\mathcal{F}_{\tau_n}$-measurable and $g^n$ converges to $g$, $\mathbb{P}$-a.s..

Let $0<\lambda_n <\lambda$ be a sequence of real numbers increasing to $\lambda$. For each fixed $n\in\mathbb{N}$,  we consider the stopped process $S^{\tau_n}$ with the transaction costs $\lambda^n$. It is easy to check that for $0<\lambda'<1$, any stopped $\lambda'$-CPS $(\mathbb{Q}, \tilde{S}^{\tau_n})$ for $S$ is also a $\lambda'$-CPS for $S^{\tau_n}$. Moreover, by Proposition $6.1$ of \cite{Sch33}, for any stopped $\lambda'$-CPS $(\mathbb{Q},\tilde{S}^{\tau_n})$, we obtain that $\tilde{S}^{\tau_n}$ is a true $\mathbb{Q}$-martingale instead of a $\mathbb{Q}$-local martingale. Therefore, for any $0<\lambda'<1$, the stopped process $S^{\tau_n}$ admits a $\lambda'$-CPS $(\mathbb{Q}, \tilde{S})$ such that $\tilde{S}$ is a $\mathbb{Q}$-martingale.

Following the proof of Theorem $1.4$ of \cite{Sch33}, for each fixed $n$, we will only consider the $\lambda_n$-CPS $(\mathbb{Q}, \tilde{S})$ such that $\tilde{S}$ takes values in the spread $[(1-\lambda_n)S^{\tau_n}, S^{\tau_n}]$ and $\tilde{S}$ is a true $\mathbb{Q}$-martingale. Let $(Z^0,Z^1)$ denote the associated martingales with respect to the $\lambda_n$-CPS $(\mathbb{Q}, \tilde{S})$ for the stopped price process $S^{\tau_n}$. We will construct a $\lambda$-CPS $(\bar{Z}^0, \bar{Z}^1)$ for the original price process $S$. Fix $0<\lambda' <\frac{\lambda-\lambda_n}{2}$. Assumption $\ref{Assum}$ gives the existence of a $\lambda'$-CPS $(\hat{Z}^0, \hat{Z}^1)$ for $S$ where $\hat{Z}^0$ is a martingale and $\hat{Z}^1$ is a local martingale.

Let us define
\begin{equation}
\bar{Z}_t^0=\left\{
\begin{array}{rl}
Z_t^0,&\ \ \ \ 0\leq t\leq \tau_n,\\
\hat{Z}_t^0\frac{Z^0_{\tau_n}}{\hat{Z}^0_{\tau_n}},&\ \ \ \ \tau_n\leq t\leq T,
\end{array}\right.\nonumber
\end{equation}
and also
\begin{equation}
\bar{Z}_t^1=\left\{
\begin{array}{rl}
(1-\lambda')Z_t^1,&\ \ \ \ 0\leq t\leq \tau_n,\\
(1-\lambda')\hat{Z}_t^1\frac{Z^1_{\tau_n}}{\hat{Z}^1_{\tau_n}},&\ \ \ \ \tau_n\leq t\leq T.
\end{array}\right.\nonumber
\end{equation}

It is clear that $\bar{Z}^0$ (resp. $\bar{Z}^1$) is a positive martingale (resp. local martingale) under $\mathbb{P}$ and $\frac{d\bar{\mathbb{Q}}}{d\mathbb{P}}=\bar{Z}^0_T$ defined a probability measure on $\mathcal{F}$ which is equivalent to $\mathbb{P}$. Moreover, for $0\leq t\leq \tau_n$, we have $\frac{\bar{Z}_t^1}{\bar{Z}_t^0}$ stays in the spread $[(1-\lambda_n)(1-\lambda')S_t, (1-\lambda')S_t]$. On the other hand, for $\tau_n\leq t\leq T$, we can verify that $\frac{\bar{Z}_t^1}{\bar{Z}_t^0}$ lies in $[(1-\lambda_n)(1-\lambda')^2S_t, (1-\lambda')^2S_t]$. It follows that $\frac{\bar{Z}^1}{\bar{Z}^0}$ takes its values in $[(1-\lambda)S, S]$. We first claim that
\begin{equation}\label{addgn}
\mathbb{E}^{\bar{\mathbb{Q}}}[g^n]\leq \mathbb{E}^{\bar{\mathbb{Q}}}[g].
\end{equation}
To see this, let us denote $f\triangleq -\underset{\tilde{S}\in\mathcal{S}(\lambda)}{\text{essinf}}\ \Big(X_T^{\max,\tilde{S}}\Big)$. As $\bar{\mathbb{Q}}\in\mathcal{M}(\lambda, S)$, it follows that
\begin{align*}
\mathbb{E}^{\bar{\mathbb{Q}}}[g^n]&=\mathbb{E}^{\bar{\mathbb{Q}}}[g\mathbf{1}_{\{\tau_n=T\}}]+\mathbb{E}^{\bar{\mathbb{Q}}}[g^n\mathbf{1}_{\{\tau_n<T\}}]\\
&\leq \mathbb{E}^{\bar{\mathbb{Q}}}[g\mathbf{1}_{\{\tau_n=T\}}]+\mathbb{E}^{\bar{\mathbb{Q}}}[ \mathbb{E}^{\bar{\mathbb{Q}}}[ f|\mathcal{F}_{\tau_n}]\mathbf{1}_{\{\tau_n<T\}}]\\
&=\mathbb{E}^{\bar{\mathbb{Q}}}[g\mathbf{1}_{\{\tau_n=T\}}]+\mathbb{E}^{\bar{\mathbb{Q}}}[f \mathbf{1}_{\{\tau_n<T\}} ].
\end{align*}
Recall that $g\geq f$, $\mathbb{P}$-a.s., and therefore $g\mathbf{1}_{\{\tau_n<T\}}\geq f\mathbf{1}_{\{\tau_n<T\}}$, $\mathbb{P}$-a.s.. It follows that $g\mathbf{1}_{\{\tau_n<T\}}\geq f\mathbf{1}_{\{\tau_n<T\}}$, $\bar{\mathbb{Q}}$-a.s. because $\bar{\mathbb{Q}}\sim\mathbb{P}$. We deduce that $\mathbb{E}^{\bar{\mathbb{Q}}}[f \mathbf{1}_{\{\tau_n<T\}} ]\leq \mathbb{E}^{\bar{\mathbb{Q}}}[g \mathbf{1}_{\{\tau_n<T\}} ]$, which implies that $(\ref{addgn})$ holds. By $(\ref{ineassum})$, $(\ref{addgn})$ and the fact that $g^n$ is $\mathcal{F}_{\tau_n}$-measurable, we can conclude that
\begin{equation}\label{ineg}
\mathbb{E}^{\mathbb{Q}}[g^n]=\mathbb{E}^{\bar{\mathbb{Q}}}[g^n]\leq \mathbb{E}^{\bar{\mathbb{Q}}}[g]\leq x.
\end{equation}

For each fixed $n\in\mathbb{N}$, consider a pair $(\phi_T^{0,n},\phi_T^{1,n})\notin \mathcal{U}_x(\lambda_n, S^{\tau_n})$ where we consider the stopped process $S^{\tau_n}$ as the underlying price process with transaction costs $\lambda_n$ such that $\phi_t^{i,n}=\phi_{\tau_n}^{i,n}$ for $\tau_{n}\leq t\leq T$. By the definition of the set $\mathcal{W}(x;\lambda_n, S^{\tau_n})$ and the characterization $(\ref{chaUacp})$ of $\mathcal{W}(x;\lambda_n, S^{\tau_n})$, for any constant $a_n>0$ and any $\widetilde{X}^{\max,\tilde{S}^n}\in\mathcal{X}(\tilde{S}^{n}, a_n)$ with the property $(\phi_T^{0,n}, \phi_T^{1,n})\succeq (-\widetilde{X}_T^{\max,\tilde{S}^n}, 0)$, we have
\begin{align*}
\mathbb{E}[W^{0,n}Z_T^{0,n}+W^{1,n}Z_T^{1,n}]&\triangleq \mathbb{E}[(\phi_T^{0,n}+\essinf_{\tilde{S}^n\in\mathcal{S}}\widetilde{X}_T^{\max,\tilde{S}^n})Z_T^{0,n}+\phi_T^{1,n}Z_T^{1,n}]\\
&> x+\mathbb{E}[\essinf_{\tilde{S}^n\in\mathcal{S}}\widetilde{X}_T^{\max,\tilde{S}^n}Z_T^{0,n}],\nonumber
\end{align*}
for some $(Z^{0,n}_T, Z^{1,n}_T)\in\bar{\mathcal{Z}}(\lambda_n, S^{\tau_n})$.

In particular, we can choose some maximal elements $\widetilde{X}_t^{\max,\tilde{S}^n}\equiv a_n$ for $0\leq t\leq T$ and hence $\essinf_{\tilde{S}^n\in\mathcal{S}}\widetilde{X}_T^{\max,\tilde{S}^n}\leq a_n$ is integrable. It follows that $\phi_T^{0,n}+\phi_T^{1,n}\frac{Z_T^{1,n}}{Z_T^{0,n}}$ is $\mathbb{Q}^n$ integrable where $\frac{d\mathbb{Q}^n}{d\mathbb{P}}=Z_T^{0,n}$. We can obtain that
\begin{equation}
\mathbb{E}^{\mathbb{Q}^n}\left[\phi_T^{0,n}+\phi_T^{1,n}\frac{Z_T^{1,n}}{Z_T^{0,n}}\right]>x.\nonumber
\end{equation}
In the case that $\mathbb{Q}^n$ is only absolutely continuous with respect to $\mathbb{P}$, but not equivalent to $\mathbb{P}$, The above argument asserts that any $\lambda^n$-CPS for $S$ is $\lambda^n$-CPS for $S^{\tau_n}$. Therefore, there exists some $(\bar{Z}^0_T, \bar{Z}^1_T)\in\bar{\mathcal{Z}}(\lambda^n, S^{\tau_n})$ such that $\bar{Z}_T^0>0$ a.s. For $0<\beta<1$ sufficiently small, define $\hat{Z}^n=\beta\bar{Z}_T+(1-\beta)Z^{n}_T$. We obtain that $\hat{Z}_T^{0,n}>0$ a.s. Define
\begin{equation}
\hat{S}^n_t=\frac{\hat{Z}^{1,n}_t}{\hat{Z}^{0,n}_t},\ \ \ \text{and}\ \ \ \frac{d\hat{\mathbb{Q}}^n}{d\mathbb{P}}=\hat{Z}_T^{0,n}.\nonumber
\end{equation}
We have $(\hat{\mathbb{Q}}^n, \hat{S}^n)$ is $\lambda^n$-CPS and also $\mathbb{E}^{\hat{\mathbb{Q}}^n}[\phi_T^{0,n}+\phi_T^{1,n}\hat{S}^n_T]>x$, which can not satisfy $(\ref{ineg})$. Therefore, we obtain that if $(\ref{ineg})$ holds, there exits a pair $(\phi_T^{0,n}, \phi_T^{1,n})\in\mathcal{U}_x(\lambda^n, S^{\tau_n})$ such that $(\phi_0^{0,n}, \phi_0^{1,n})=(x,0)$ and $(\phi_T^{0,n}, \phi_T^{1,n})=(\phi_{\tau_n}^{0,n}, \phi_{\tau_n}^{1,n})=(g^n, 0)$.

By taking the limit $n\rightarrow\infty$ and the convex combinations of $(\phi^{0,n}, \phi^{1,n})$, similar to the proof of Lemma $\ref{fatoulem}$, we can conclude that $(\phi^0, \phi^1)\in\mathcal{U}_x(\lambda,S)$, which completes the proof.
\end{proof}

To prove Proposition $\ref{dualprop}$, we still need auxiliary results from Lemma $\ref{clK}$ to Lemma $\ref{lemma11}$ as below.

\begin{lemma}\label{clK}
Under Assumptions $\ref{Assum}$, $\ref{assE}$ and $\ref{assNo}$, we have $\bar{\mathcal{K}}=\{(x,q)\in\mathbb{R}^{1+N}:\mathcal{H}(x,q)\neq \emptyset\}$, where $\bar{\mathcal{K}}$ is the closure of the set $\mathcal{K}$ in $\mathbb{R}^{1+N}$.
\end{lemma}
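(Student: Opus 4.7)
The plan is to establish the set equality through the standard convex-analysis identity: if $D\subset\mathbb{R}^{1+N}$ is a closed convex set with non-empty interior, then $D=\overline{\text{int}(D)}$. Setting $D\triangleq\{(x,q)\in\mathbb{R}^{1+N}:\mathcal{H}(x,q)\neq\emptyset\}$, so that $\mathcal{K}=\text{int}(D)$ by definition, the desired identity $\bar{\mathcal{K}}=D$ then follows at once. It therefore suffices to verify (i) $D$ is convex, (ii) $\text{int}(D)\neq\emptyset$, and (iii) $D$ is closed.

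Step (i) reduces to showing that scaling acceptable portfolios by $\alpha>0$ and summing two acceptable portfolios preserve acceptability (with the obvious rescaling of the initial wealth). Scaling is immediate, with stochastic lower bound $\alpha X^{\max,\tilde{S}}$. For summation of $\phi\in\mathcal{A}_{x_1}$ and $\psi\in\mathcal{A}_{x_2}$, the self-financing inequality $(\ref{self})$ is preserved because the Jordan-Hahn decomposition of $\phi^1+\psi^1$ is dominated termwise by the sum of the decompositions of $\phi^1$ and $\psi^1$; moreover the subadditivity of the positive part gives $V(\phi+\psi)\geq V(\phi)+V(\psi)$, so any maximal element of $\mathcal{X}(\tilde{S},a_1+a_2)$ dominating $X^{\max,\tilde{S}}+Y^{\max,\tilde{S}}$ serves as the required stochastic lower bound; this convex combination of acceptable portfolios realizes $\alpha V^1_T+(1-\alpha)V^2_T$ as an element of $\mathcal{H}(\alpha x_1+(1-\alpha)x_2,\alpha q_1+(1-\alpha)q_2)$.

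For step (ii), pick $x_0>0$; the trivial portfolio shows $(x_0,0)\in D$. For $(x,q)$ close to $(x_0,0)$ I apply Lemma~\ref{charac} to $g=-q\cdot\mathcal{E}_T$. The stochastic-lower-bound hypothesis follows from Lemma~\ref{boundE}, which gives $g\geq -(\max_i |q_i|)X_T^{\max,\tilde{S}}$ for every $\tilde{S}\in\mathcal{S}$; the super-hedging hypothesis $\mathbb{E}^{\mathbb{Q}}[g]\leq x$ follows from the bound $C\triangleq\sup_{\mathbb{Q}\in\mathcal{M}}\mathbb{E}^{\mathbb{Q}}[\sum_i\mathcal{E}^i_T]<\infty$ guaranteed by Assumption~\ref{assE}, since then $\mathbb{E}^{\mathbb{Q}}[-q\cdot\mathcal{E}_T]\leq C\max_i |q_i|\leq x$ in a sufficiently small neighborhood of $(x_0,0)$. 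Lemma~\ref{charac} then delivers an acceptable portfolio in $\mathcal{A}_x$ with terminal value $V_T=-q\cdot\mathcal{E}_T$, $\phi^1_T=0$, so $V_T+q\cdot\mathcal{E}_T=0\geq 0$ and $(x,q)\in D$, showing $(x_0,0)\in\text{int}(D)$.

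Step (iii) is the core of the argument, and the main obstacle. Let $(x_n,q_n)\to(x_0,q_0)$ with $(x_n,q_n)\in D$, and choose $V^n_T\in\mathcal{V}_{x_n}$ with $V^n_T+q_n\cdot\mathcal{E}_T\geq 0$. Applying Lemma~\ref{ineqC} to $g=V^n_T+q_n\cdot\mathcal{E}_T\in\mathcal{C}(x_n,q_n)$ yields $\mathbb{E}^{\mathbb{Q}}[V^n_T]\leq x_n$ for every $\mathbb{Q}\in\mathcal{M}$; combined with $V^n_T\geq -q_n\cdot\mathcal{E}_T$ this forces $-q_n\cdot\mathbb{E}^{\mathbb{Q}}[\mathcal{E}_T]\leq x_n$. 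Since each expectation $\mathbb{E}^{\mathbb{Q}}[\mathcal{E}_T]$ is finite by Assumption~\ref{assE}, letting $n\to\infty$ gives $-q_0\cdot\mathbb{E}^{\mathbb{Q}}[\mathcal{E}_T]\leq x_0$ uniformly in $\mathbb{Q}\in\mathcal{M}$. Reapplying Lemma~\ref{charac} to $g=-q_0\cdot\mathcal{E}_T$ at initial wealth $x_0$ (with stochastic lower bound again from Lemma~\ref{boundE}) produces an acceptable portfolio witnessing $(x_0,q_0)\in D$. The difficulty is that one cannot take a direct subsequential limit of the portfolios $(\phi^n)$, since their stochastic lower bounds depend on $n$; the key idea is to bypass portfolio-level compactness entirely by translating membership in $D$ into the purely dual super-hedging condition $\sup_{\mathbb{Q}\in\mathcal{M}}\mathbb{E}^{\mathbb{Q}}[-q\cdot\mathcal{E}_T]\leq x$ via Lemmas~\ref{ineqC} and~\ref{charac}, a condition that is linear -- and hence continuous -- in $(x,q)$ thanks to Assumption~\ref{assE}.
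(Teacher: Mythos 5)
Your proof is correct, and it is actually more complete than the paper's own argument, but the heart of it --- the closedness of $D=\{(x,q)\in\mathbb{R}^{1+N}:\mathcal{H}(x,q)\neq\emptyset\}$ --- is handled by a genuinely different route. The paper only explicitly proves the inclusion $\bar{\mathcal{K}}\subseteq D$ (leaving the reverse inclusion, which needs exactly your convexity and nonempty-interior observations, implicit), and it does so by a compactness argument: it extracts convex combinations of the approximating terminal positions $(\phi^{0,n}_T,\phi^{1,n}_T)$ converging almost surely (via Lemma $\ref{lemm}$ and Lemma A1.1 of \cite{Sch94}), applies Fatou's lemma to obtain $\mathbb{E}^{\mathbb{Q}}[V(\phi^0,\phi^1)_T]\leq x$ for the limiting position, and only then invokes Lemma $\ref{charac}$. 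You bypass the compactness step entirely by observing that Lemmas $\ref{ineqC}$ and $\ref{charac}$ (together with Lemma $\ref{boundE}$) characterize $D$ as $\{(x,q):\sup_{\mathbb{Q}\in\mathcal{M}}\mathbb{E}^{\mathbb{Q}}[-q\cdot\mathcal{E}_T]\leq x\}$, an intersection of closed half-spaces; this makes closedness --- and, incidentally, convexity, so your portfolio-level step (i) is redundant --- immediate. Both routes ultimately rest on the superhedging Lemma $\ref{charac}$; yours is shorter and avoids the limit-portfolio construction, at the cost of applying Lemma $\ref{ineqC}$ at points $(x_n,q_n)\in D\setminus\mathcal{K}$, outside its stated hypothesis. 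This is harmless, since the proof of that lemma uses only $\mathcal{H}(x_n,q_n)\neq\emptyset$, but it should be stated explicitly. Two further small points: in step (i) the sum $X^{\max,\tilde{S}}+Y^{\max,\tilde{S}}$ of maximal elements is itself maximal by the Delbaen--Schachermayer theory invoked in \cite{DS97}, so no dominating maximal element need be sought; and ``uniformly in $\mathbb{Q}$'' in step (iii) should simply read ``for every $\mathbb{Q}\in\mathcal{M}$,'' since the supremum over $\mathbb{Q}$ is taken after the pointwise limit in $n$.
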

\begin{proof}
Fix any $(x,q)\in\bar{\mathcal{K}}$, and let $(x^n,q^n)_{n\geq 1}$ be a sequence in $\mathcal{K}$ that converges to $(x,q)$. We need to verify that $\mathcal{H}(x,q)\neq\emptyset$. Choose a sequence $V_T^n\in\mathcal{H}(x^n,q^n)$ with $V_T^n=V(\phi^{0,n},\phi^{1,n})_T$ and $(\phi^{0,n}, \phi^{1,n})\in\mathcal{A}_{x^n}$, $n\geq 1$. Lemma $\ref{ineqC}$ gives that
\begin{equation}\label{lemineqC}
\mathbb{E}^{\mathbb{Q}}[V(\phi^{0,n}, \phi^{1,n})_T+q^n\cdot\mathcal{E}_T]\leq x^n+\mathbb{E}^{\mathbb{Q}}[q^n\cdot\mathcal{E}_T],\ \ \forall \mathbb{Q}\in\mathcal{M}.
\end{equation}
In addition, the fact that $x^n\rightarrow x$ and $q^n\rightarrow q$ imply that that there exists finite constants $k_1$ and $k_2$ such that $x^n<k_1$ and $(q^n)^i<k_2$, $1\leq i\leq N$, for $n$ large enough. We deduce that $q^n\cdot\mathcal{E}_T \leq k_2\sum_{i=1}^{N}\mathcal{E}_T^i$. By Lemma $\ref{boundE}$, it follows that there exists a constant $\hat{a}>0$ and for each $\tilde{S}\in\mathcal{S}(\lambda)$, there exists a $\hat{X}^{\max,\tilde{S}}\in\mathcal{X}(\tilde{S},\hat{a})$ such that $V(\phi^{0,n}, \phi^{1,n})_T+\hat{X}_T^{\max,\tilde{S}}\geq 0$ for $n$ large enough. Lemma $\ref{lemm}$ and Lemma $A1.1$ of \cite{Sch94} imply that we can find the convex combinations of $\phi_T^{0,n}$ and $\phi_T^{1,n}$ converging almost surely to random variables $\phi_T^{0}$ and $\phi_T^{1}$ respectively. Moreover, it is clear that $V(\phi^0,\phi^1)_T+q\cdot\mathcal{E}_T\geq 0$ a.s. where $V(\phi^0,\phi^1)_T=\phi_T^0+(\phi_T^1)^+(1-\lambda)S_T-(\phi_T)^-S_T$. Fatou's Lemma and $(\ref{lemineqC})$ therefore imply that
\begin{align*}
&\mathbb{E}^{\mathbb{Q}}[V(\phi^0,\phi^1)_T+q\cdot\mathcal{E}_T]\leq \lim_{n\rightarrow\infty}\mathbb{E}^{\mathbb{Q}}[V(\phi^{0,n},\phi^{1,n})_T+q^n\cdot\mathcal{E}_T]\\
\leq &\lim_{n\rightarrow\infty}\Big(x^n+\mathbb{E}^{\mathbb{Q}}[q^n\cdot\mathcal{E}_T]\Big)=x+\mathbb{E}^{\mathbb{Q}}[q\cdot\mathcal{E}_T],\ \ \forall \mathbb{Q}\in\mathcal{M}.
\end{align*}
It follows that $\mathbb{E}^{\mathbb{Q}}[V(\phi^0,\phi^1)_T]\leq x$, $\forall \mathbb{Q}\in\mathcal{M}$. Lemma $\ref{charac}$ guarantees the existence of acceptable portfolio $(\hat{\phi}^0,\hat{\phi}^1)\in\mathcal{A}_x$ such that $V(\hat{\phi}^0,\hat{\phi}^1)_T\geq V(\phi^0,\phi^1)_T\geq -q\cdot\mathcal{E}_T$. Therefore, we obtain that $V(\hat{\phi}^0,\hat{\phi}^1)_T\in\mathcal{H}(x,q)$.
\end{proof}

For a vector $p\in\mathbb{R}^N$, we define the set
\begin{equation}\label{auxiM}
\mathcal{M}(p)=\{\mathbb{Q}\in\mathcal{M}: \mathbb{E}^{\mathbb{Q}}[\mathcal{E}_T]=p\}
\end{equation}
From its definition, $\mathcal{P}$ is the intersection of $\mathcal{L}$ with the hyperplane $y\equiv 1$ which defines the set of arbitrage-free prices of the contingent claim $\mathcal{E}_T$.

\begin{lemma}\label{lemma9}
Assume that all conditions of Proposition $\ref{dualprop}$ hold and let $p\in\mathbb{R}^N$. The set $\mathcal{M}(p)$ is not empty if and only if $p\in\mathcal{P}$. In particular, $\bigcup_{p\in\mathcal{P}}\mathcal{M}(p)=\mathcal{M}$.
\end{lemma}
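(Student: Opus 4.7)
The strategy is to identify the slice $\mathcal{L}\cap\{y=1\}$ with the relative interior of the convex price set
\[
C\triangleq\{\mathbb{E}^{\mathbb{Q}}[\mathcal{E}_T]:\mathbb{Q}\in\mathcal{M}\},
\]
after which both directions of the equivalence fall out of essentially soft arguments, and the forward direction requires one rigidity step based on Assumption $\ref{assNo}$.

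Combining Lemma $\ref{clK}$ with the superhedging direction of Lemma $\ref{charac}$, I will first show that $\bar{\mathcal{K}}=\{(x,q):x\geq f(q)\}$ for $f(q)\triangleq\sup_{\mathbb{Q}\in\mathcal{M}}\mathbb{E}^{\mathbb{Q}}[-q\cdot\mathcal{E}_T]$ (finiteness of $f$ comes from Lemma $\ref{boundE}$ and Assumption $\ref{assE}$). Consequently $(1,r)$ belongs to $\mathcal{L}_0\triangleq\{(y,r):xy+q\cdot r\geq 0\ \forall(x,q)\in\mathcal{K}\}$ iff $-q\cdot r\leq f(q)$ for every $q\in\mathbb{R}^N$, which is precisely the support-function characterization of membership in $\bar{C}$; here I use that $C$ is convex since convex combinations of CPS densities $(Z^{\lambda,0},Z^{\lambda,1})=\lambda(Z^{1,0},Z^{1,1})+(1-\lambda)(Z^{2,0},Z^{2,1})$ are again CPS densities, producing convex combinations of prices. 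The cone structure $\mathcal{L}_0=\mathbb{R}_+\cdot(\{1\}\times\bar{C})$ together with $\mathrm{ri}(\bar{C})=\mathrm{ri}(C)$ then gives $\mathcal{P}=\mathcal{L}\cap\{y=1\}=\mathrm{ri}(C)\subset C$, which immediately yields the backward implication $p\in\mathcal{P}\Rightarrow\mathcal{M}(p)\neq\emptyset$.

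For the forward implication, I fix $\mathbb{Q}\in\mathcal{M}(p)$; applying Lemma $\ref{ineqC}$ to $g=0\in\mathcal{C}(x,q)$ places $(1,p)$ in $\mathcal{L}_0$. To upgrade to $(1,p)\in\mathcal{L}=\mathrm{ri}(\mathcal{L}_0)$, the convex-analytic fact I will invoke is that it suffices to show any $(x,q)\in\bar{\mathcal{K}}$ with $x+q\cdot p=0$ lies in the lineality space $L\triangleq\bar{\mathcal{K}}\cap(-\bar{\mathcal{K}})$. Given such $(x,q)$, choose $V_T\in\mathcal{H}(x,q)$ (nonempty by Lemma $\ref{clK}$); then $V_T+q\cdot\mathcal{E}_T\in\mathcal{C}(x,q)$, so Lemma $\ref{ineqC}$ forces $\mathbb{E}^{\mathbb{Q}}[V_T+q\cdot\mathcal{E}_T]\leq x+q\cdot p=0$. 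Because $V_T+q\cdot\mathcal{E}_T\geq 0$ and $\mathbb{Q}\sim\mathbb{P}$, the integrand vanishes $\mathbb{P}$-a.s., so the associated acceptable portfolio replicates $-q\cdot\mathcal{E}_T$; Assumption $\ref{assNo}$ applied to $-q$ forces $q=0$ and hence $x=0$, so $(x,q)=(0,0)\in L$. Therefore $(1,p)\in\mathcal{L}$, i.e., $p\in\mathcal{P}$.

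Finally the identity $\bigcup_{p\in\mathcal{P}}\mathcal{M}(p)=\mathcal{M}$ is immediate: $\subseteq$ holds by definition of $\mathcal{M}(p)$, and for $\supseteq$ any $\mathbb{Q}\in\mathcal{M}$ lies in $\mathcal{M}(p)$ with $p\triangleq\mathbb{E}^{\mathbb{Q}}[\mathcal{E}_T]\in\mathcal{P}$ by the forward implication just established. The main obstacle will be the rigidity step in the forward direction, where the equality case of Lemma $\ref{ineqC}$ together with $\mathbb{Q}\sim\mathbb{P}$ is used to force exact replication, and then Assumption $\ref{assNo}$ is invoked to obtain $q=0$; a secondary, more routine technical point is verifying that Kabanov convex combinations preserve the CPS property so that $C$ is convex and $\mathrm{ri}(\bar{C})=\mathrm{ri}(C)$ can be read back as a subset of $C$.
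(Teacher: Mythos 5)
Your argument is correct and takes essentially the same route as the paper, whose proof of Lemma \ref{lemma9} simply defers to Lemma 8 of \cite{kram04} with the substitutions $\mathcal{M}(p)$, Lemma \ref{ineqC}, Lemma \ref{charac} and Lemma \ref{clK}: your identification of $\mathcal{P}$ with the relative interior of the price set $C$ via the polarity between $\bar{\mathcal{K}}$ and $\mathcal{L}_0$, together with the use of Assumption \ref{assNo} to force $x+q\cdot p>0$ off the lineality space, is precisely that argument written out. The only point worth flagging is that you invoke Lemma \ref{ineqC} at boundary points $(x,q)\in\bar{\mathcal{K}}\setminus\mathcal{K}$, outside its stated hypothesis $(x,q)\in\mathcal{K}$; this is harmless because its proof uses nothing beyond $\mathcal{H}(x,q)\neq\emptyset$, which Lemma \ref{clK} supplies.
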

\begin{proof}
Under Assumptions $\ref{assE}$ and $\ref{assNo}$, Lemma $\ref{lemma9}$ follows directly from the proof Lemma $8$ of \cite{kram04}, if we replace the set $\mathcal{M}'(p)$, Lemma $4$, Lemma $5$ and Lemma $6$ in \cite{kram04} by the set $\mathcal{M}(p)$, Lemma $\ref{ineqC}$, Lemma $\ref{charac}$ and Lemma $\ref{clK}$ in this paper.
\end{proof}

\begin{lemma}\label{lemma10}
Under the assumptions of Proposition $\ref{dualprop}$ and $p\in\mathcal{P}$, the density process of any $\mathbb{Q}\in\mathcal{M}(p)$ belongs to $\mathcal{Y}(1,p)$.
\end{lemma}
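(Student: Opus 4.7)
The statement to prove is that for $p\in\mathcal{P}$ and $\mathbb{Q}\in\mathcal{M}(p)$, the density process $Z^0_t=\mathbb{E}[\frac{d\mathbb{Q}}{d\mathbb{P}}|\mathcal{F}_t]$ has terminal value $Z^0_T$ lying in $\mathcal{Y}(1,p)$. The plan is to decompose the proof into two independent checks corresponding to the two defining conditions of $\mathcal{Y}(1,p)$ in $(\ref{dualsp})$: membership in $\mathcal{Y}(1)$, and the integral inequality against all $V_T\in\mathcal{H}(x,q)$.

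For the first check, since $\mathbb{Q}\in\mathcal{M}(p)\subset\mathcal{M}=\bigcup_{\tilde S\in\mathcal{S}}\mathcal{M}(\tilde S)$, we may pick $\tilde S\in\mathcal{S}$ with $\mathbb{Q}\in\mathcal{M}(\tilde S)$. Setting $Z^1_t\triangleq \tilde S_t Z^0_t$ produces a pair $(Z^0,Z^1)\in\mathcal{B}$ with $Z^0_0=1$. Invoking the inclusion $\mathcal{B}\subset\mathcal{Z}(1)$ (Proposition $1.6$ of \cite{Sch2}, already cited in the excerpt after $(\ref{dualY})$), we obtain $(Z^0,Z^1)\in\mathcal{Z}(1)$ and hence $Z^0_T\in\mathcal{Y}(1)$ by the definition of $\mathcal{Y}(y)$.

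For the second check, fix an arbitrary $(x,q)\in\mathcal{K}$ and $V_T\in\mathcal{H}(x,q)$. By definition of the density process,
\begin{equation}
\mathbb{E}\bigl[Z^0_T(V_T+q\cdot\mathcal{E}_T)\bigr]=\mathbb{E}^{\mathbb{Q}}[V_T+q\cdot\mathcal{E}_T].\nonumber
\end{equation}
Since $V_T+q\cdot\mathcal{E}_T\in\mathcal{C}(x,q)$ by $(\ref{primsp})$, and since $\mathbb{Q}\in\mathcal{M}$, Lemma $\ref{ineqC}$ yields
\begin{equation}
\mathbb{E}^{\mathbb{Q}}[V_T+q\cdot\mathcal{E}_T]\leq x+\mathbb{E}^{\mathbb{Q}}[q\cdot\mathcal{E}_T]=x+q\cdot\mathbb{E}^{\mathbb{Q}}[\mathcal{E}_T]=x+q\cdot p,\nonumber
\end{equation}
where the last equality uses the defining property $\mathbb{E}^{\mathbb{Q}}[\mathcal{E}_T]=p$ of $\mathcal{M}(p)$ in $(\ref{auxiM})$. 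Since $(x,q)\in\mathcal{K}$ and $V_T\in\mathcal{H}(x,q)$ were arbitrary, this is exactly the inequality appearing in the definition $(\ref{dualsp})$ of $\mathcal{Y}(1,p)$, which completes the verification.

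There is essentially no difficult step: everything is a direct bookkeeping exercise combining Lemma $\ref{ineqC}$ (superhedging inequality under all CPS) with the $\mathcal{B}\subset\mathcal{Z}(1)$ inclusion. The only point to watch is ensuring that $V_T+q\cdot\mathcal{E}_T$ really sits in $\mathcal{C}(x,q)$ so that Lemma $\ref{ineqC}$ applies — but this is immediate from $(\ref{primeH})$ and $(\ref{primsp})$, since $V_T+q\cdot\mathcal{E}_T\geq 0$ by definition of $\mathcal{H}(x,q)$. The role of Assumption $\ref{assE}$ is hidden inside Lemma $\ref{ineqC}$ (which legitimises taking expectations of $q\cdot\mathcal{E}_T$ under $\mathbb{Q}$), so no additional integrability argument is required here.
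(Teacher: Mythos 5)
Your proof is correct and follows the same two-step route as the paper: first placing the density process in $\mathcal{Y}(1)$ via the inclusion $\mathcal{B}\subset\mathcal{Z}(1)$ (the paper cites Proposition $2.3$ of \cite{Sch2} for this), then deducing the integral inequality from Lemma $\ref{ineqC}$ combined with $\mathbb{E}^{\mathbb{Q}}[\mathcal{E}_T]=p$. You have simply spelled out the bookkeeping that the paper's two-sentence proof leaves implicit.
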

\begin{proof}
According to the definition of CPS and Proposition $2.3$ of \cite{Sch2}, it is clear that the density process of $\mathbb{Q}\in\mathcal{M}$ belongs to $\mathcal{Y}(1)$ defined by $(\ref{dualY})$ . The conclusion follows by Lemma $\ref{ineqC}$ and the definition of $\mathcal{M}(p)$.
\end{proof}

\begin{lemma}\label{lemma11}
Under the assumptions of Proposition $\ref{dualprop}$, a nonnegative random variable $g$ belongs to $\mathcal{C}(x,q)$ where $(x,q)\in\mathcal{K}$ if and only if
\begin{equation}\label{needto}
\mathbb{E}^{\mathbb{Q}}[g]\leq x+p\cdot q,\ \ \forall p\in\mathcal{P}\ \text{and}\ \mathbb{Q}\in\mathcal{M}(p).
\end{equation}
\end{lemma}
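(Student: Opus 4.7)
The forward implication is essentially a restatement of Lemma \ref{ineqC}. Given $g\in\mathcal{C}(x,q)$, pick $V_T\in\mathcal{H}(x,q)$ with $g\le V_T+q\cdot\mathcal{E}_T$. For any $p\in\mathcal{P}$ and $\mathbb{Q}\in\mathcal{M}(p)$, applying Lemma \ref{ineqC} yields
\begin{equation}
\mathbb{E}^{\mathbb{Q}}[g]\le \mathbb{E}^{\mathbb{Q}}[V_T+q\cdot\mathcal{E}_T]\le x+\mathbb{E}^{\mathbb{Q}}[q\cdot\mathcal{E}_T]=x+p\cdot q,\nonumber
\end{equation}
which is the desired inequality.

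For the converse, my plan is to reduce to Lemma \ref{charac} applied to the auxiliary random variable $g_0\triangleq g-q\cdot\mathcal{E}_T$. First, I combine the hypothesis with Lemma \ref{lemma9}: every $\mathbb{Q}\in\mathcal{M}$ lies in $\mathcal{M}(p)$ for the unique $p=\mathbb{E}^{\mathbb{Q}}[\mathcal{E}_T]\in\mathcal{P}$, so
\begin{equation}
\mathbb{E}^{\mathbb{Q}}[g_0]=\mathbb{E}^{\mathbb{Q}}[g]-p\cdot q\le x,\qquad \forall\mathbb{Q}\in\mathcal{M}.\nonumber
\end{equation}
Second, I verify the stochastic lower-bound hypothesis of Lemma \ref{charac}. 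Since each $\mathcal{E}_T^i\ge 0$ and $g\ge 0$, the estimate $|q\cdot\mathcal{E}_T|\le \max_i|q^i|\sum_{i=1}^N\mathcal{E}_T^i$ gives $g_0\ge -\max_i|q^i|\sum_i\mathcal{E}_T^i$. Invoking Lemma \ref{boundE}, there is a constant $a>0$ such that for each $\tilde{S}\in\mathcal{S}$ one may select $X^{\max,\tilde{S}}\in\mathcal{X}(\tilde{S},a)$ dominating $\sum_i\mathcal{E}_T^i$; rescaling by $c\triangleq\max_i|q^i|$ yields a maximal element $cX^{\max,\tilde{S}}\in\mathcal{X}(\tilde{S},ca)$ for which $g_0+cX_T^{\max,\tilde{S}}\ge 0$. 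Thus $g_0$ satisfies both assumptions of Lemma \ref{charac}.

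Applying Lemma \ref{charac} then produces $(\phi^0,\phi^1)\in\mathcal{A}_x$ with $(\phi^0_0,\phi^1_0)=(x,0)$ and $(\phi^0_T,\phi^1_T)=(g_0,0)$, so $g_0\in\mathcal{V}_x$. Because $g_0+q\cdot\mathcal{E}_T=g\ge 0$, we obtain $g_0\in\mathcal{H}(x,q)$, and since $g=g_0+q\cdot\mathcal{E}_T$ we conclude $g\in\mathcal{C}(x,q)$ by the very definition \eqref{primsp}.

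The main technical point I anticipate is the second verification above: it relies on Assumption \ref{assE} to give a \emph{uniform} (in $\tilde{S}$) stochastic dominator for $\sum_i\mathcal{E}_T^i$ via Lemma \ref{boundE}, after which scaling by $\max_i|q^i|$ is harmless because $\mathcal{X}(\tilde{S},\cdot)$ is a cone. Without the uniformity in $\tilde{S}$ built into Lemma \ref{boundE}, no single constant $a$ would work across all CPS, and Lemma \ref{charac} could not be invoked. Everything else is a straightforward bookkeeping of the identity $g=g_0+q\cdot\mathcal{E}_T$ together with Lemmas \ref{ineqC}, \ref{charac} and \ref{lemma9}.
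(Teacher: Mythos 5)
Your proof is correct and follows essentially the same route as the paper: the forward direction via Lemma \ref{ineqC}, and the converse by passing to $g-q\cdot\mathcal{E}_T$, using Lemma \ref{lemma9} to rewrite the supremum over $\mathcal{M}$, Lemma \ref{boundE} for the stochastic lower bound, and Lemma \ref{charac} to produce the acceptable portfolio. The only difference is that you spell out the rescaling of the maximal element by $\max_i|q^i|$, a detail the paper leaves implicit.
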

\begin{proof}
Suppose $g\in\mathcal{C}(x,q)$, Lemma $\ref{ineqC}$ implies the inequality $(\ref{needto})$. On the other hand, consider the random variable $\beta\triangleq g-q\cdot\mathcal{E}_T$. It follows from $(\ref{needto})$ that
\begin{align*}
\sup_{\mathbb{Q}\in\mathcal{M}}\mathbb{E}^{\mathbb{Q}}[\beta]&=\sup_{p\in\mathcal{P}}\sup_{\mathbb{Q}\in\mathcal{M}(p)}\mathbb{E}^{\mathbb{Q}}[\beta]=\sup_{p\in\mathcal{P}}\sup_{\mathbb{Q}\in\mathcal{M}(p)}(\mathbb{E}^{\mathbb{Q}}[g]-q\cdot p)\leq x.\nonumber
\end{align*}
Assumption $\ref{assE}$ and Lemma $\ref{boundE}$ imply the existence of a constant $\hat{a}>0$ such that for each $\tilde{S}\in\mathcal{S}$, there exists a $\hat{X}^{\max,\tilde{S}}\in\mathcal{X}(\tilde{S},\hat{a})$ and $\beta\geq -\hat{X}_T^{\max,\tilde{S}}$. Lemma $\ref{charac}$ guarantees the existence of acceptable portfolio with $\phi^0_0=x$, $\phi_0^1=0$ and $V(\phi^0,\phi^1)_T=\phi_T^0\geq \beta$. We therefore obtain that
\begin{equation}
0\leq g\leq V(\phi^0,\phi^1)_T+q\cdot\mathcal{E}_T,\nonumber
\end{equation}
which implies that $V(\phi^0,\phi^1)_T\in\mathcal{H}(x,q)$ and $g$ belongs to $\mathcal{C}(x,q)$.
\end{proof}

\begin{proof}[Proof of Proposition $\ref{dualprop}$] We first prove the assertion $(i)$. Assume that $(x,q)\in\mathcal{K}$. We can find a constant $\delta>0$ such that $(x-\delta, q)\in\mathcal{K}$ since $\mathcal{K}$ is open. Consider $V_T=V(\phi^0,\phi^1)_T\in\mathcal{H}(x-\delta,q)$, it is clear that $\widetilde{V}\triangleq V(\phi^0+\delta,\phi^1)$ is in $\mathcal{H}(x,q)$ and $\delta\leq \widetilde{V}_T+q\cdot\mathcal{E}_T$ which implies that $\delta\in\mathcal{C}(x,q)$.

Let $(x,q)\in\mathcal{K}$. If $g\in\mathcal{C}(x,q)$, $(\ref{bipo1})$ holds true by the definition of $\mathcal{D}(y,r)$, $(y,r)\in\mathcal{L}$. On the other hand, consider a nonnegative random variable such that $(\ref{bipo1})$ holds. It follows that $g$ satisfies $(\ref{needto})$ by Lemma $\ref{lemma10}$.  Lemma $\ref{lemma11}$ then implies that $g$ belongs to $\mathcal{C}(x,q)$.

It is clear that $k\mathcal{D}(y,r)=\mathcal{D}(ky,kr)$ for any $k>0$ and $(y,r)\in\mathcal{L}$. Hence, to verify the assertion $(ii)$, it is enough to consider the case that $(y,r)=(1,p)$ for some $p\in\mathcal{P}$. Due to Lemma $\ref{lemma10}$, there exists a process $Y_t=\mathbb{E}\Big[\frac{d\mathbb{Q}}{d\mathbb{P}}\Big|\mathcal{F}_t\Big]$ with $\mathbb{Q}\in\mathcal{M}(p)$, which satisfies $Y_T\in\mathcal{D}(1,p)$ and $Y_T>0$ a.s. For any $h\in\mathcal{D}(1,p)$, $(\ref{bipo2})$ holds by the definition of $\mathcal{D}(1,p)$. Conversely, consider any nonnegative random variable $h$ satisfying $(\ref{bipo2})$. In particular, we have that $\mathbb{E}^{\mathbb{Q}}[gh]\leq 1$, $\forall g\in\mathcal{C}(1,0)$. Because $\mathcal{C}(1,0)=\mathcal{V}_1^{\text{adm}}$ which is defined in $(\ref{0admV})$, Lemma $A.1$ in \cite{Chris22} asserts the existence of an optional strong supermartingale $(Y^0, Y^1)\in\mathcal{Z}(1)$ such that $h\leq Y_T^0$. Let us define the process $\tilde{Y}$ by
\begin{equation}
\tilde{Y}_t=\left\{
\begin{array}{rl}
Y_t^0,&\ \ \ \ t<T,\\
h,&\ \ \ \ t=T.
\end{array}\right.\nonumber
\end{equation}
It follows that $\tilde{Y}\in\mathcal{Y}(1,p)$. Therefore we obtain that $h\in\mathcal{D}(1,p)$.
\end{proof}

\begin{proof}[Proof of Theorem $\ref{mainthm}$]
Once we build the bipolar results in Proposition $\ref{dualprop}$, Theorem $\ref{mainthm}$ follows the proof of Theorem $1$ and Theorem $2$ of \cite{kram04} if we replace the one-dimensional duality theory in \cite{kram99} by Theorem $3.2$ in \cite{Chris22} under proportional transaction costs.
\end{proof}

\subsection{Proof of Theorem $\ref{theorem1}$}
\begin{proof}
By Theorem $4$ in Appendix I in \cite{Meyer}, every optional strong supermartingale is indistinguishable from a \ladlag process. Without loss of generality, we can assume all optional strong supermartingales are \ladlag. In particular, we can assume that $\hat{S}=\frac{Y^{1,\ast}}{Y^{0,\ast}}$ is \ladlag. Fix $(x,q)\in\mathcal{K}$, for any $(y,r)\in\partial u(x,q)$, by the self-financing condition and integration by parts, we deduce that
\begin{align*}
Y_t^{0,\ast}(y, r)\phi_t^{0,\ast}(x,q)+Y^{1,\ast}_t(y,r)\phi_t^{1,\ast}(x,q)=&Y^{0,\ast}_t(y,r)(\phi_t^{0,\ast}(x,q)+\phi_t^{1,\ast}(x,q)\hat{S}_t)\\
=&Y^{0,\ast}_t(y,r)(x+(\phi^{1,\ast}\cdot \hat{\mathbf{S}})_t+K_t)\nonumber
\end{align*}
where $(K_t)_{0\leq t\leq T}$ is a non-increasing predictable process defined by
\begin{align*}
K_t\triangleq &\int_0^t(\hat{S}_u-S_u)d\phi^{1,\ast, \uparrow, c}_u(x,q)+ \int_0^1((1-\lambda)S_u-\hat{S}_u)d\phi_u^{1,\ast,\downarrow, c}(x,q)\\
&+\sum_{0<u\leq t}(\hat{S}_{u}^p-S_{u-})\triangle \phi_u^{1,\ast, \uparrow}(x,q)+\sum_{0<u\leq t}((1-\lambda)S_{u-}-\hat{S}_u^p)\triangle \phi_u^{1,\ast,\downarrow}(x,q)\\
&+\sum_{0\leq u<t}(\hat{S}_u- S_u)\triangle_+\phi_u^{1,\ast,\uparrow}(x,q)+\sum_{0\leq u<t}(\hat{S}_u-(1-\lambda)S_u)\triangle_+\phi^{1,\ast,\downarrow}_u(x,q)\nonumber
\end{align*}
for $t\in[0,T]$. Therefore, to show that $(\ref{equiphi})$ holds is equivalent to show that $(\ref{behavphi})$ holds.

Under Assumption $\ref{importantnew}$, for some $(y,r)\in\partial u(x,q)$, there exists a minimizing sequence $Z^n(y, r)$ in $\mathcal{B}(1)$ such that $\underset{n\rightarrow\infty}{\lim\inf}\mathbb{E}[Z_T^{0,n}(y, r)\mathcal{E}_T]=\frac{r}{y}$. By defining $\tilde{S}^n\triangleq\frac{Z^{1,n}(y, r)}{Z^{0,n}(y, r)}$, we can see that $\tilde{S}^n$ stays in the bid-ask spread $[(1-\lambda)S, S]$ and $\tilde{S}^n\in\mathcal{S}$ under the transaction costs $\lambda$. Using the integration by parts formula again, we get
\begin{align*}
\phi_t^{0,\ast}(x,q)+\phi_t^{1,\ast}(x,q)\tilde{S}_t^n=&\phi_t^{0,\ast}(x,q)+\int_0^t\phi_u^{1,\ast}(x,q)d\tilde{S}^n_u+\int_0^t\tilde{S}_u^{n}d\phi_u^{1,\ast,c}(x,q)\\
&+\sum_{0<u\leq t}\tilde{S}_{u-}^n\triangle \phi_u^{1,\ast}(x,q)+\sum_{0\leq u<t}\tilde{S}_u^{n}\triangle_+\phi_u^{1,\ast}(x,q),
\end{align*}
so that we can write
\begin{equation}
\phi_t^{0,\ast}(x,q)+\phi_t^{1,\ast}(x,q)\tilde{S}_t^n=x+\int_0^t\phi_u^{1,\ast}(x,q)d\tilde{S}_u^n+K_t^n,\nonumber
\end{equation}
where
\begin{align*}
K_t^n\triangleq &\int_0^t(\tilde{S}_u^n-S_u)d\phi_u^{1,\ast,\uparrow, c}(x,q)+\int_0^t((1-\lambda)S_u-\tilde{S}_u^n)d\phi_u^{1,\ast,\downarrow, c}(x,q)\\
&+\sum_{0<u\leq t}(\tilde{S}_{u-}^{n}-S_{u-})\triangle \phi_u^{1,\ast,\uparrow}(x,q)+\sum_{0<u\leq t}((1-\lambda)S_{u-}-\tilde{S}_{u-}^{n})\triangle \phi_u^{1,\ast,\downarrow}(x,q)\\
&+\sum_{0\leq u< t}(\tilde{S}_{u}^{n}-S_{u})\triangle_+ \phi_u^{1,\ast,\uparrow}(x,q)+\sum_{0\leq u< t}((1-\lambda)S_{u}-\tilde{S}_{u}^{n})\triangle_+ \phi_u^{1,\ast,\downarrow}(x,q).
\end{align*}
is a non-increasing predictable process.

As $\phi^{1,\ast}(x,q)$ is predictable and of finite variation, it is clear from integration by parts that $Z^{0,n}(y, r)(x+\phi^{1,\ast}(x,q)\cdot \tilde{S}^n)$ is a local martingale. For the choice of $\tilde{S}^n\in\mathcal{S}$, by the definition of acceptable portfolio, there exists a maximal element $X^{\max,\tilde{S}^n}$ such that
\begin{equation}
x+\int_0^t\phi_u^{1,\ast}(x,q)d\tilde{S}^n_u+X_t^{\max,\tilde{S}^n}\geq V(\phi^{0,\ast}(x,q), \phi^{1,\ast}(x,q))_t+X_t^{\max, \tilde{S}^n}\geq 0.\nonumber
\end{equation}
Also, denote the measure $\frac{d\mathbb{Q}^n}{d\mathbb{P}}=Z_T^{0,n}(y, r)$, we have $\mathbb{Q}^n\in\mathcal{M}(\tilde{S}^n)$. Consider the subset
\begin{equation}
\mathcal{M}'(\tilde{S}^n)\triangleq \{\mathbb{Q}\in\mathcal{M}(\tilde{S}^n):\ \text{$X^{\max,\tilde{S}^n}$ is a UI martingale under $\mathbb{Q}$}\}.\nonumber
\end{equation}
There exists a sequence $(\mathbb{Q}^{n,m})_{m=1}^{\infty}$ in $\mathcal{M}'(\tilde{S}^n)$ converging to $\mathbb{Q}^n$ in the norm topology of $\mathbb{L}^1(\Omega, \mathcal{F}, \mathbb{P})$. For each $\mathbb{Q}^{n,m}\in\mathcal{M}'(\tilde{S}^n)$, it follows that $(x+\phi^{1,\ast}(x,q)\cdot \tilde{S}^n+X^{\max,\tilde{S}^n})$ is a true supermartingale under $\mathbb{Q}^{n,m}$. Hence, we can derive that
\begin{align*}
&\mathbb{E}^{\mathbb{Q}^{n,m}}\Big[x+\int_0^T\phi_u^{1,\ast}(x,q)d\tilde{S}^n_u+q\mathcal{E}_T\Big]\\
=&\mathbb{E}^{\mathbb{Q}^{n,m}}\Big[x+\int_0^T\phi_u^{1,\ast}(x,q)d\tilde{S}^n_u+X_T^{\max,\tilde{S}^n}\Big] -\mathbb{E}^{\mathbb{Q}^{n,m}}[X_T^{\max,\tilde{S}^n}] +\mathbb{E}^{\mathbb{Q}^{n,m}}[q\mathcal{E}_T]\\
\leq &x+\mathbb{E}^{\mathbb{Q}^{n,m}}[q\mathcal{E}_T].\nonumber
\end{align*}

Following the proof of Lemma $\ref{ineqC}$ and by passing the limit as $m\rightarrow \infty$, we obtain that
\begin{equation}\label{addlate1}
\mathbb{E}\Big[Z^{0,n}_T(y, r) (x+\int_0^T \phi_u^{1,\ast}(x,q)d\tilde{S}_u^n+q\mathcal{E}_T) \Big]\leq x+\mathbb{E}[Z_T^{0,n}(y, r)q\mathcal{E}_T].
\end{equation}
By Fatou's lemma, Lemma $\ref{seplemma}$ and $(\ref{addlate1})$, we obtain that
\begin{align*}
xy+qr=&\mathbb{E}[Y^{0,\ast}_T(y, r)(\phi_T^{0,\ast}(x,q)+q\mathcal{E}_T)]\leq \underset{n\rightarrow\infty}{\lim\inf}\mathbb{E}[yZ_T^{0,n}(y, r)(\phi_T^{0,\ast}(x,q)+q\mathcal{E}_T) ]\\
\leq& \underset{n\rightarrow\infty}{\lim\inf}\mathbb{E}[yZ_T^{0,n}(y, r)K_T^n]+xy+\underset{n\rightarrow\infty}{\lim\inf}\mathbb{E}[yZ_T^{0,n}(y, r)q\mathcal{E}_T]\\
=& \underset{n\rightarrow\infty}{\lim\inf}\mathbb{E}[yZ_T^{0,n}(y, r)K_T^n]+xy+qr.\nonumber
\end{align*}

Therefore, it holds that $Z_T^{0,n}(y, r)K_T^n$ converges to $0$ in $\mathbb{L}^1(\mathbb{P})$ as $K_T^n\leq 0$. We can mimic the proof of Theorem $3.5$ of \cite{Chris22} and show that $K_T^n$ converges to $K_T$ almost surely, and hence $K_T=0$. As $K_0=0$ and $K_t$ is a non-increasing process, $(\ref{behavphi})$ is verified and hence $(\ref{equiphi})$ also holds true.
\end{proof}

\subsection{Proof of Theorem $\ref{theosandsha}$}
\begin{proof}
Under all assumptions of Theorem $\ref{theorem1}$, for some $(y,r)\in\partial u(x,q)$ in Assumption $\ref{importantnew}$, let $Z^n(y, r)$ be the minimizing sequence which satisfies $(\ref{marginalprice})$, $(\ref{con111})$, $(\ref{con222})$. For any $X(\phi^0,\phi^1)_T\in\mathcal{H}(x,q;\hat{\mathbf{S}})$, using Definition $\ref{sandSaccpt}$ of acceptable portfolios under the sandwiched shadow price $\hat{\mathbf{S}}$, we deduce that
\begin{equation}
\phi_T^0+\phi_T^1\tilde{S}_T^n+q\mathcal{E}_T=\phi_T^0+q\mathcal{E}_T\geq V(\phi^0, \phi^1)_T+q\mathcal{E}_T\geq 0.\nonumber
\end{equation}
where $\tilde{S}^n\triangleq \frac{Z^{1,n} (y,r)}{Z^{0,n}(y,r)}\in\mathcal{S}$ under transaction costs $\lambda$. Fatou's lemma implies that
\begin{align}\label{newineq1}
\mathbb{E}\Big[Y_T^{0,\ast}(y, r)(x+\int_0^T\phi_u^{1}d\hat{\mathbf{S}}_u+q\mathcal{E}_T)\Big]&=\mathbb{E}[Y_T^{0,\ast}(y, r)(\phi_T^{0}+\phi_T^{1}\hat{S}_T +q\mathcal{E}_T)]\nonumber\\
&\leq \underset{n\rightarrow\infty}{\lim\inf}\mathbb{E}[yZ_T^{0,n} (y,r)(\phi_T^0+q\mathcal{E}_T)].
\end{align}

Again, Definition $\ref{sandSaccpt}$ gives the existence of some $X^{\max,\tilde{S}^n}\in\mathcal{X}(\tilde{S}^n, a)$ for some constant $a>0$ such that $\phi_T^0+X_T^{\max,\tilde{S}^n}\geq 0$. By the similar proof of Theorem $\ref{theorem1}$ above, we deduce that
\begin{equation}\label{newineq2}
\lim_{n\rightarrow\infty}\mathbb{E}[yZ_T^{0,n} (y,r)(\phi_T^0+q\mathcal{E}_T)]\leq xy+qr=\mathbb{E}[Y^{0,\ast}_T(y, r)(\phi_T^{0,\ast}(x,q)+q\mathcal{E}_T)].
\end{equation}

Fenchel's inequality implies that
\begin{align*}
&\mathbb{E}\Big[U(x+\int_0^T\phi_u^1d\hat{\mathbf{S}}_u+q\mathcal{E}_T)\Big]\\
\leq &\mathbb{E}\Big[\tilde{U}(Y_T^{0,\ast}(y, r))+Y_T^{0,\ast}(y, r)(x+\int_0^T\phi_u^1d\hat{\mathbf{S}}_u+q\mathcal{E}_T)\Big]\\
=&\mathbb{E}[\tilde{U}(Y_T^{0,\ast}(y, r))+Y_T^{0,\ast}(y, r)(\phi_T^0+\phi_T^1\hat{S}_T+q\mathcal{E}_T)]\\
\leq &\mathbb{E}[\tilde{U}(Y_T^{0,\ast}(y,r))+Y_T^{0,\ast}(y,r)(\phi_T^{0,\ast}(x,q)+\phi_T^{1,\ast}(x,q)\hat{S}_T+q\mathcal{E}_T)]
\end{align*}
using $(\ref{newineq1})$ and $(\ref{newineq2})$. Therefore, by $(\ref{ok1})$, we can verify that
\begin{align*}
&\mathbb{E}[U(X(\phi^0,\phi^1)_T+q\mathcal{E}_T)]\\
=&\mathbb{E}\Big[U(x+\int_0^T\phi_u^1d\hat{\mathbf{S}}_u+q\mathcal{E}_T)\Big]\\
\leq &\mathbb{E}[\tilde{U}(Y_T^{0,\ast}(y, r))+Y_T^{0,\ast}(y, r)(\phi_T^{0,\ast}(x,q)+\phi_T^{1,\ast}(x,q)\hat{S}_T+q\mathcal{E}_T)]\\
=&\mathbb{E}[U(\phi_T^{0,\ast}(x,q)+\phi_T^{1,\ast}(x,q)\hat{S}_T+q\mathcal{E}_T)]\\
=&\mathbb{E}[U(V(\phi^{0,\ast}(x,q), \phi^{1,\ast}(x,q))_T+q\mathcal{E}_T)].
\end{align*}
\end{proof}

\subsection{Proof of Theorem $\ref{classical1}$}
\begin{proof}
Fix $(x,q)\in\mathcal{K}$ and let us consider some $(y,r)\in\partial u(x,q)$. Suppose that $(Y^{0,\ast}(y,r), Y^{1,\ast}(y,r))\in\mathcal{B}(y)$ and $Y^{0,\ast}(y,r)\in y\mathcal{M}(\frac{r}{y})$. The process $(Y^{0,\ast,p}(y,r), Y^{1,\ast,p}(y,r))$ coincides with $(Y^{0,\ast}(y,r), Y^{1,\ast}(y,r))$ and $\hat{S}\triangleq \frac{Y^{1,\ast}(y,r)}{Y^{0,\ast}(y,r)}\in\mathcal{S}$ under transaction costs, moreover, $x+\int_0^t\phi_u^{1,\ast}(x,q)d\hat{\mathbf{S}}_u=x+\int_0^t\phi_u^{1,\ast}(x,q)d\hat{S}_u$.

We claim that $Y^{0,\ast}(y,r)\in\mathcal{Y}(y,r;\hat{S})$. The proof of Proposition $3.7$ of \cite{Chris22} already asserts that $Y^{0,\ast}(y,r)\in\mathcal{Y}(y;\hat{S})$ and it is enough to verify that for any $X_T\in\mathcal{H}(x,q;\hat{S})$ and $(x,q)\in\mathcal{K}(\hat{S})$,
\begin{equation}
\mathbb{E}[Y_T^{0,\ast}(y,r)(X_T+q \mathcal{E}_T)]\leq xy+q r\nonumber
\end{equation}
As $X(\phi^0, \phi^1)_T\in\mathcal{V}_x(\hat{S})$ for some $(\phi^0, \phi^1)\in\mathcal{A}_x(\hat{S})$, we obtain that
\begin{equation}
X_T=x+\int_0^T\phi_u^1d\hat{S}_u=X_T'-X_T^{\max},\ \ \text{where}\ X',X^{\max}\in\mathcal{X}(\tilde{S}).\nonumber
\end{equation}

Consider the set $\mathcal{M}'(\hat{S})\triangleq \{\mathbb{Q}\in\mathcal{M}(\hat{S}):\ \text{$X^{\max}$ is a UI martingale under $\mathbb{Q}$}\}$.
We have that $(x+\int_0^t\phi_u^1d\hat{S}_u+X_t^{\max})_{0\leq t\leq T}$ is a nonnegative supermartingale under each $\mathbb{Q}\in\mathcal{M}'(\hat{S})$. Similar to the proof of Lemma $\ref{ineqC}$, we can choose a sequence $\mathbb{Q}^n$ converging to $\mathbb{Q}$ in the norm topology where $\frac{d\mathbb{Q}}{d\mathbb{P}}\triangleq \frac{1}{y}Y_T^{0,\ast}(y,r)$. By passing to the limit as $n\rightarrow\infty$ and under Assumption $\ref{assE}$, it yields that
\begin{equation}
\mathbb{E}\Big[Y_T^{0,\ast}(y,r)(x+\int_0^T\phi_u^1d\hat{S}_u+q\mathcal{E}_T)\Big]\leq xy+\mathbb{E}[Y_T^{0,\ast}(y,r)q\mathcal{E}_T]=xy+qr,\nonumber
\end{equation}
as $Y^{0,\ast}(y,r)\in y\mathcal{M}(\frac{r}{y})$. Therefore, the claim $Y^{0,\ast}(y,r)\in\mathcal{Y}(y,r;\hat{S})$ holds.

Fix $(x,q)\in\mathcal{K}$ and consider $(y,r)\in\partial u(x,q)$. It is easy to see that
\begin{align*}
\mathbb{E}[\tilde{U}(Y_T^{0,\ast}(y,r))]+xy+q r=&v(y,r)+xy+q r=u(x,q)\leq u(x,q;\hat{S})\\
\leq &v(y,r;\hat{S})+xy+q r\leq \mathbb{E}[\tilde{U}(Y_T^{0,\ast}(y,r))]+xy+q r
\end{align*}
because $Y_T^{0,\ast}(y,r)\in\mathcal{Y}(y,r)$. Therefore, we obtain that $u(x,q)=u(x,q;\hat{S})$ together with $(y,r)\in\partial u(x,q;\hat{S})$ and $Y_T^{0,\ast}(y,r)$ is the optimal solution to $v(y,r;\hat{S})$ defined by $(\ref{shadowdualv})$. As a consequence, $(\phi^{0,\ast}(x,q), \phi^{1,\ast}(x,q))$ is the optimal solution to the utility maximization problem $(\ref{frictionless})$ in the market $\hat{S}$ and $\hat{S}$ is a classic shadow price process.
\end{proof}

\begin{acknowledgements}
E. Bayraktar is supported in part by the National Science Foundation under grant DMS-1613170 and the Susan M. Smith Professorship. X. Yu is supported by the Hong Kong Early Career Scheme under grant 25302116 and the Start-Up Fund of the Hong Kong Polytechnic University under grant 1-ZE5A.
\end{acknowledgements}

\bibliographystyle{spmpsci}      
\bibliography{ReferenceEndowment}   

%
%

\end{document}